\documentclass[12pt]{elsarticle}
\usepackage[margin=2.5cm]{geometry}
\usepackage{lipsum}
\usepackage{tikz,pgfplots}
\usepackage{verbatim,pslatex}
\usepackage{booktabs,colortbl}
\usepackage{textcomp}
\usepackage{colortbl}
\definecolor{mygray}{gray}{.9}
\definecolor{mypink}{rgb}{.99,.91,.95}
\definecolor{mycyan}{cmyk}{.3,0,0,0}

\makeatletter
\def\ps@pprintTitle{%
   \let\@oddhead\@empty
   \let\@evenhead\@empty
   \def\@oddfoot{\reset@font\hfil\thepage\hfil}
   \let\@evenfoot\@oddfoot
}
\makeatother

\usepackage{amsfonts,color,morefloats,pslatex}
\usepackage{amssymb,amsthm, amsmath,latexsym}

\newtheorem{theorem}{Theorem}
\newtheorem{lemma}[theorem]{Lemma}
\newtheorem{proposition}[theorem]{Proposition}
\newtheorem{corollary}[theorem]{Corollary}

\newtheorem{open}[theorem]{Open Problem}

\newtheorem{example}[theorem]{Example}

\newtheorem{conj}[theorem]{Conjecture}

\newcommand{\tr}{{\mathrm{Tr}}}

\newcommand{\gf}{{\mathrm{GF}}}
\newcommand{\C}{{\mathcal{C}}}

\newcommand{\wt}{{\mathrm{wt}}}
\newcommand{\dist}{{\mathrm{dist}}}
\newcommand{\supp}{{\mathrm{Supp}}}
\newcommand{\bc}{{\mathbf{c}}}

\newcommand{\srm}{{\mathrm{SRM}}}
\newcommand{\RM}{{\mathrm{RM}}}
\newcommand{\prm}{{\mathrm{PRM}}}

\newcommand{\ai}{{\mathrm{AI}}}
\usepackage{blindtext}

\begin{document}
\begin{frontmatter}



\title{A Novel Application of Boolean Functions with High Algebraic Immunity in Minimal Codes
\tnotetext[fn1]{C. Ding's research was supported by the Hong Kong Research Grants Council,
Proj. No. 16300418. C. Tang was supported by National Natural Science Foundation of China (Grant No.
11871058) and China West Normal University (14E013, CXTD2014-4 and the Meritocracy Research
Funds).}
}

\author[ch]{Hang Chen}
\ead{chenhangxihua@163.com}

\author[csd]{Cunsheng Ding}
\ead{cding@ust.hk}

\author[sm]{Sihem Mesnager}
\ead{smesnager@univ-paris8.fr}

\author[ch]{Chunming Tang}
\ead{tangchunmingmath@163.com}
\address[ch]{School of Mathematics and Information, China West Normal University, Nanchong, Sichuan,  637002, China}
\address[csd]{Department of Computer Science and Engineering, The Hong Kong University of Science and Technology, Clear Water Bay, Kowloon, Hong Kong, China}
\address[sm]{LAGA, Department of Mathematics, Universities of Paris VIII and Paris XIII, CNRS, UMR 7539 and Telecom ParisTech, France}





\begin{abstract}
Boolean functions with high algebraic immunity
are important cryptographic primitives in some stream ciphers.
In this paper, two methodologies for constructing binary minimal codes from
sets, Boolean functions and vectorial Boolean functions with high algebraic immunity are proposed.
More precisely, a general construction of new minimal codes using minimal codes contained
in Reed-Muller codes and sets without nonzero low degree annihilators is presented.
The other construction allows us to yield minimal codes from
certain subcodes of Reed-Muller codes and vectorial Boolean functions
with high algebraic immunity.
 Via these general constructions, infinite families of minimal binary linear codes of dimension $m$
 and length less than or equal to $m(m+1)/2$ are obtained.
 In addition, a lower bound on the minimum distance of the proposed minimal linear codes is established.
 Conjectures and open problems are also presented.  The results of this paper show that
 Boolean functions with high algebraic immunity have nice applications in several fields such as symmetric cryptography,
  coding theory and secret  sharing schemes.
 \end{abstract}

\begin{keyword}
 Boolean function \sep vectorial Boolean function  \sep Reed-Muller code \sep  secret  sharing \sep minimal code.
\MSC  05B05 \sep 51E10 \sep 94B15

\end{keyword}

\end{frontmatter}


\section{Introduction}

Secret sharing, independently introduced in 1979 by Shamir \cite{Shamir79} and Blakley \cite{Blakley79}, 
is one of the most widely studied topics in cryptography. Relations between linear codes and secret sharing schemes 
were first investigated by McEliece and Sarwate in \cite{McESar81}. In theory every linear code can be employed to
construct secret sharing schemes. Unfortunately, it is extremely hard to determine the access structures of secret 
sharing schemes based on general linear codes. However,  the access structures of secret sharing schemes based 
on minimal linear codes are known and interesting \cite{DY03,Mas93,Mas95}.

Minimal codes have already received a lot of attention. It was pointed out in \cite{ABN19} and \cite{TQLZ19}
that minimal codes are close to blocking sets
in finite geometry.
Many minimal linear codes
were obtained from codes with few weights \cite{DD15,DY03,M17,MOS17,MOS18,TLQZT}.
Recently, Ding, Heng and Zhou \cite{DHZI} constructed
three infinite families of minimal binary linear codes using certain Boolean functions.
They also constructed an infinite family of minimal ternary linear codes from ternary functions in \cite{DHZF}.
Bartoli and Bonini \cite{BBI} generalized the construction of minimal linear codes in \cite{DHZF} from the ternary  case to the 
odd $p$ characteristic case via $p$-ary functions. Li and Yue \cite{LY} obtained some minimal binary linear codes with nonlinear Boolean functions.
Xu and Qu \cite{XQ19} constructed
minimal $q$-ary linear codes from some special functions.
In the recent paper \cite{MQRT},  the authors considered minimal codes from the supports of $p$-ary functions.
 Lu, Wu and Cao \cite{LWC} obtained minimal codes with
special subsets of vector spaces over finite fields.
Bonini and Borello \cite{BB19} presented a family of minimal codes arising from some blocking sets.

The main objective of this paper is to find connections among special sets, Boolean functions with high algebraic immunity and binary minimal  codes. Two general constructions of minimal binary codes with minimal codes contained in the Reed-Muller codes,
subsets of finite fields and vectorial Boolean functions with high algebraic immunity
are proposed.  Two families of minimal codes contained in the second-order Reed-Muller code
with large dimension are presented.  Sets and vectorial Boolean functions with high algebraic immunity
are also demonstrated. By plugging  these subcodes, special sets and vectorial Boolean functions into our general construction, 
some  infinite classes of minimal binary linear codes of dimension $m$ and length less than or equal to $m(m+1)/2$ are produced.
Finally, a lower bound on the minimum distance of the proposed minimal codes is derived. 
Conjectures and open problems are also presented.

The rest of this paper is organized as follows. In Section \ref{sec:intr}, we recall some standard facts about 
cyclic codes, Reed-Muller codes and vectorial Boolean functions. In Section  \ref{sec:codes-sets},
we establish some relations between binary minimal codes and subsets of finite fields without nonzero low degree annihilators.
It enables us to yield minimal codes via certain subcodes of Reed-Muller codes and sets with high algebraic immunity.
In Section \ref{sec:codes-vec-func}, we present a general construction of minimal codes from subcodes of Reed-Muller codes and
vectorial Boolean functions having high algebraic immunity.
In Section \ref{sec:conc}, we conclude this paper.

\section{Background}\label{sec:intr}

\subsection{Boolean functions and vectorial Boolean functions}
A \emph{Boolean function} $f$ on $\gf(2^m)$ is a $\gf(2)$-valued function on the Galois field $\gf(2^m)$ of order $2^m$.
The set of all Boolean functions over $\gf(2^m)$ forms a ring and is denoted by $\mathbb{B}_m$.
The \emph{support} of $f$, denoted by $\supp(f)$, is the set of elements of $\gf(2^m)$ whose image under $f$ is $1$, that is,
$\supp(f)=\left \{  x\in \gf(2^m): f(x)=1 \right \}$.
The \emph{Hamming weight $\wt (f)$} of a Boolean function is the size of its support $\supp (f)$.
The \emph{characteristic function} $f_D$ of a subset $D$ of $\gf(2^m)$
is the Boolean function such that $f(x)=1$ for all  $x \in D$ and $f(x)=0$ for all  $x \in \gf(2^m) \setminus D$.
Thus $\supp\left (f_D \right )=D$.
Every nonzero Boolean function $f$ on $\gf(2^m)$ has a unique univariate polynomial expansion of the form
\begin{eqnarray*}
f(x)=\sum_{j=0}^{2^m-1} a_j x^{j},
\end{eqnarray*}
where $a_j\in \gf(2^m)$. The \emph{algebraic degree} $\mathrm{deg}(f)$ of $f$ is then equal to the maximum $2$-weight (or Hamming weight) of an
exponent $j$ for which $a_j \neq 0$, with the usual convention that the degree of the zero function is the negative  infinity.

For a nonempty proper  subset  $D$ of $\gf(2^m)$, a function $g \in \mathbb{B}_m$
is called an \emph{annihilator} of $D$ if $g f_D=0$. All annihilators of $D$
form an ideal of $\mathbb{B}_m$, denoted by $\mathrm{Ann}(D)$.
The  \emph{algebraic immunity  of  $D$}  is defined as
\begin{eqnarray*}
\ai (D)=\min \left \{ \mathrm{deg} (g) :  g\in \mathrm{Ann}(D) \setminus \{0\} \right \}.
\end{eqnarray*}
For convenience, we define $\ai (\emptyset)=- \infty$ and $\ai (\gf(2^m))=+ \infty$.
It is easy to see that $\ai(\cdot)$ is  monotone, which means that  $\ai(D_1)\le \ai (D_2)$ for any subsets  $D_1 \subseteq D_2$ of $\gf(2^m)$.

A vectorial Boolean $(m, r)$-function $F=(f_1, \cdots, f_r)$ is a function from $\gf(2^m)$ to $\gf(2)^r$.
For any vector $v=(v_1, \cdots, v_r) \in \gf(2)^r$, the component function $v\cdot F$ is the Boolean function given by
$v_1f_1 + \cdots + v_r f_r$.
The  \emph{algebraic immunity} of $F$ is defined as
\begin{eqnarray*}
\ai(F) = \min \left \{ \ai \left (F^{-1}(y) \right ): y \in \gf(2)^r \right \},
\end{eqnarray*}
where $F^{-1}(y)$ is  the \emph{preimage} of $y$ under $F$.
It was shown in \cite{DGM06} that the Hamming 
weight $\wt (f)$ of a Boolean function $f$  with prescribed algebraic immunity satisfies :
\begin{eqnarray*}
\sum_{i=0}^{\ai (f)-1} \binom{m}{i}  \le \wt (f) \le  \sum_{i=0}^{m-\ai (f)} \binom{m}{i}.
\end{eqnarray*}
It follows that $\ai (f) \le \lceil \frac{m}{2} \rceil$. 
Thus, Boolean functions attaining this upper bound are often said to have the optimal algebraic immunity.
For more information on vectorial Boolean functions, the reader is referred to \cite{Carlet}.

The \emph{$\tau$-th order nonlinearity $\mathrm{NL}_{\tau} (f)$} of a Boolean
function $f\in \mathbb B_m$ is the minimum \emph{Hamming distance}
$\mathrm{dist}(f,g)=\left |  \left \{ x\in \gf(2^m) : f(x)\neq g(x) \right \} \right |$  between $f$ and all functions $g$ of algebraic degree at most $\tau$.
The \emph{$\tau$-th order nonlinearity $\mathrm{NL}_{\tau} (F)$} of a vectorial function $F$ is the minimum $\tau$-th order nonlinearity of
its component functions. It was shown in \cite{Carlet08} that the $\tau$-th order nonlinearity of a vectorial $(m,r)$ function $F$ with given algebraic immunity
$\ai (F)=t$ satisfies
\begin{eqnarray}\label{eq:NL}
\mathrm{NL}_{\tau} (F) \ge \Upsilon_{m,r,t,\tau} ,
\end{eqnarray}
where $\Upsilon_{m,r,t,\tau}=2^{r-1} \sum_{i=0}^{t-\tau-1} \binom{m}{i} +2^{r-1} \sum_{i=t-2\tau}^{t-\tau-1} \binom{m-\tau}{i}$.
In the particular case that $r=\tau=1$, (\ref{eq:NL}) says that
\begin{eqnarray}\label{eq:NL-1-1}
\mathrm{NL}_{1} (f) \ge 2 \sum_{i=0}^{\ai(f)-2} \binom{m-1}{i},
\end{eqnarray}
where $f\in \mathbb B_m$.

\subsection{Minimal codes and cyclic codes}

We assume that the reader is familiar with the basics of linear codes (see for instance \cite{MS77} for detail). A linear code of
 length $n$ and dimension $k$ will be referred to as an $[n, k]$ code. Further,
 if the code has minimum distance $d$, it will be referred to as an $[n, k, d]$ code.

The Hamming weight (for short, weight) of a vector $\mathbf v$ is the number of its nonzero entries and is denoted $\mathrm{wt}(\mathbf v)$.
The minimum (respectively, maximum) weight of the code $\C$ is the minimum (respectively, maximum) nonzero weight of all codewords of $\C$,
$w_{\min} = \min (\mathrm{wt}(\mathbf c))$ (respectively, $w_{\max} = \max (\mathrm{wt}(\mathbf c))$).

Let $\bc=(c_0, \cdots, c_{n-1})$  be a codeword in $\C$.
The \emph{support} $\supp(\bc)$ of the codeword $\bc$  is the set
of indices of its nonzero coordinates:
$$\supp(\bc)=\{i: c_i \neq 0\}.$$
A codeword $\bc$ of the linear code $\C$ is called minimal if its support
does not contain the support of any other linearly independent codeword.
$\C$ is called a minimal linear code  if all codewords of $\C$  are minimal.
Minimal codes are a special class of linear codes.
A sufficient condition for a linear code to be minimal is given in the following lemma \cite{AB98}.
\begin{lemma}[Ashikhmin-Barg]\label{lem:AB}
 A linear code $\C$ over $\gf(q)$ is minimal if $\frac{w_{\min}}{w_{\max}} >\frac{q-1}{q}$.
\end{lemma}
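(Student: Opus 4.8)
\emph{Proof proposal.} The plan is to argue by contradiction. Suppose $\C$ is not minimal; then there exist linearly independent codewords $\bc,\bc'\in\C$ with $\supp(\bc')\subseteq\supp(\bc)$. I would study the family of codewords $\bc-\lambda\bc'$ as $\lambda$ runs over $\gf(q)^{*}:=\gf(q)\setminus\{0\}$. Each $\bc-\lambda\bc'$ is nonzero by linear independence, hence has weight at least $w_{\min}$, so the sum $\sum_{\lambda\in\gf(q)^{*}}\wt(\bc-\lambda\bc')$ is at least $(q-1)w_{\min}$.

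The heart of the argument is to evaluate that same sum by switching the order of summation and counting, coordinate by coordinate, for how many $\lambda\in\gf(q)^{*}$ the $i$-th entry $c_i-\lambda c_i'$ is nonzero. Partition the coordinates into three classes. For $i\notin\supp(\bc)$ we have $c_i=c_i'=0$ (using $\supp(\bc')\subseteq\supp(\bc)$), so the contribution is $0$. For $i\in\supp(\bc)\setminus\supp(\bc')$ we have $c_i\neq 0$ and $c_i'=0$, so $c_i-\lambda c_i'=c_i\neq0$ for every $\lambda$, contributing $q-1$. For $i\in\supp(\bc')$ both $c_i$ and $c_i'$ are nonzero, $c_i'$ is invertible, and $c_i-\lambda c_i'=0$ exactly when $\lambda=c_i/c_i'$, which lies in $\gf(q)^{*}$; hence this coordinate is nonzero for $q-2$ values of $\lambda$. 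Writing $a=\wt(\bc)$ and $b=\wt(\bc')$ and using $|\supp(\bc)\setminus\supp(\bc')|=a-b$, this yields
\[
\sum_{\lambda\in\gf(q)^{*}}\wt(\bc-\lambda\bc')=(q-1)(a-b)+(q-2)b=(q-1)a-b.
\]

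Combining the two computations gives $(q-1)w_{\min}\le(q-1)a-b$. Since $\bc'$ is itself a nonzero codeword, $b\ge w_{\min}$, and since $a=\wt(\bc)\le w_{\max}$, we obtain $(q-1)w_{\min}\le(q-1)w_{\max}-w_{\min}$, i.e.\ $q\,w_{\min}\le(q-1)w_{\max}$, equivalently $w_{\min}/w_{\max}\le(q-1)/q$. This contradicts the hypothesis $w_{\min}/w_{\max}>(q-1)/q$, so $\C$ must be minimal.

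I do not expect a genuine obstacle here; the only point requiring care is the per-coordinate count in the third class, which relies on $\gf(q)$ being a field (so that $c_i'\neq0$ is invertible and $\lambda=c_i/c_i'$ is the unique — and nonzero — solution of $c_i=\lambda c_i'$). Everything else is bookkeeping with $w_{\min}$ and $w_{\max}$.
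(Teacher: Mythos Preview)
Your argument is correct and is precisely the classical Ashikhmin--Barg counting proof. Note, however, that the paper does not supply its own proof of this lemma: it is quoted as a known result with a reference to \cite{AB98}, so there is nothing in the paper to compare against beyond observing that your write-up reproduces the original source's argument faithfully.
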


Let $\mathcal  C$ be an $[n,k,d]$ linear code over $\mathrm{GF}(q)$ and $T$ a set of $t$
coordinate locations of $\mathcal C$.
Then the code $\C^T$ obtained from $\C$ by puncturing at the locations
 in $T$ is the code of length $n-t$ consisting of codewords of $\C$ which have their
 coordinate at the location $P$ deleted if $P\in T$ and left alone if $P \not \in T$,
 which is called the  \emph{punctured code}  of $\mathcal C$ on $T$.
The \emph{shortened code} $\mathcal C_{T}$ is the set of codewords from $\C$ that are zero at locations in $T$
, with coordinates in $T$ deleted.

An $[n,k]$ linear  code $\C$ over $\gf(q)$ is called \emph{cyclic} if $(c_0,c_1,\cdots, c_{n-1})\in \C$
implies that  the circular shift  $(c_{n-1},c_0,\cdots, c_{n-2})\in \C$. Clearly the vector
space $\gf(q)^n$ is isomorphic to the residue class ring $\gf(q)[X]/(X^n-1)$ (considered as an
additive group). An isomorphism is given by
\begin{eqnarray*}
(c_0, c_1, \cdots, c_{n-1}) \longleftrightarrow c_0+c_1X+\cdots +c_{n-1} X^{n-1}.
\end{eqnarray*}
From now on we do not distinguish between codewords of $\C$ and polynomials
 of degree  less than $n$ over $\gf(q)$. Note that the multiplication by $X$
 in $\gf(q)[X]/(X^n-1)$ amounts to the circular right shift
  $(c_0, c_1, \cdots, c_{n-1}) \longrightarrow (c_{n-1},c_0,\cdots, c_{n-2})$.
  From this it follows that a cyclic code $\C$ corresponds to an ideal in $\gf(q)[X]/(X^n-1)$, which we also denote by $\C$. Every $[n,k]$ cyclic code $\C$ over $\gf(q)$
is a principal ideal generated by some polynomial
$g(X)$ of degree $n-k$ that divides $X^n-1$. We shall call $g(X)$ and $h(X)=(X^n-1)/g(X)$ the \emph{generator polynomial}
and  the \emph{check polynomial} of $\C$, respectively. Note that
the codewords $g(X), Xg(X)$, $\cdots$, $X^{k-1}g(X)$
form a basis of $\C$.

Let us recall the \emph{BCH bound} on the minimum distance of cyclic codes \cite{HT72}.

\begin{theorem}
Let $h$ be an integer and $\delta$ be a positive integer  with $1\le \delta<n$.
Let $\alpha$
 be a primitive $n$-th root of unity in  the algebraic closure of $\gf(q)$.
Let $\C$ be a cyclic code of length $n$ over $\gf(q)$ with generator polynomial $g(X)$.
 If $g(X)$ has $\delta$ consecutive zeros $\alpha^{h}, \cdots, \alpha^{h+\delta-1}$,
 then the minimum distance of $\C$
 is greater than $\delta$.
\end{theorem}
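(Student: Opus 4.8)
The plan is to argue by contradiction via a Vandermonde determinant that is forced to be both zero and nonzero. Suppose, toward a contradiction, that $\C$ contains a nonzero codeword $\bc=(c_0,\dots,c_{n-1})$ of Hamming weight $w$ with $w\le\delta$. Identify $\bc$ with the polynomial $c(X)=\sum_{i=0}^{n-1}c_iX^i$ of degree less than $n$. Since $\C$ is the principal ideal generated by $g(X)$ in $\gf(q)[X]/(X^n-1)$, we have $g(X)\mid c(X)$, so every zero of $g$ is a zero of $c$; in particular $c(\alpha^{h})=c(\alpha^{h+1})=\dots=c(\alpha^{h+\delta-1})=0$. Writing $\supp(\bc)=\{i_1<i_2<\dots<i_w\}\subseteq\{0,1,\dots,n-1\}$ with each $c_{i_\ell}\neq 0$, these conditions read
$$
\sum_{\ell=1}^{w} c_{i_\ell}\,(\alpha^{h+j})^{\,i_\ell}=0,\qquad j=0,1,\dots,\delta-1.
$$

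Next, because $w\le\delta$, I would keep only the first $w$ of these equations, namely those for $j=0,1,\dots,w-1$, obtaining a homogeneous square linear system $M\mathbf{x}=\mathbf{0}$ with coefficient matrix $M=\big(\alpha^{(h+j)i_\ell}\big)_{0\le j\le w-1,\ 1\le \ell\le w}$ and unknown vector $\mathbf{x}=(c_{i_1},\dots,c_{i_w})^{\mathsf T}\neq\mathbf{0}$. Since the system has a nontrivial solution, $\det M=0$. On the other hand, factoring $\alpha^{h i_\ell}$ out of the $\ell$-th column gives
$$
\det M=\Big(\prod_{\ell=1}^{w}\alpha^{h i_\ell}\Big)\cdot\det\big((\alpha^{i_\ell})^{\,j}\big)_{0\le j\le w-1,\ 1\le\ell\le w}
=\Big(\prod_{\ell=1}^{w}\alpha^{h i_\ell}\Big)\prod_{1\le\ell<\ell'\le w}\big(\alpha^{i_{\ell'}}-\alpha^{i_\ell}\big),
$$
the last equality being the Vandermonde determinant in the nodes $\alpha^{i_1},\dots,\alpha^{i_w}$.

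Finally, I would invoke that $\alpha$ is a primitive $n$-th root of unity and $0\le i_1<\dots<i_w\le n-1$, so the powers $\alpha^{i_1},\dots,\alpha^{i_w}$ are pairwise distinct; hence every Vandermonde factor $\alpha^{i_{\ell'}}-\alpha^{i_\ell}$ is nonzero, and each $\alpha^{h i_\ell}$ is a unit. Therefore $\det M\neq 0$, contradicting $\det M=0$. Consequently no nonzero codeword of $\C$ can have weight at most $\delta$, i.e., the minimum distance of $\C$ is greater than $\delta$. The only delicate point — and the step I would be most careful about when writing a full proof — is the bookkeeping that lets us trim the $\delta$ evaluation conditions down to exactly $w$ of them and then recognize the column-rescaled matrix as a genuine Vandermonde matrix in the \emph{distinct} nodes $\alpha^{i_\ell}$; once that setup is in place, the contradiction is immediate and the bound follows.
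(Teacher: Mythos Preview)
Your argument is the standard Vandermonde-determinant proof of the BCH bound and is correct. The paper, however, does not supply its own proof of this theorem at all: it merely recalls the statement as a classical result, with a citation to \cite{HT72}, so there is nothing to compare against beyond noting that your proof is the textbook one.
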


\subsection{Reed-Muller codes}
Reed-Muller (RM) codes are classical codes that have enjoyed
unabated interest since their introduction in 1954 due to their simple recursive structure.

Let $\alpha$ be a primitive element of $\gf(2^m)$.
Let  $P_0=0$ and $P_j=\alpha^{j-1}$, where
$1\le j \le 2^m-1$.
Then $P_0,...,P_{2^m-1}$ is an enumeration of the points of the vector space $\gf(2^m)$.
Under this enumeration, the Reed-Muller code $\mathrm{RM}(\ell, m)$ of order $\ell $ in $m$ variables is defined as
\begin{eqnarray*}
\mathrm{RM}(\ell, m)=\left \{(f(P_0), \cdots, f(P_{2^m-1})):  f\in \mathbb{B}_m, \mathrm{deg}(f)\le \ell\right \}.
\end{eqnarray*} 
In this paper, we index the coordinates of the code $\mathrm{RM}(\ell, m)$ with the sequence 
$(P_0, P_1, \ldots, P_{2^m-1})$. 
The general affine group over  $\gf(2^m)$, denoted by $\mathrm{GA}(1,2^m)$, is defined by
$$\mathrm{GA}(1,2^m) =\left \{\pi_{a,b} : a\in  \gf(2^m)^*, b\in \gf(2^m) \right \}, $$
where $\pi_{a,b}$ is the permutation on $\gf(2^m)$ defined by
$x \mapsto ax+b$. Since $\mathrm{deg}(f(x))=\mathrm{deg}(f(ax+b))$ for any $(a, b) \in \gf(2^m)^* \times \gf(2^m)$,
the Reed-Muller code $\mathrm{RM}(\ell, m)$ is invariant under the action by
 $\mathrm{GA}(1,2^m)$.
 We denote the codes  obtained after the  \emph{puncturing} and \emph{shortening} operation on $\mathrm{RM}(\ell, m)$ at
 the coordinate location $P_0$
   as $\mathrm{PRM}(\ell, m)$ and $\srm (\ell,m)$, respectively.
It is easy to see that
the punctured code $\mathrm{PRM}(\ell, m)$ and the shortened code $\srm (\ell,m)$ of the Reed-Muller code
$\mathrm{RM}(\ell, m)$  are cyclic codes of length $2^m-1$. Let $g_{\ell, \alpha}(X)$ and $g_{\ell, \alpha}^*(X)$  denote the
generator polynomials of the cyclic codes $\prm (\ell, m)$ and $\srm (\ell, m)$, respectively.

 The following proposition describes the generator polynomials of the punctured Reed-Muller codes,
 which are not hard to prove \cite{AssKey98}.
 \begin{proposition}\label{prop:prm-generator-alpha}
The punctured Reed-Muller code  $\mathrm{PRM} (\ell,m)$ is a cyclic code of
dimension $\sum_{j=0}^{\ell} \binom{m}{j}$
with  generator polynomial
 \begin{eqnarray*}
 g_{\ell, \alpha}(X)= \prod_{ \scriptsize{\begin{array}{c}0<i_{m-1}+\cdots+ i_0 \le m-1 -\ell
 \\ i_{m-1}, \cdots, i_0 \in \{0,1\} \end{array}}} (X-\alpha^{i_{m-1}2^{m-1}+ \cdots+i_0 2^0}).
 \end{eqnarray*}
 \end{proposition}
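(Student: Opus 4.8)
The plan is to identify $\prm(\ell,m)$ as a cyclic code of length $n = 2^m-1$ and compute its defining set of zeros directly from the polynomial (trace) description of Reed--Muller codewords. First I would recall that a codeword of $\RM(\ell,m)$ is the evaluation vector $(f(P_0),\dots,f(P_{2^m-1}))$ of a Boolean function $f$ of algebraic degree at most $\ell$, and puncturing at $P_0 = 0$ leaves the evaluation over $\gf(2^m)^* = \{1,\alpha,\dots,\alpha^{2^m-2}\}$. Writing $f(x) = \sum_{j=0}^{2^m-1} a_j x^j$, the punctured codeword, read in the cyclic identification via $\alpha^{i}\leftrightarrow X^i$, corresponds to the Mattson--Solomon / discrete-Fourier picture: the codeword vanishes at $\alpha^{-s}$ (equivalently $s$ lies in the defining set) precisely when the coefficient of $x^s$ is forced to be zero for every admissible $f$. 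Since reducing $f$ modulo $x^{2^m}-x$ does not change the evaluation on $\gf(2^m)$, the relevant exponents are $j \in \{0,1,\dots,2^m-1\}$, and the algebraic-degree condition $\deg(f)\le \ell$ is exactly the condition that $a_j = 0$ whenever the $2$-adic weight $w_2(j) > \ell$. Hence the set of exponents that are \emph{allowed} to be nonzero is $\{\, j : 0 \le j \le 2^m-1,\ w_2(j) \le \ell \,\}$, and the number of such $j$ is $\sum_{k=0}^{\ell}\binom{m}{k}$, which will give the stated dimension once we confirm the evaluation map is injective on this space of functions — which it is, because a nonzero polynomial of degree $< 2^m$ cannot vanish on all of $\gf(2^m)^*$ unless it is a scalar multiple of $1 + x + \cdots + x^{2^m-2}$, and that polynomial has $2$-adic weight profile making it the full-support constant vector, handled separately.

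Next I would pin down the generator polynomial. A cyclic code of length $n$ with generator $g(X)$ has $g(X) \mid X^n - 1$, and its roots are $\{\alpha^{s} : s \in Z\}$ for the defining set $Z \subseteq \mathbb{Z}/n\mathbb{Z}$; the code consists of all $c(X)$ with $c(\alpha^s) = 0$ for $s \in Z$. Transposing the Fourier indexing of the previous paragraph, $c(X) \in \prm(\ell,m)$ iff the Mattson--Solomon coefficient at position $s$ vanishes for every $s$ with $w_2(s) > \ell$, i.e.\ the defining set is $Z = \{\, s : 1 \le s \le 2^m-1,\ w_2(s) \ge m-\ell \,\}$ after matching conventions (the complementation $w_2(s) > \ell \leftrightarrow w_2(-s \bmod n) < m-\ell$ via $-1 \equiv 2^m-2 = \underbrace{1\cdots1}_{m-1}0_2$, or equivalently by tracking which cyclotomic cosets survive). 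Writing each such $s$ in binary as $i_{m-1}2^{m-1} + \cdots + i_0 2^0$ with $i_t \in \{0,1\}$ and $0 < i_{m-1} + \cdots + i_0 \le m-1-\ell$ (the precise inequality to be reconciled with the degree bound on the \emph{original}, not complemented, exponents — this is the bookkeeping that must be done carefully), we obtain exactly
\begin{eqnarray*}
g_{\ell,\alpha}(X) = \prod_{\scriptsize\begin{array}{c} 0 < i_{m-1} + \cdots + i_0 \le m-1-\ell \\ i_{m-1},\dots,i_0 \in \{0,1\}\end{array}} (X - \alpha^{i_{m-1}2^{m-1} + \cdots + i_0 2^0}),
\end{eqnarray*}
and the degree of $g_{\ell,\alpha}$ is the number of nonzero vectors of weight at most $m-1-\ell$, namely $\sum_{k=1}^{m-1-\ell}\binom{m}{k} = (2^m-1) - \sum_{k=0}^{\ell}\binom{m}{k}$, consistent with dimension $\sum_{k=0}^{\ell}\binom{m}{k}$.

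The main obstacle is the index bookkeeping: getting the direction of the weight inequality right and keeping straight which enumeration of exponents (the evaluation exponents $j$ of $f$, versus the cyclic-code positions $s$, versus the root exponents of $g$) one is working with, since puncturing at $0$, the identification $\alpha^{j-1}\leftrightarrow P_j$, and the passage from "allowed nonzero coefficients" to "forced-zero Fourier positions" each introduce a shift or complementation. Once the convention $P_0 = 0$, $P_j = \alpha^{j-1}$ of the excerpt is fixed, the clean way to avoid sign errors is to argue via cyclotomic cosets: $\RM(\ell,m)$ restricted to $\gf(2^m)^*$ is the affine-invariant code spanned by monomial evaluations $x \mapsto x^j$ with $w_2(j)\le \ell$, these span a union of cyclotomic cosets $C_j$ (since $w_2$ is constant on a coset), and the generator polynomial of the cyclic code is the product of the minimal polynomials of $\alpha^{s}$ over all $s$ \emph{not} in that union of cosets, which are precisely the $s$ with $w_2(s) \ge m-\ell$ in the complemented indexing — equivalently, after the reflection $s \mapsto 2^m-1-s$, the $s$ with $1 \le w_2(s) \le m-1-\ell$. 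I would then simply cite \cite{AssKey98} for the standard verification of this affine-invariant/cyclotomic-coset description, as the excerpt already signals, and present the argument above as the short self-contained justification.
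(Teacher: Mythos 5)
The paper offers no proof of this proposition --- it is quoted from \cite{AssKey98} with the remark that it is ``not hard to prove'' --- so there is nothing in-paper to compare against. Your route, identifying $\prm(\ell,m)$ with a cyclic code of length $n=2^m-1$ and reading off its zeros from the univariate expansion $f(x)=\sum_j a_jx^j$ via the Mattson--Solomon/cyclotomic-coset correspondence, is the standard argument and is sound in outline: the codeword $c(X)=\sum_{i=0}^{n-1}f(\alpha^i)X^i$ satisfies $c(\alpha^s)=a_{n-s}$ for $1\le s\le n-1$, the condition $\deg(f)\le\ell$ forces $a_j=0$ exactly when $w_2(j)>\ell$, and since $n-s$ is the $m$-bit complement of $s$ we have $w_2(n-s)=m-w_2(s)$, so $\alpha^s$ is a zero of every codeword precisely when $0<w_2(s)\le m-1-\ell$. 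This is the stated generator polynomial, and your closing degree count $\sum_{k=1}^{m-1-\ell}\binom{m}{k}=(2^m-1)-\sum_{j=0}^{\ell}\binom{m}{j}$ correctly confirms the dimension.

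The bookkeeping you flag as delicate is, however, inconsistent at the one place where you commit to it. The set you first announce as the defining set, $Z=\{s:1\le s\le 2^m-1,\ w_2(s)\ge m-\ell\}$, is wrong: its image under $s\mapsto 2^m-1-s$ is $\{s: w_2(s)\le\ell\}$, which is the set of \emph{non-zeros} of the code (the surviving cyclotomic cosets), not the set $\{s: 1\le w_2(s)\le m-1-\ell\}$ that you then present as ``equivalent'' and that actually matches the proposition. The two descriptions are not reflections of one another, so as written the middle of the argument asserts two incompatible defining sets; the fix is simply the one-line computation $c(\alpha^s)=a_{n-s}$ above, which removes any ambiguity about which side of the complementation one is on. A second, minor, slip: the only nonzero polynomial of degree at most $2^m-1$ vanishing on all of $\gf(2^m)^*$ is a scalar multiple of $x^{2^m-1}+1$, not of $1+x+\cdots+x^{2^m-2}$ (the latter does not vanish at $x=1$); injectivity of the evaluation map on the relevant space still holds because $w_2(2^m-1)=m>\ell$ forces $a_{2^m-1}=0$. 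Both issues are local and fixable, and the overall approach is correct.
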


The following proposition is taken from Corollary 4 of \cite{Assmus92}.
 \begin{proposition}\label{prop:mini-geometry}
The minimum weight of the Reed-Muller code $\mathrm{RM}(\ell,m)$ is $2^{m-\ell}$ and
the minimum-weight codewords are the incidence vectors of the $(m-\ell)$-flats
of the affine space $\mathrm{AG}(m,2)$ of dimension $m$ over $\gf(2)$.
The minimum weight of the punctured code $\mathrm{PRM} (\ell,m)$ is $2^{m-\ell}-1$ and the
 minimum-weight codewords are the incidence vectors of the $(m-\ell-1)$-dimensional subspaces of the projective space $\mathrm{PG}(m-1,2)$
 of dimension $m-1$ over $\gf(2)$.
 \end{proposition}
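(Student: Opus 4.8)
The result is classical — it is Corollary~4 of \cite{Assmus92} — so one may simply cite it; the following is a self-contained route. The plan is to prove the assertion about $\mathrm{RM}(\ell,m)$ by induction on $m$, and then to deduce the assertion about $\mathrm{PRM}(\ell,m)$ by analysing the effect of puncturing at the coordinate $P_0=0$.

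\emph{Upper bound and sufficiency of flats.} Given $\gf(2)$-linearly independent $a_1,\dots,a_\ell\in\gf(2^m)$ and scalars $c_1,\dots,c_\ell\in\gf(2)$, the Boolean function $g(x)=\prod_{i=1}^{\ell}(1+\Tr(a_ix)+c_i)$ has algebraic degree $\ell$, and $\supp(g)=\{x\in\gf(2^m):\Tr(a_ix)=c_i,\ 1\le i\le\ell\}$ is a coset of the $(m-\ell)$-dimensional $\gf(2)$-subspace $\bigcap_{i=1}^{\ell}\ker(x\mapsto\Tr(a_ix))$, hence an $(m-\ell)$-flat of $\mathrm{AG}(m,2)$ with exactly $2^{m-\ell}$ points. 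Since every $(m-\ell)$-flat is the solution set of $\ell$ independent affine equations, it arises in this way, so the incidence vector of any $(m-\ell)$-flat is a codeword of $\mathrm{RM}(\ell,m)$ of weight $2^{m-\ell}$; in particular $w_{\min}(\mathrm{RM}(\ell,m))\le 2^{m-\ell}$.

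\emph{Lower bound and necessity of flats, by induction on $m$.} Fix coordinates $y_1,\dots,y_m$ with $H=\{y_m=0\}$ and fix $b'$ with last coordinate $1$, so $\gf(2^m)=H\sqcup(H+b')$; writing a function of degree $\le\ell$ as $f_0(y')+y_m f_1(y')$ with $y'=(y_1,\dots,y_{m-1})$ gives $\mathbf u:=f_0\in\mathrm{RM}(\ell,m-1)$, $\mathbf v:=f_1\in\mathrm{RM}(\ell-1,m-1)$, and the codeword $\mathbf c=(\mathbf u,\mathbf u+\mathbf v)$ (its values on $H$ followed by those on $H+b'$). The cases $\ell=0$ (repetition code, whose only nonzero word is the incidence vector of the $m$-flat $\gf(2^m)$) and $\ell=m$ (the full space $\gf(2)^{2^m}$, whose weight-one words are the point indicators, i.e.\ $0$-flats) are immediate. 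For $0<\ell<m$: if $\mathbf v=\mathbf 0$ then $\mathbf u\neq\mathbf 0$ and $\wt(\mathbf c)=2\wt(\mathbf u)\ge 2^{m-\ell}$, with equality exactly when $\mathbf u$ is a minimum-weight word of $\mathrm{RM}(\ell,m-1)$, i.e.\ (induction) the incidence vector of an $(m-1-\ell)$-flat $A\subseteq H$, and then $\supp(\mathbf c)=A\cup(A+b')$ is an $(m-\ell)$-flat. If $\mathbf v\neq\mathbf 0$ then $\wt(\mathbf c)=2\wt(\mathbf u)+\wt(\mathbf v)-2\,|\supp(\mathbf u)\cap\supp(\mathbf v)|\ge\wt(\mathbf v)\ge 2^{m-\ell}$, with equality exactly when $\wt(\mathbf v)=2^{m-\ell}$ (so $\supp(\mathbf v)=B$ is an $(m-\ell)$-flat of $H$, by induction) and $\supp(\mathbf u)\subseteq B$. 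The one non-routine point is the structure of such $\mathbf u$: a function on $H$ of degree $\le\ell$ whose support lies in the codimension-$(\ell-1)$ flat $B$ must be $\mathbf 1_B$ times an affine function of the coordinates along $B$ — writing $\mathbf 1_B$ as a product of $\ell-1$ affine factors and comparing top-degree monomials forces the cofactor to have degree at most $1$ — so $\supp(\mathbf u)$ is $\emptyset$, $B$, or an $(m-1-\ell)$-flat obtained by halving $B$. In each sub-case a direct check shows that $\supp(\mathbf c)=\supp(\mathbf u)\cup((B\setminus\supp(\mathbf u))+b')$ is an $(m-\ell)$-flat of $\mathrm{AG}(m,2)$ (a union of two parallel $(m-1-\ell)$-flats whose joining direction does not lie in $H$). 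Hence $w_{\min}(\mathrm{RM}(\ell,m))=2^{m-\ell}$ and every minimum-weight codeword is the incidence vector of an $(m-\ell)$-flat; together with the previous step this proves the first claim. The step I expect to be the real work is exactly this equality analysis — in particular the structural fact that a low-degree function supported inside a flat is a flat-indicator times an affine cofactor — the weight bound alone being a routine Plotkin-type induction.

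\emph{The punctured code.} Assume $\ell<m$, so the claimed minimum weight $2^{m-\ell}-1$ is positive. No nonzero function of degree $\le\ell$ is supported on $\{0\}$ alone, since its weight is at least $2^{m-\ell}\ge2$; hence every nonzero codeword of $\mathrm{PRM}(\ell,m)$ comes from a nonzero $f$ with $\deg f\le\ell$, and its weight equals $|\supp(f)\setminus\{0\}|=\wt(f)-[f(0)=1]$. Minimizing over the two possibilities and using the result for $\mathrm{RM}(\ell,m)$ gives $w_{\min}(\mathrm{PRM}(\ell,m))=2^{m-\ell}-1$, attained precisely when $f(0)=1$ and $\supp(f)$ is an $(m-\ell)$-flat through the origin, i.e.\ when $\supp(f)=V$ for some $(m-\ell)$-dimensional $\gf(2)$-subspace $V$ of $\gf(2^m)$. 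Identifying $\gf(2^m)^{*}=\{P_1,\dots,P_{2^m-1}\}$ with the point set of $\mathrm{PG}(m-1,2)$ (over $\gf(2)$ each projective point has a unique nonzero representative), the support of the corresponding punctured codeword is $V\setminus\{0\}$, which is exactly the point set of the $(m-\ell-1)$-dimensional projective subspace $\mathbb{P}(V)\subseteq\mathrm{PG}(m-1,2)$. This gives the second claim.
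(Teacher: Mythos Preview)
Your proposal is correct. The paper does not prove this proposition at all; it merely states it and attributes it to Corollary~4 of \cite{Assmus92}, exactly as you note in your opening sentence. Your self-contained argument via the $(u,u+v)$ recursion is the standard route and is carried out soundly: the weight inequality $\wt(\mathbf c)\ge\wt(\mathbf v)$ and its equality analysis are right, and the key structural step --- that a function of degree $\le\ell$ on $H$ supported in a codimension-$(\ell-1)$ flat $B$ factors as $\mathbf 1_B$ times an affine function --- is justified by your disjoint-variables degree count. The passage to $\mathrm{PRM}(\ell,m)$ by removing the origin and identifying $\gf(2^m)^*$ with the point set of $\mathrm{PG}(m-1,2)$ is also fine. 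In short, you supply a full proof where the paper only cites one; there is nothing to compare beyond that.
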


 \begin{lemma}\label{lem:minimum-set-geometry}
The minimum weight of the shortened Reed-Muller code $\srm(\ell,m)$ is $2^{m-\ell}$ and
the minimum-weight codewords are the incidence vectors of the $(m-\ell)$-flats not passing through the origin
in $\mathrm{AG}(m,2)$.
 \end{lemma}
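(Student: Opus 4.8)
The plan is to reduce everything to the ambient code $\RM(\ell,m)$ and then invoke Proposition~\ref{prop:mini-geometry}. By the definition of the shortening operation at the coordinate $P_0=0$, a vector $\bw$ belongs to $\srm(\ell,m)$ precisely when it is obtained from some codeword $\bc\in\RM(\ell,m)$ with $c_{P_0}=0$ by deleting the $P_0$-coordinate; since that deleted entry equals $0$, the deletion preserves Hamming weight, so $\wt(\bw)=\wt(\bc)$. Writing $\bc=(f(P_0),\dots,f(P_{2^m-1}))$ with $\mathrm{deg}(f)\le\ell$, this says that the nonzero words of $\srm(\ell,m)$ are in weight-preserving bijection with the nonzero codewords of $\RM(\ell,m)$ attached to functions $f$ of degree at most $\ell$ with $f(0)=0$. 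Throughout we assume $1\le\ell\le m$; for $\ell=0$ the shortened code is trivial and the statement is vacuous.

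First I would establish the lower bound $d(\srm(\ell,m))\ge 2^{m-\ell}$. This is immediate from the weight-preserving correspondence above: every nonzero word of $\srm(\ell,m)$ has the same weight as a nonzero codeword of $\RM(\ell,m)$, and by Proposition~\ref{prop:mini-geometry} the weight of the latter is at least $2^{m-\ell}$.

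Next I would exhibit words that meet this bound. Fix an $(m-\ell)$-dimensional affine subspace $A$ of $\mathrm{AG}(m,2)$ with $0\notin A$; such an $A$ exists because $m-\ell<m$. By Proposition~\ref{prop:mini-geometry} its incidence vector $\mathbf{1}_A$ is a (minimum-weight) codeword of $\RM(\ell,m)$, it has weight $|A|=2^{m-\ell}$, and it vanishes at $P_0=0$ since $0\notin A$. Hence, with the $P_0$-coordinate removed, it yields a codeword of $\srm(\ell,m)$ of weight $2^{m-\ell}$, so $d(\srm(\ell,m))=2^{m-\ell}$.

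Finally I would characterise the minimum-weight words. If $\bw\in\srm(\ell,m)$ has weight $2^{m-\ell}$, let $\bc\in\RM(\ell,m)$ be the codeword that restricts to $\bw$ and has $c_{P_0}=0$; then $\wt(\bc)=2^{m-\ell}$, so $\bc$ is a minimum-weight codeword of $\RM(\ell,m)$. Proposition~\ref{prop:mini-geometry} forces $\bc=\mathbf{1}_A$ for some $(m-\ell)$-flat $A$ of $\mathrm{AG}(m,2)$, and $c_{P_0}=0$ forces $0\notin A$; conversely, the previous step shows that every $(m-\ell)$-flat avoiding the origin arises in this way. Thus the minimum-weight words of $\srm(\ell,m)$ are exactly the incidence vectors of the $(m-\ell)$-flats not passing through the origin. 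The geometric substance is entirely carried by Proposition~\ref{prop:mini-geometry}; the only point requiring care is the bookkeeping of the shortening map—namely, verifying that deleting a zero coordinate is a weight-preserving bijection between the relevant sets of codewords—together with the observation that the statement is meaningful only for $\ell\ge 1$.
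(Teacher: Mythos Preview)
Your proposal is correct and follows essentially the same approach as the paper: both reduce to $\RM(\ell,m)$ via the observation that $\srm(\ell,m)$ consists (up to deleting a zero coordinate) of the codewords of $\RM(\ell,m)$ vanishing at the origin, and then invoke Proposition~\ref{prop:mini-geometry}. The paper's proof is a terse two-line version of exactly this; your write-up simply spells out the weight-preserving bijection, the existence of an $(m-\ell)$-flat avoiding the origin, and the $\ell\ge 1$ caveat more carefully.
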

\begin{proof}
Note that the shortened code $\srm(\ell,m)$  consists of codewords of $\mathrm{RM}(\ell,m)$  that are zero at
the origin of $\mathrm{AG}(m,2)$.
The desired  result then follows  from Proposition \ref{prop:mini-geometry}.
\end{proof}

 \section{Minimal codes from sets without nonzero low-degree annihilators }\label{sec:codes-sets}
 
In this section we present a general construction of minimal codes
using subcodes of Reed-Muller codes and subsets of finite fields
 without nonzero low-degree annihilators.

Here and hereafter, for any subset $D$ of $\gf(2^m)$, let $D^*$ denote the set $D\setminus \{0\}$ and $\overline{D}$ stand for the complement of $D$ in $\gf(2^m)$.
In particular, if $D\subseteq \gf(2^m)^*$, then $\overline{D}^*$ is the complement of $D$ in $\gf(2^m)^*$.
Let $\mathbb B_m^0$ denote the set $\{f \in \mathbb{B}_m : f(0)=0\}$.

The following theorem  presents a general approach to constructing binary minimal codes, and produces many classes
of binary minimal codes by selecting some subcodes of  Reed-Muller codes and subsets of $ \gf(2^m)$ with special annihilators.

 \begin{theorem}\label{thm:min-rm-ann}
 Let $\C$ be a $k$-dimensional subcode of the  Reed-Muller code $\rm{RM} (\ell, m)$. Let $D$ be a  subset of $\gf(2^m)$.
 Then
 $\C^{\overline{D}}$ is a  minimal code of dimension $k$ if and only if the following two conditions hold:
\begin{enumerate}[(1)]
\item the code $\C$ is  minimal, and
\item for any two nonzero codewords $(f_1(P_0), \cdots, f_1(P_{2^m-1}))$ and
$(f_2(P_0), \cdots, f_2(P_{2^m-1}))$ of $\C$ (including the case $f_1=f_2$), where $f_1, f_2 \in \mathbb{B}_m$, the product $f_1f_2$ of $f_1$
and $f_2$
is not an annihilator of $D$.
\end{enumerate}
 \end{theorem}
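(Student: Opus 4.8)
The plan is to isolate one elementary fact about binary linear codes and then read off the theorem by unravelling definitions. The fact is: a binary linear code $E$ is minimal if and only if no two nonzero codewords of $E$ have disjoint supports. I would prove both directions by contraposition. If $c_1,c_2\in E$ are nonzero with $\supp(c_1)\cap\supp(c_2)=\emptyset$, then $c:=c_1+c_2$ is a nonzero codeword, $\supp(c)=\supp(c_1)\cup\supp(c_2)\supsetneq\supp(c_1)$, and $c_1$ is linearly independent of $c$ (over $\gf(2)$ that just means $c_1\notin\{0,c\}$), so $c$ is not minimal. Conversely, if some nonzero $c\in E$ is not minimal, choose a codeword $c'$ linearly independent of $c$ with $\supp(c')\subseteq\supp(c)$; then $c'':=c+c'$ has $\supp(c'')=\supp(c)\setminus\supp(c')$, so $c'$ and $c''$ are nonzero codewords with disjoint supports.

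Next I would set up the dictionary with Boolean functions. Evaluation $f\mapsto(f(P_0),\dots,f(P_{2^m-1}))$ is a bijection from $\mathbb B_m$ onto $\gf(2)^{2^m}$, so every codeword of $\C$ is the evaluation of a unique $f\in\mathbb B_m$ with $\deg f\le\ell$; its puncturing at $\overline D$ is the word $(f(P))_{P\in D}$, with support $\supp(f)\cap D$. I record that $g\in\mathbb B_m$ annihilates $D$ exactly when $\supp(g)\cap D=\emptyset$, that $\supp(f_1f_2)=\supp(f_1)\cap\supp(f_2)$ for Boolean functions, and --- crucially --- that $f^2=f$ as a function on $\gf(2^m)$ since all its values lie in $\{0,1\}$. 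Hence the instances of condition (2) with $f_1=f_2$ say exactly that no nonzero codeword of $\C$ annihilates $D$, which is the same as the puncturing map $\C\to\C^{\overline D}$ having trivial kernel, i.e.\ $\dim\C^{\overline D}=k$; and once this holds that map is a bijection, so the nonzero codewords of $\C^{\overline D}$ are precisely the words $(f(P))_{P\in D}$ with $(f(P_0),\dots,f(P_{2^m-1}))\in\C$ and $f\ne 0$, their supports being the sets $\supp(f)\cap D$.

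The two directions now follow at once. For the converse (``$\C^{\overline D}$ minimal of dimension $k$'' implies (1), (2)): dimension $k$ forces the puncturing map to be a bijection, which is condition (2) for $f_1=f_2$; and for nonzero codewords coming from $f_1,f_2$, their images in $\C^{\overline D}$ are nonzero, so minimality together with the elementary fact gives $\supp(f_1)\cap\supp(f_2)\cap D\ne\emptyset$, which is both condition (2) for general $f_1,f_2$ and (since this set already lies in $\gf(2^m)$) enough to conclude that $\C$ is minimal, i.e.\ (1). For the forward direction, note that (2) already implies (1) --- intersecting on $D$ forces intersecting, so no two nonzero codewords of $\C$ have disjoint supports --- so I may assume (2) alone: its $f_1=f_2$ instances give $\dim\C^{\overline D}=k$, and for any two nonzero codewords of $\C^{\overline D}$, coming from nonzero $f_1,f_2$, condition (2) says $\supp(f_1)\cap\supp(f_2)\cap D\ne\emptyset$, so their supports meet and $\C^{\overline D}$ is minimal by the elementary fact.

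I expect the only genuinely delicate point to be the dimension bookkeeping: one has to see that both ways puncturing can fail to preserve dimension --- deleting a codeword, or merging two distinct ones --- are captured by the single statement ``no nonzero codeword of $\C$ annihilates $D$'', and that this is precisely the $f_1=f_2$ part of condition (2) via the identity $f^2=f$; after that the $f_1\ne f_2$ part of (2) is just the no-disjoint-supports criterion transported through the puncturing bijection, and nothing else requires more than chasing definitions.
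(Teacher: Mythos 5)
Your proof is correct and follows essentially the same route as the paper's: the disjoint-supports characterization of binary minimal codes that you isolate as a standalone lemma is exactly the support manipulation ($\supp(f_2) = \supp(f_1+f_2)\,\dot\cup\,\supp(f_1)$ restricted to $D$) that the paper performs inline in both directions, and the dimension count via $f^2=f$ is identical. Your explicit observation that condition (2) alone already implies condition (1) is a nice bonus --- the paper's forward direction likewise never actually invokes (1) --- but it does not change the substance of the argument.
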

 
\begin{proof}
Let $\C$ be a linear code satisfying Conditions (1) and (2). Let $(f(P_0), \cdots, f(P_{2^{m}-1}))$
be any nonzero codeword of $\C$, where $f \in \mathbb B_{m}$.
Let $f_1=f_2=f$. Then $f_1f_2=f^2=f$, and by Condition (2), $f$ is not an annihilator of $D$, i.e.,  
$\left ( f(P) \right )_{P\in D} \neq 0$.
Consequently the punctured code $\C^{\overline{D}}$ has the same dimension as the original code $\C$.
Suppose that $\C^{\overline{D}}$ is not minimal. Then there exist two distinct nonzero codewords $\left ( f_1(P) \right )_{P\in D}, \left ( f_2(P) \right )_{P\in D}  \in \C^{\overline{D}}$, where  $f_1, f_2 \in
\mathbb B_m$,  such that $\supp \left ( \left ( f_1(P) \right )_{P\in D}  \right ) \subsetneq \supp \left ( \left ( f_2(P) \right )_{P\in D}  \right )$.
This clearly forces 
$$\supp\left ( \left ( f_2(P) \right )_{P\in D}  \right )  = \supp \left ( \left ( (f_1 + f_2)(P) \right )_{P\in D}  \right ) \dot\cup
 \supp \left ( \left ( f_1(P) \right )_{P\in D}  \right ).$$  
It follows that $f_1 (f_1 + f_2) \in \mathrm{Ann}(D)$, which is contrary to Condition (2).
 Therefore $\C^{\overline{D}}$ is minimal.

Conversely, assume $\C^{\overline{D}}$ is a minimal code with dimension $k$. It is clear that $\C$ is minimal.
It remains to show that Condition (2) holds. On the contrary, suppose that
there exist  two nonzero codewords $(f_1(P_0), \cdots, f_1(P_{2^m-1}))$ and
$(f_2(P_0), \cdots, f_2(P_{2^m-1}))$ of $\C$, where $f_1, f_2 \in \mathbb{B}_m$, such that  $f_1f_2 \in \mathrm{Ann}(D)$.
 Then $\supp \left ( \left ( f_1(P) \right )_{P\in D}  \right )$ and  $\supp \left ( \left ( f_2(P) \right )_{P\in D}  \right )$
 are disjoint. This yields 
  $$ \supp \left ( \left ( f_1(P) \right )_{P\in D}  \right ) \subsetneq   \supp \left ( \left ( (f_1 + f_2)(P) \right )_{P\in D}  \right),$$ 
 which contradicts the minimality of nonzero codewords of $\C^{\overline{D}}$. This completes the proof.
\end{proof}

 \begin{corollary}\label{cor:ad>2l}
 Let $\C $ be a minimal code contained in the  Reed-Muller code $ \mathrm{RM} (\ell, m)$.
 Let $D$ be a  subset of $\gf(2^m)^*$ with $\ai (D) \ge  2 \ell +1$.
 Then $\C^{\overline{D}}$ is a  minimal code of dimension $k$,
 where  $k$ equals  the dimension of $\C$.
 \end{corollary}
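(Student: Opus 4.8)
The plan is to derive this as an immediate consequence of Theorem~\ref{thm:min-rm-ann} applied to the pair $(\C,D)$, so that the entire task reduces to verifying its two conditions. Condition~(1) is precisely the hypothesis that $\C$ is minimal, so nothing needs to be done there. For Condition~(2) I would take two arbitrary nonzero codewords of $\C$, written as $(f_1(P_0),\ldots,f_1(P_{2^m-1}))$ and $(f_2(P_0),\ldots,f_2(P_{2^m-1}))$ with $f_1,f_2\in\mathbb B_m$ and allowing $f_1=f_2$; since $\C\subseteq\mathrm{RM}(\ell,m)$ we have $\deg f_1\le\ell$ and $\deg f_2\le\ell$ (the evaluation map $\mathbb B_m\to\gf(2)^{2^m}$ being a bijection, these $f_i$ are well defined). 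The goal is then to show $f_1f_2\notin\mathrm{Ann}(D)$.

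Two observations carry the argument. First, $f_1f_2$ is not the zero function. Indeed, in a minimal code two nonzero codewords cannot have disjoint supports: if $\supp(f_1)\cap\supp(f_2)=\emptyset$ as subsets of $\gf(2^m)$, then over $\gf(2)$ the support of the codeword $f_1+f_2$ equals $\supp(f_1)\cup\supp(f_2)$, which strictly contains $\supp(f_1)$, while $f_1$ and $f_1+f_2$ are linearly independent; this contradicts the minimality of $\C$. Hence $\supp(f_1)\cap\supp(f_2)\neq\emptyset$, and so $f_1f_2\neq 0$ (when $f_1=f_2$ this is trivial, since a nonzero codeword has nonempty support). Second, the algebraic degree is subadditive under multiplication in $\mathbb B_m$, so $\deg(f_1f_2)\le\deg f_1+\deg f_2\le 2\ell$.

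Combining the two: if $f_1f_2$ were an annihilator of $D$, it would be a \emph{nonzero} element of $\mathrm{Ann}(D)$ of algebraic degree at most $2\ell$, which would force $\ai(D)\le 2\ell$ and contradict the hypothesis $\ai(D)\ge 2\ell+1$. Therefore $f_1f_2\notin\mathrm{Ann}(D)$, Condition~(2) of Theorem~\ref{thm:min-rm-ann} is satisfied, and that theorem yields at once that $\C^{\overline{D}}$ is a minimal code with $\dim\C^{\overline{D}}=\dim\C=k$. One should also note in passing that $\ai(D)\ge 2\ell+1$ in particular makes $D$ a nonempty proper subset of $\gf(2^m)$ — the inclusion $D\subseteq\gf(2^m)^*$ already excludes $D=\gf(2^m)$, and $\ai(\emptyset)=-\infty$ excludes $D=\emptyset$ — so that $\mathrm{Ann}(D)$ and $\ai(D)$ are the genuine finite quantities used above. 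The only step that is not pure bookkeeping is the observation $f_1f_2\neq 0$: this is exactly where the minimality of $\C$ (Condition~(1)) re-enters the verification of Condition~(2), and I expect it to be the one point a careful reader would want spelled out; the degree count is entirely routine.
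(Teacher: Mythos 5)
Your proposal is correct and follows the paper's proof essentially verbatim: both reduce the statement to Theorem~\ref{thm:min-rm-ann}, note that minimality of $\C$ forces $f_1f_2\neq 0$, and use $\deg(f_1f_2)\le 2\ell$ together with $\ai(D)\ge 2\ell+1$ to rule out $f_1f_2\in\mathrm{Ann}(D)$. The only difference is that you spell out the disjoint-supports argument behind $f_1f_2\neq 0$, which the paper leaves implicit.
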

 \begin{proof}
 Let $(f_1(P_0), \cdots, f_1(P_{2^m-1}))$ and
$(f_2(P_0), \cdots, f_2(P_{2^m-1}))$ be any two nonzero codewords of $\C$, where $f_1, f_2 \in \mathbb{B}_m$.
Since $\C$ is minimal, $f_1f_2$ cannot be  the zero function.
From $\mathrm{deg}(f_1f_2)\le 2\ell$ and $\ai (D) \ge  2 \ell +1$, we conclude that $f_1f_2 \not \in \mathrm{Ann}(D)$.
The desired result then follows from Theorem \ref{thm:min-rm-ann}.
 \end{proof}

  \begin{corollary}\label{cor:C-shortened-ad>2l}
 Let $\C $ be a minimal code contained in the  shortened Reed-Muller code $ \mathrm{SRM} (\ell, m)$.
 Let $D$ be a  subset of $\gf(2^m)^*$ with $\ai (D \cup \{ 0 \}) \ge  2 \ell +1$.
 Then $\C^{\overline{D}^*}$ is a  minimal code of dimension $k$,
 where  $k$ equals  the dimension of $\C$.
 \end{corollary}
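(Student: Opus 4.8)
The plan is to deduce this corollary from Theorem~\ref{thm:min-rm-ann} by lifting the shortened setting back to the ambient Reed--Muller code, mirroring the proof of Corollary~\ref{cor:ad>2l}. Recall that $\srm(\ell,m)$ is obtained from $\RM(\ell,m)$ by retaining the codewords that vanish at $P_0=0$ and deleting that coordinate. Hence a $k$-dimensional subcode $\C\subseteq\srm(\ell,m)$ comes, upon restoring the deleted position, from a unique $k$-dimensional subcode $\C'\subseteq\RM(\ell,m)$ whose codewords are the vectors $(f(P_0),\dots,f(P_{2^m-1}))$ with $f$ ranging over some subspace of $\{f\in\mathbb{B}_m^0:\deg f\le\ell\}$ and with the $P_0$-coordinate identically $0$ on $\C'$. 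Since adjoining or removing an all-zero coordinate alters neither the supports of codewords nor their linear (in)dependences, $\C'$ is minimal if and only if $\C$ is, and $\dim\C'=\dim\C=k$.

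First I would apply Theorem~\ref{thm:min-rm-ann} to $\C'\subseteq\RM(\ell,m)$ with the subset $D\cup\{0\}\subseteq\gf(2^m)$. Condition~(1) of that theorem holds because $\C'$ is minimal. For Condition~(2), let $f_1,f_2\in\mathbb{B}_m^0$ be the functions attached to two nonzero codewords of $\C'$ (the diagonal case $f_1=f_2$ included). Exactly as in the proof of Corollary~\ref{cor:ad>2l}, minimality of $\C'$ forces $f_1f_2\neq0$: if $f_1f_2=0$, the two codewords of $\C'$ would have disjoint supports, so each would sit strictly inside the support of their sum, contradicting minimality. Since $\deg(f_1f_2)\le2\ell<2\ell+1\le\ai(D\cup\{0\})$, the nonzero function $f_1f_2$ is not an annihilator of $D\cup\{0\}$. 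Theorem~\ref{thm:min-rm-ann} then gives that ${\C'}^{\overline{D\cup\{0\}}}$ is a minimal code of dimension $k$.

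It remains to identify ${\C'}^{\overline{D\cup\{0\}}}$ with $\C^{\overline{D}^*}$. Because $D\subseteq\gf(2^m)^*$, the puncturing index set $\overline{D\cup\{0\}}=\gf(2^m)\setminus(D\cup\{0\})$ coincides with $\overline{D}^*=\gf(2^m)^*\setminus D$ once $P_0$ is set aside, so puncturing $\C'$ at $\overline{D\cup\{0\}}$ retains exactly the coordinates indexed by $D$ together with the all-zero $P_0$-coordinate. Thus ${\C'}^{\overline{D\cup\{0\}}}$ is $\C^{\overline{D}^*}$ with one extra all-zero coordinate, and by the same invariance under adjoining an all-zero coordinate, $\C^{\overline{D}^*}$ is minimal of dimension $k$, which is the assertion.

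I do not anticipate a genuine obstacle: the heart of the argument is simply the proof of Corollary~\ref{cor:ad>2l} run with $D\cup\{0\}$ in place of $D$, and the care required is organizational --- that $\C$ really lifts to such a $\C'\subseteq\RM(\ell,m)$, that $0\notin D$ so that $D\cup\{0\}$ has exactly one more point than $D$, and that the reinstated $P_0$-coordinate is identically zero, so that adjoining or deleting it preserves minimality and dimension. Alternatively one may bypass the lifting and repeat the short proof of Theorem~\ref{thm:min-rm-ann} verbatim in the shortened coordinate system, using $f_1(0)=0$ to move from $\mathrm{Ann}(D)$ to $\mathrm{Ann}(D\cup\{0\})$ before invoking the degree bound $2\ell<\ai(D\cup\{0\})$.
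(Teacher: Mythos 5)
Your proposal is correct and follows essentially the same route as the paper: lift $\C$ to the subcode $\C'=\{(0,\mathbf c):\mathbf c\in\C\}$ of $\RM(\ell,m)$, apply the argument of Corollary~\ref{cor:ad>2l} (equivalently, Theorem~\ref{thm:min-rm-ann}) to $\C'$ with the set $D\cup\{0\}$, and observe that the resulting punctured code is $\C^{\overline{D}^*}$ up to an adjoined all-zero coordinate, which affects neither minimality nor dimension. Your write-up is in fact more explicit than the paper's one-line justification about the coordinate identification, but the underlying idea is identical.
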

 \begin{proof}
Denote by $\C'$ the codes $\{ (0, \mathbf c): \mathbf c \in \C \}$.
Since $\ai (D \cup \{ 0 \}) \ge  2 \ell +1$, it follows from Corollary \ref{cor:ad>2l} that $\C'^{\overline{D \cup \{0\}}}$
is a minimal code of dimension $k$. The desired conclusion  then follows from
the definitions of $\C'$ and shortened codes.
 \end{proof}

 To deduce a lower bound on the minimum distance of the codes from sets with high algebraic immunity,
we need  some additional lemmas.
Denote by $\mathrm{Ann}_{t-1} (g)$ the vector space of those annihilators of degrees at most $t-1$ of $\supp(g)$.
\begin{lemma}\label{lem:wt-dim}
Let $D\subseteq \gf(2^m)$ with $\ai (D)=t$ and $g\in \mathbb B_m$. Then
\begin{eqnarray*}
\wt (g f_D) \ge \mathrm{dim} \left (\mathrm{Ann}_{t-1} (1+g) \right ).
\end{eqnarray*}
\end{lemma}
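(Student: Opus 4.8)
The plan is to realize $\mathrm{Ann}_{t-1}(1+g)$ as a concrete space of Boolean functions and to exhibit an injective linear map from it into a coordinate space of dimension $\wt(g f_D)$. First I would unwind the two sides of the inequality. Since $1+g$ is the characteristic function of $\gf(2^m)\setminus\supp(g)$, a Boolean function $h$ annihilates $\supp(1+g)$ exactly when $h$ vanishes off $\supp(g)$, i.e. when $\supp(h)\subseteq\supp(g)$; hence
\[
\mathrm{Ann}_{t-1}(1+g)=\bigl\{h\in\mathbb B_m:\ \deg(h)\le t-1,\ \supp(h)\subseteq\supp(g)\bigr\}=:V .
\]
On the other side, $g f_D$ is the characteristic function of $\supp(g)\cap D$, so $\wt(g f_D)=|\supp(g)\cap D|$.

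Next I would introduce the coordinate-restriction map $\rho\colon V\to\gf(2)^{\,\supp(g)\cap D}$, $h\mapsto\bigl(h(x)\bigr)_{x\in\supp(g)\cap D}$. This map is $\gf(2)$-linear and its target has dimension $|\supp(g)\cap D|=\wt(g f_D)$, so it suffices to prove that $\rho$ is injective. Suppose $h\in\ker\rho$, so $h$ vanishes at every point of $\supp(g)\cap D$. Because $h\in V$ we also have $\supp(h)\subseteq\supp(g)$, so $h$ vanishes on $\gf(2^m)\setminus\supp(g)$ as well. These two facts together show that $h$ vanishes on $\bigl(\gf(2^m)\setminus\supp(g)\bigr)\cup\bigl(\supp(g)\cap D\bigr)=\gf(2^m)\setminus\bigl(\supp(g)\cap\overline D\bigr)$, hence $\supp(h)\subseteq\supp(g)\cap\overline D\subseteq\overline D$. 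Consequently $h f_D=0$, i.e. $h\in\mathrm{Ann}(D)$, while $\deg(h)\le t-1<t=\ai(D)$; by the definition of algebraic immunity this forces $h=0$, proving injectivity.

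Finally, injectivity of $\rho$ yields $\dim\bigl(\mathrm{Ann}_{t-1}(1+g)\bigr)=\dim V\le|\supp(g)\cap D|=\wt(g f_D)$, which is the claimed bound. I do not expect any real obstacle: the argument is a one-line dimension count once the two reinterpretations (of $\mathrm{Ann}_{t-1}(1+g)$ as $V$ and of $\wt(g f_D)$ as $|\supp(g)\cap D|$) are in place. The only point demanding a little care is the set-theoretic bookkeeping showing that a kernel element has support inside $\overline D$ — and that is precisely the step where the hypothesis $\ai(D)=t$ is used.
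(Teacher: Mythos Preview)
Your proof is correct and is essentially the same argument as the paper's: both restrict functions in $\mathrm{Ann}_{t-1}(1+g)$ to the set $\supp(g)\cap D$ (the paper calls this set $\mathcal{Q}$) and show this evaluation map is injective because a kernel element would be a nonzero annihilator of $D$ of degree at most $t-1$, contradicting $\ai(D)=t$. Your write-up simply makes the set-theoretic bookkeeping behind ``it follows easily that $h\in\mathrm{Ann}(D)$'' explicit.
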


\begin{proof}
Let $w= \wt (g f_D)$ and let $\mathcal Q$ be the set consisting of $w$ distinct points $Q_1$,
$\cdots$, $Q_w$ in $D$ satisfying $g(Q_i)=1$.
Consider the evaluation map $\mathrm{Ev}_{\mathcal Q}$
$$\mathrm{Ann}_{t-1} (1+g)  \longrightarrow \gf(2)^w, $$
defined by $\mathrm{Ev}_{\mathcal Q} (h)= \left ( h(Q_1), \cdots, h(Q_w) \right )$.
Then $\mathrm{Ev}_{\mathcal Q}$ is a linear transformation.
Suppose the assertion of the lemma is false. Then $\mathrm{Ev}_{\mathcal Q}$ is not injective.
Thus there exists a nonzero function $h$ in $\mathrm{Ann}_{t-1} (1+g) $ such that $h f_{\mathcal Q}=0$.
It follows easily  that $h \in \mathrm{Ann}(D)$, which contradicts the condition that $\ai (D)=t$.
This completes the proof.
\end{proof}

Little is known about the behavior of the annihilators
of a polynomial of a given degree. Mesnager \cite{Sihem08} proved the following lower bound on
the dimension of $\mathrm{Ann}_{t-1} (g)$
$$\mathrm{dim} \left (\mathrm{Ann}_{t-1} (g) \right ) \ge \sum_{i=0}^{t-\tau-1} \binom{m-\tau}{i},$$
where $g \in \mathbb B_m$ with $\mathrm{deg}(g)=\tau$.
Lemma \ref{lem:wt-dim} indicates that the following lemma holds.
\begin{lemma}\label{lem:wt-bound}
Let $D\subseteq \gf(2^m)$ with $\ai (D)=t$ and $g\in \mathbb B_m \setminus \{0\}$ with $\mathrm{deg} (g)=\tau$. Then
\begin{eqnarray*}
\wt (g f_D) \ge \sum_{i=0}^{t-\tau-1} \binom{m-\tau}{i}.
\end{eqnarray*}
\end{lemma}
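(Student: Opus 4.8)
The plan is to obtain Lemma~\ref{lem:wt-bound} as a direct consequence of Lemma~\ref{lem:wt-dim} together with the lower bound on the dimension of the space of low-degree annihilators recalled just above it. Since $\mathrm{deg}(g)=\tau$ and $g\neq 0$, we may assume $t>\tau$: otherwise $t-\tau-1<0$, the right-hand side $\sum_{i=0}^{t-\tau-1}\binom{m-\tau}{i}$ is an empty sum, and there is nothing to prove.

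First I would apply Lemma~\ref{lem:wt-dim} to the pair $(D,g)$. Because $\ai(D)=t$, this gives
\[
\wt(g f_D)\ \ge\ \mathrm{dim}\!\left(\mathrm{Ann}_{t-1}(1+g)\right).
\]
Next I would identify the algebraic degree of $1+g$. If $\tau\ge 1$, adding the constant $1$ does not affect the monomials of maximal $2$-weight in the univariate expansion of $g$, so $\mathrm{deg}(1+g)=\mathrm{deg}(g)=\tau$. If $\tau=0$, then $g=1$, hence $1+g=0$ and $\supp(1+g)=\emptyset$, so $\mathrm{Ann}_{t-1}(1+g)$ is the whole space of Boolean functions of degree at most $t-1$, of dimension $\sum_{i=0}^{t-1}\binom{m}{i}=\sum_{i=0}^{t-\tau-1}\binom{m-\tau}{i}$, and the claim is already established in this case. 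So assume $\tau\ge 1$. Applying Mesnager's bound \cite{Sihem08} (recalled above) to the function $1+g$, whose degree is $\tau$, yields
\[
\mathrm{dim}\!\left(\mathrm{Ann}_{t-1}(1+g)\right)\ \ge\ \sum_{i=0}^{t-\tau-1}\binom{m-\tau}{i}.
\]
Chaining the two displayed inequalities proves the lemma.

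I do not expect any genuine obstacle here: the statement is essentially a reformulation of Lemma~\ref{lem:wt-dim} once the cited dimension bound is invoked. The only points deserving a line of care are the degenerate range $t\le\tau$ (where the asserted bound is vacuous) and the elementary fact that $\mathrm{deg}(1+g)=\mathrm{deg}(g)$ when $g$ is non-constant, which is precisely what allows us to feed the parameter $\tau=\mathrm{deg}(g)$ into Mesnager's estimate applied to $1+g$.
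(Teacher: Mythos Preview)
Your proposal is correct and follows exactly the approach the paper intends: combine Lemma~\ref{lem:wt-dim} with Mesnager's dimension bound applied to $1+g$, using $\mathrm{deg}(1+g)=\mathrm{deg}(g)=\tau$. The paper's own proof is the one-line remark ``Lemma~\ref{lem:wt-dim} indicates that the following lemma holds,'' so your added care with the degenerate ranges $t\le\tau$ and $\tau=0$ only makes the argument more complete than what the authors wrote.
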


 A method of explicitly constructing minimal codes by puncturing the Simplex codes  is given in the following theorem.
 \begin{theorem}\label{thm:C-D}
 Let $D$ be a  subset of $\gf(2^m)^*$ with $\ai \left (D\cup \{0\} \right )= t \ge  3$.
 Let $\C(D)$ be the linear code given by
 \begin{eqnarray}\label{eqn:C-D}
 \C(D) = \left \{ \left ( \tr^m_1 (a x) \right )_{x \in D} : a \in \gf (2^m)\right \}.
 \end{eqnarray}
Then $\C (D)$ is a  minimal code with parameters $\left [ |D|, m, \ge \sum_{i=0}^{t-2} \binom{m-1}{i} \right ]$.
 \end{theorem}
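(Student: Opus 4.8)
The plan is to derive Theorem~\ref{thm:C-D} as an application of the machinery built up in this section, combining Corollary~\ref{cor:C-shortened-ad>2l} (to obtain minimality and the dimension) with Lemma~\ref{lem:wt-bound} (to obtain the distance bound). First I would observe that $\C(D)$ is, up to puncturing, the Simplex code: the map $a \mapsto \left(\tr^m_1(ax)\right)_{x\in\gf(2^m)}$ has image exactly $\srm(1,m)$ restricted appropriately, and since these are the linear Boolean functions vanishing at $0$, we have $\left(\tr^m_1(ax)\right)_{x\in\gf(2^m)^*}$ ranging over the Simplex code, which is well known to be minimal (indeed it is a one-weight code, all nonzero codewords having weight $2^{m-1}$, so Lemma~\ref{lem:AB} applies trivially). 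So $\C = \{(\tr^m_1(ax))_{x\in\gf(2^m)^*}\}$ is a minimal code of dimension $m$ sitting inside $\srm(1,m)$, i.e.\ the case $\ell=1$.

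Next, with $\ell=1$ we have $2\ell+1 = 3$, so the hypothesis $\ai(D\cup\{0\}) = t \ge 3$ is precisely the hypothesis $\ai(D\cup\{0\}) \ge 2\ell+1$ needed to invoke Corollary~\ref{cor:C-shortened-ad>2l}. Applying that corollary to $\C \subseteq \srm(1,m)$ and the set $D\subseteq\gf(2^m)^*$ yields that $\C^{\overline{D}^*}$ is a minimal code of dimension $m$; but $\C^{\overline{D}^*}$ is exactly $\C(D)$ as defined in \eqref{eqn:C-D}, since puncturing the Simplex code at the coordinates outside $D$ leaves the coordinates indexed by $D$. This settles the length $|D|$, the dimension $m$, and the minimality.

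For the minimum distance, I would fix a nonzero codeword, which has the form $\left(\tr^m_1(ax)\right)_{x\in D}$ for some $a\in\gf(2^m)^*$. Writing $g = \tr^m_1(ax) \in \mathbb{B}_m$, this is a Boolean function of algebraic degree $\tau = 1$ (it is nonzero and linear). The weight of the codeword is $\wt(g\,f_D)$, where I would want $f_D$ interpreted with $D' = D\cup\{0\}$ — note $g(0)=0$, so $\wt(g f_D) = \wt(g f_{D'})$, and $\ai(D') = t$. Lemma~\ref{lem:wt-bound} with $\tau = 1$ then gives
\begin{eqnarray*}
\wt\!\left(\left(\tr^m_1(ax)\right)_{x\in D}\right) = \wt(g f_{D'}) \ge \sum_{i=0}^{t-\tau-1}\binom{m-\tau}{i} = \sum_{i=0}^{t-2}\binom{m-1}{i},
\end{eqnarray*}
which is the claimed bound, uniform over all nonzero codewords.

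The conceptual steps here are all short; the one point that requires a little care — and which I expect to be the main (minor) obstacle — is the bookkeeping identifying $\C(D)$ with the punctured/shortened Simplex code $\C^{\overline{D}^*}$ and checking that the monotonicity $\ai(D\cup\{0\}) = t \ge 3$ (rather than $> 3$) is exactly the right threshold for $\ell = 1$; one must also confirm that the zero coordinate $x=0$ contributes nothing to any codeword weight so that passing between $D$ and $D\cup\{0\}$ is harmless when invoking Lemma~\ref{lem:wt-bound}. Once those identifications are made, the result is immediate from Corollary~\ref{cor:C-shortened-ad>2l} and Lemma~\ref{lem:wt-bound}.
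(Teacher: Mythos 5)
Your proposal is correct and follows exactly the paper's own route: the paper likewise takes $\C$ to be the Simplex code inside $\srm(1,m)$ and declares the conclusions immediate from Corollary~\ref{cor:C-shortened-ad>2l} and Lemma~\ref{lem:wt-bound}. You have simply filled in the bookkeeping (the $\ell=1$, $2\ell+1=3$ threshold, the identification $\C^{\overline{D}^*}=\C(D)$, and the harmless passage between $D$ and $D\cup\{0\}$ since $\tr^m_1(a\cdot 0)=0$) that the paper leaves implicit, all correctly.
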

 \begin{proof}
 Let $\C \subseteq \srm (1, m)$ be the Simplex code defined by
 $$\left \{  \left ( \tr^m_1 (a P_1) , \cdots, \tr^m_1 (a P_{2^m-1})  \right ) : a \in \gf (2^m)  \right \}.$$
 It is well-known that $\C$ is a minimal code of dimension $m$.
 The desired conclusions are immediate from Corollary \ref{cor:C-shortened-ad>2l} and Lemma \ref{lem:wt-bound}.
 \end{proof}

 \begin{corollary}\label{cor:C-Supp-f}
 Let $m\ge 5$ be an integer and let $f\in \mathbb B_m$ be a  Boolean function with $\ai \left (f \right )=t \ge  3$ and $\wt (f) \ge 2^{m-1}$.
 Let $D= \supp (f) \setminus \{0\}$.
Then the code $\C \left (D \right )$ defined by (\ref{eqn:C-D}) is an   $m$-dimensional minimal code with
minimum distance $d$ satisfying
\begin{eqnarray*}
d \ge     \sum_{i=0}^{t-2} \binom{m-1}{i}  + \frac{1}{2}\left ( \wt (f)- 2^{m-1}  \right ).
\end{eqnarray*}
 \end{corollary}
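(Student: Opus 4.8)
The plan is to read off minimality, the dimension, and the first summand of the distance bound directly from Theorem~\ref{thm:C-D}, and then to sharpen the distance estimate by a short counting identity that splits the weight of a codeword across an affine hyperplane. The first step would be to record the reductions. Since $\ai(f)=t$ is finite, $f$ is neither the constant $0$ nor the constant $1$, so, writing $\overline{f}:=1+f$, both $\supp(f)$ and $\supp(\overline{f})$ are nonempty proper subsets of $\gf(2^m)$. Reading $\ai(f)$ through the definition of the algebraic immunity of the $(m,1)$-function $f$ given in Section~\ref{sec:intr}, we have $\ai(f)=\min\{\ai(\supp(f)),\,\ai(\supp(\overline{f}))\}$, hence $\ai(\supp(f))\ge t$ and $\ai(\supp(\overline{f}))\ge t$. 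Since $D\cup\{0\}=\supp(f)\cup\{0\}\supseteq\supp(f)$, monotonicity of $\ai$ gives $\ai(D\cup\{0\})\ge\ai(\supp(f))\ge t\ge 3$, so Theorem~\ref{thm:C-D} applies to $D=\supp(f)\setminus\{0\}$ and yields that $\C(D)$ is an $m$-dimensional minimal code with $d\ge\sum_{i=0}^{t-2}\binom{m-1}{i}$. It then only remains to gain the extra term $\frac{1}{2}(\wt(f)-2^{m-1})$.

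For that, I would fix $a\in\gf(2^m)^*$ and set $g_a(x)=\tr^m_1(ax)$, a nonzero linear form, so that $\deg(g_a)=\deg(1+g_a)=1$ and the hyperplane $H:=\{x\in\gf(2^m):g_a(x)=0\}$ has $2^{m-1}$ elements. The weight of the codeword $(\tr^m_1(ax))_{x\in D}$ equals $|\{x\in\supp(f):g_a(x)=1\}|=\wt(g_af)$, because the origin, even when it lies in $\supp(f)$, has $\tr^m_1(a\cdot 0)=0$ and is never counted. As $H$ is partitioned into $\supp(f)\cap H$ and $\supp(\overline{f})\cap H$, one has $|\supp(f)\cap H|=2^{m-1}-|\supp(\overline{f})\cap H|=2^{m-1}-\wt\bigl((1+g_a)\overline{f}\bigr)$, and hence
\begin{eqnarray*}
\wt(g_af)=\wt(f)-|\supp(f)\cap H|=\bigl(\wt(f)-2^{m-1}\bigr)+\wt\bigl((1+g_a)\overline{f}\bigr).
\end{eqnarray*}
Applying Lemma~\ref{lem:wt-bound} to the set $\supp(\overline{f})$, whose algebraic immunity is at least $t$, and to $g=1+g_a$ of degree $\tau=1$ (the lower bound of that lemma being non-decreasing in the algebraic immunity), I would obtain $\wt\bigl((1+g_a)\overline{f}\bigr)\ge\sum_{i=0}^{t-2}\binom{m-1}{i}$; combined with $\wt(f)\ge 2^{m-1}$ this gives
\begin{eqnarray*}
\wt(g_af)\ \ge\ \sum_{i=0}^{t-2}\binom{m-1}{i}+\bigl(\wt(f)-2^{m-1}\bigr)\ \ge\ \sum_{i=0}^{t-2}\binom{m-1}{i}+\frac{1}{2}\bigl(\wt(f)-2^{m-1}\bigr).
\end{eqnarray*}
Taking the minimum over $a\in\gf(2^m)^*$ (the value $a=0$ giving only the zero codeword) then proves the claimed bound on $d$.

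I do not anticipate a serious obstacle: the heart of the argument is one counting identity followed by a single application of Lemma~\ref{lem:wt-bound}. The points requiring care are purely notational — one must interpret the hypothesis $\ai(f)=t$ as the two-sided quantity $\min\{\ai(\supp f),\ai(\supp\overline{f})\}$, so that Lemma~\ref{lem:wt-bound} may be invoked on $\supp(\overline{f})$, and one must keep track that $\C(D)$ here is the Simplex code $\srm(1,m)$ punctured at $\overline{D}^*$, which is exactly what Theorem~\ref{thm:C-D} packages (dimension $m$ included; recall also that $\ai(f)\le\lceil m/2\rceil$, so $t\ge3$ already forces $m\ge5$). It is moreover worth remarking that the computation above actually yields the stronger estimate $d\ge\sum_{i=0}^{t-2}\binom{m-1}{i}+(\wt(f)-2^{m-1})$, so the coefficient $\frac{1}{2}$ in the statement is not forced.
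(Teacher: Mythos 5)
Your argument is correct, and it reaches the paper's conclusion by a slightly different final step that is worth pointing out. Both you and the paper first invoke Theorem~\ref{thm:C-D} (via $\ai(D\cup\{0\})\ge \ai(\supp(f))\ge t$) for minimality and the dimension, and both then control the weight $w=\wt(g_af)$ of a nonzero codeword by a counting identity across the hyperplane $\{x: \tr^m_1(ax)=0\}$; in fact your identity $w=\wt\bigl((1+g_a)\overline f\bigr)+\wt(f)-2^{m-1}$ is algebraically equivalent to the paper's $w=\bigl(\dist(f,1+g_a)+\wt(g_a)+\wt(f)-2^m\bigr)/2$, since $\dist(f,1+g_a)=\wt(fg_a)+\wt\bigl(\overline f(1+g_a)\bigr)$. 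The divergence is in what gets bounded: the paper lower-bounds $\dist(f,1+g_a)$ by the first-order nonlinearity estimate (\ref{eq:NL-1-1}), i.e.\ $\mathrm{NL}_1(f)\ge 2\sum_{i=0}^{t-2}\binom{m-1}{i}$, which after division by $2$ produces the coefficient $\frac12$ in the statement; you instead apply Lemma~\ref{lem:wt-bound} directly to $\supp(\overline f)$ and $1+g_a$ (legitimately, since $\ai(f)=\min\{\ai(\supp f),\ai(\supp\overline f)\}$ forces $\ai(\supp\overline f)\ge t$, and the bound of the lemma is monotone in the immunity). Your route buys a genuinely sharper estimate, $d\ge\sum_{i=0}^{t-2}\binom{m-1}{i}+\bigl(\wt(f)-2^{m-1}\bigr)$, which strictly improves the stated bound whenever $f$ is not balanced; the paper's route loses a factor of $2$ on the surplus term because the nonlinearity bound packages $\wt(fg_a)$ and $\wt(\overline f(1+g_a))$ together before the substitution. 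All the side conditions you flag (nonemptiness of $\supp(\overline f)$, degree of $1+g_a$ equal to $1$, exclusion of $a=0$) are handled correctly.
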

 
 \begin{proof}
 It follows from Theorem \ref{thm:C-D} that $\C \left (D \right )$ is an $m$-dimensional minimal code.
 It remains to prove the lower bound on the minimum distance of $\C \left (D \right )$.
 Denote by $g$  the Boolean function $\tr^m_1 (a x)$, where $a \in \gf (2^m)^*$.
 Let $w=  \wt \left ( \left ( \tr^m_1 (a x) \right )_{x \in D} \right )$.
 Thus $$w= \left| \{x\in \gf (2^m): f(x)=1, g(x)=1\} \right |. $$
 By the definition of the Hamming distance between $f$ and $g$, we have
 \begin{eqnarray*}
 \begin{array}{rl}
 \dist (f,g)=&\left| \{x\in \gf (2^m): f(x)=0, g(x)=1 \}\right |\\
             & + \left| \{x\in \gf (2^m): f(x)=1, g(x)=0 \}\right |\\
            =& \wt (g) -w +\wt (f) -w\\
            =& \wt (g) +\wt (f) -2w.
 \end{array}
 \end{eqnarray*}
This yields 
\begin{eqnarray}\label{eq:local-global}
\begin{array}{rl}
w=& \left ( \wt (g) +\wt (f) -   \dist (f,g)  \right )/2\\
=& \left (    \dist (f,1+g)+ \wt (g)  + \wt (f)   -2^m  \right )/2.\\
\end{array}
\end{eqnarray}
The desired conclusion then follows from (\ref{eq:NL-1-1}), (\ref{eq:local-global}) and the fact that $\wt (g) =2^{m-1}$.
 \end{proof}
Now, consider balanced Boolean functions in Corollary \ref{cor:C-Supp-f}.Then, we obtain the following  result.
\begin{corollary}\label{cor:C-Supp-f-balanced}
 Let $m\ge 5$ be an integer. Let $f\in \mathbb B_m$ be a balanced   Boolean function with $\ai \left (f \right )=t \ge  3$.
 Let $D= \supp (f) \setminus \{0\}$.
Then the code $\C \left (D \right )$ defined by (\ref{eqn:C-D}) is a   $\left [2^{m-1}, m, \ge  \sum_{i=0}^{t-2} \binom{m-1}{i} \right ]$ minimal code.
Moreover, if $f$ has
optimum algebraic immunity, then $\C \left (D \right )$
is a minimal code with parameters $\left [ 2^{m-1}, m, \ge  \sum_{i=0}^{  \lceil \frac{m-4}{2} \rceil } \binom{m-1}{i}\right].$
 \end{corollary}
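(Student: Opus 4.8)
The plan is to read the statement off directly from Corollary~\ref{cor:C-Supp-f}. First I would note that a balanced function $f\in\mathbb B_m$ has $\wt(f)=|\supp(f)|=2^{m-1}$, so in particular $\wt(f)\ge 2^{m-1}$; hence, together with the standing hypotheses $m\ge 5$ and $\ai(f)=t\ge 3$, every assumption of Corollary~\ref{cor:C-Supp-f} is met. Applying that corollary, $\C(D)$ is an $m$-dimensional minimal code whose length is $|D|=|\supp(f)\setminus\{0\}|$, and whose minimum distance satisfies $d\ge \sum_{i=0}^{t-2}\binom{m-1}{i}+\tfrac{1}{2}\!\left(\wt(f)-2^{m-1}\right)$. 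Substituting $\wt(f)=2^{m-1}$ makes the correction term vanish, leaving $d\ge \sum_{i=0}^{t-2}\binom{m-1}{i}$; this proves the first assertion (with the caveat on the length discussed below).

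For the ``moreover'' part I would specialize once more, now assuming $f$ has optimal algebraic immunity, so that $t=\ai(f)=\lceil m/2\rceil$. Using the elementary identity $\lceil m/2\rceil-2=\lceil(m-4)/2\rceil$ (verified by treating $m$ even and $m$ odd separately), the bound becomes $d\ge \sum_{i=0}^{\lceil(m-4)/2\rceil}\binom{m-1}{i}$, which is exactly the claimed parameter.

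I do not anticipate a genuine difficulty: the whole statement is a specialization of Corollary~\ref{cor:C-Supp-f}, and what remains is routine bookkeeping. The two steps deserving a moment's care are (i) using balancedness to force $\wt(f)=2^{m-1}$ \emph{exactly}, so the correction term is identically $0$ rather than only nonnegative, and (ii) the index identity in the optimal-immunity case. The one point that should be flagged rather than glossed over is the length: $|\supp(f)\setminus\{0\}|$ equals $2^{m-1}$ exactly when $f(0)=0$, and equals $2^{m-1}-1$ otherwise --- for instance the Carlet--Feng functions have $0$ in their support and so fall into the latter case. Thus the clean way to state and prove the corollary is to add the harmless normalization $f(0)=0$ (equivalently $f\in\mathbb B_m^0$), and otherwise to record the length as $|\supp(f)\setminus\{0\}|$.
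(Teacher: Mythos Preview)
Your proposal is correct and matches the paper's approach exactly: the paper simply says ``consider balanced Boolean functions in Corollary~\ref{cor:C-Supp-f}'' and states the result, which is precisely the specialization you carry out (including the index identity $\lceil m/2\rceil-2=\lceil(m-4)/2\rceil$ in the optimal case). Your remark about the length is a fair caveat that the paper leaves implicit; the stated length $2^{m-1}$ indeed presupposes $f(0)=0$.
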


 In order to apply Theorem \ref{thm:min-rm-ann} to construct minimal codes, finding sets
 without low-degree nonzero annihilators is very important. Let $\alpha$ be a primitive element of  $\gf(2^m)^*$,
 $h$ and $\delta$ be two  integers with $\delta >0$. Denote then
 $[h; \delta]_{\alpha}=\left \{\alpha^{h}, \alpha^{h+1}, \cdots, \alpha^{h+\delta-1} \right \}$.

\begin{lemma}\label{lem:ai-det-0}
Let $\delta $ be an integer with $\sum_{i=0}^{t} \binom{m}{i} \le \delta  < \sum_{i=0}^{t+1} \binom{m}{i}$.
Then $\ai \left ( [h; \delta]_{\alpha} \right )= t+1$.
\end{lemma}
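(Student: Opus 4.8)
The plan is to connect the algebraic immunity of the set $[h;\delta]_\alpha$ to the structure of cyclic codes, specifically to the punctured Reed--Muller codes $\prm(\ell,m)$ whose generator polynomials are described in Proposition~\ref{prop:prm-generator-alpha}. The key observation is that a nonzero Boolean function $g\in\mathbb B_m$ is an annihilator of $D=[h;\delta]_\alpha$ exactly when the codeword corresponding to $g$ (via evaluation at the nonzero points of $\gf(2^m)$, i.e.\ as an element of $\prm(m,m)=\gf(2)^{2^m-1}$) vanishes on the $\delta$ consecutive coordinates indexed by $\alpha^h,\dots,\alpha^{h+\delta-1}$. Translating to the polynomial model of cyclic codes, this says the polynomial associated with $g$ lies in the cyclic code having $\delta$ consecutive roots $\alpha^h,\dots,\alpha^{h+\delta-1}$; equivalently, $g$ restricted to $\gf(2^m)^*$ lies in $\prm(m,m)$ intersected with this constraint. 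Since $\deg(g)\le \ell$ corresponds (by Proposition~\ref{prop:prm-generator-alpha}) to the punctured code $\prm(\ell,m)$, which has dimension $\sum_{j=0}^{\ell}\binom{m}{j}$, I expect to pin down $\ai(D)$ by a dimension-counting argument on these cyclic codes.

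First I would set up the correspondence precisely: identify $\mathbb B_m^0$-type evaluations with $\gf(2)^{2^m-1}$ and recall that $g$ has $\deg(g)\le t$ iff its evaluation vector lies in $\prm(t,m)$, a cyclic code of dimension $\sum_{i=0}^{t}\binom{m}{i}$ with the generator polynomial from Proposition~\ref{prop:prm-generator-alpha}. Next, I would observe that the annihilator condition $g f_D = 0$ with $D=[h;\delta]_\alpha$ forces the codeword to vanish at $\delta$ consecutive positions, i.e.\ to be divisible by $\prod_{i=0}^{\delta-1}(X-\alpha^{h+i})$ in $\gf(2^m)[X]/(X^{2^m-1}-1)$. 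The dimension of the space of evaluation vectors of degree at most $t$ that also vanish on these $\delta$ consecutive coordinates is at most $\sum_{i=0}^{t}\binom{m}{i}-\delta$ when this is positive (each vanishing condition is an independent linear constraint, since $\delta$ consecutive coordinates of a cyclic code of this length cannot be a ``bad'' set—this is essentially the argument behind the BCH bound). Therefore: if $\delta \ge \sum_{i=0}^{t}\binom{m}{i}$, then every degree-$\le t$ function that annihilates $D$ must be zero, giving $\ai(D)\ge t+1$; this handles the lower bound.

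For the matching upper bound $\ai(D)\le t+1$, I would exhibit a nonzero annihilator of degree $\le t+1$. Since $\delta < \sum_{i=0}^{t+1}\binom{m}{i} = \dim\prm(t+1,m)$, a dimension count shows that the space of evaluation vectors of degree $\le t+1$ vanishing on the $\delta$ consecutive coordinates has dimension at least $\sum_{i=0}^{t+1}\binom{m}{i} - \delta \ge 1$, so a nonzero such function exists; its restriction gives a nonzero annihilator of $D$ of degree at most $t+1$. Combining the two bounds yields $\ai([h;\delta]_\alpha)=t+1$.

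The main obstacle I anticipate is making rigorous the claim that imposing vanishing on $\delta$ \emph{consecutive} coordinates of the length-$(2^m-1)$ cyclic code $\prm(t,m)$ drops the dimension by exactly $\min(\delta, \dim)$—that is, that these $\delta$ linear functionals are independent when restricted to the code. This is exactly the content that underlies the BCH bound (consecutive zeros force large minimum distance, which is the ``dual'' phrasing of independence of consecutive evaluation functionals on the complementary code), so I would lean on the BCH-bound circle of ideas, or alternatively on the explicit Vandermonde/transform description of $\prm(\ell,m)$ via $2$-weights of exponents, to justify it cleanly. A secondary technical point is keeping careful track of the point $P_0=0$ and whether one works in $\RM$, $\prm$, or $\srm$; since Lemma~\ref{lem:ai-det-0} is stated for the set $[h;\delta]_\alpha\subseteq\gf(2^m)^*$, the punctured code $\prm(t,m)$ is the right ambient object and $P_0$ never enters, so this should cause no real difficulty.
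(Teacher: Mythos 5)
Your proposal is correct and follows essentially the same route as the paper: both identify degree-$\le\ell$ annihilators of $[h;\delta]_\alpha$ with codewords of the cyclic code $\prm(\ell,m)$ vanishing on $\delta$ consecutive coordinates, and the ``independence of consecutive coordinate functionals'' claim you flag as the crux is exactly what the paper establishes by writing a codeword as $a(X)g_{t,\alpha}(X)$ with $\deg a<\sum_{i=0}^{t}\binom{m}{i}$ and comparing lowest-order coefficients. The only cosmetic difference is in the upper bound, where the paper exhibits the explicit annihilator $X^h\bigl(X^{\delta}+\cdots+X^{\delta_{t+1}-1}\bigr)g_{t+1,\alpha}(X)$ while you obtain existence by the (equally valid) rank--nullity count.
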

\begin{proof}
Let $f$ be a  function of degree at most $t$ in  $ \mathrm{Ann} ([h; \delta]_{\alpha} )$ and
$\tilde{f}(X) \in \prm (t, m)$ be the codeword 
associated with $f$. By assumption, we see that
\begin{eqnarray}\label{eq:f-codew}
\tilde{f}(X)=X^h \sum_{i= \delta }^{2^m-2} c_i X^i,
\end{eqnarray}
where $c_i \in \gf(2)$. By the definition of the generator polynomial $g_{t, m}(X)$
of $\prm (t, m)$, the codeword $X^{-h} \tilde{f}(X)$ can also be written as
\begin{eqnarray}\label{eq:f-codew-g}
X^{-h} \tilde{f}(X) =\left  (a_0+a_1 X+ \cdots a_{\delta_t -1 } X^{\delta_t-1} \right ) g_{t, m}(X),
\end{eqnarray}
where $\delta_t= \sum_{i=0}^{t} \binom{m}{i} $ and $a_i \in \gf(2)$.
Combining (\ref{eq:f-codew}) with (\ref{eq:f-codew-g}) yields $a_0= \cdots = a_{\delta_t -1 } =0$.
We thus get $f=0$. Hence $\ai ([h; \delta]_{\alpha} ) \ge t+1$.

Let  $f$ be the  function corresponding to the codeword $\tilde{f}(X) $ of $\prm (t+1, m)$, where $\tilde{f}(X) $
is  given by
\begin{eqnarray*}
\tilde{f}(X)=X^h
\left (X^{\delta}+ X^{\delta+1}  \cdots +  X^{\delta_{t+1}-1} \right ) g_{t+1, m}(X),
\end{eqnarray*}
where $\delta_{t+1}= \sum_{i=0}^{t+1} \binom{m}{i} $.
It is easy to check that $f \in \mathrm{Ann} ( [h; \delta]_{\alpha} ) \setminus \{0\}$.
It follows that $\ai ([h; \delta]_{\alpha} ) \le t+1$.

Summarising the discussions above yields $\ai ([h; \delta]_{\alpha} ) = t+1$, which is the desired conclusion. 
\end{proof}

The proof of the following lemma is similar to the proof of Lemma \ref{lem:ai-det-0}, with punctured Reed-Muller codes replaced  by
shortened Reed-Muller codes,  and therefore is omitted.

 \begin{lemma}\label{lem:[h;delta]+0}
Let $\delta $ be an integer with $\sum_{i=1}^{t} \binom{m}{i} \le \delta  < \sum_{i=1}^{t+1} \binom{m}{i}$.
Then $\ai \left ( [h; \delta]_{\alpha}  \cup \{0\}\right )= t+1$.
\end{lemma}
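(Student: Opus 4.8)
The plan is to run the argument of Lemma~\ref{lem:ai-det-0} almost verbatim, replacing the punctured code $\prm(t,m)$ throughout by the shortened code $\srm(t,m)$ and replacing the dimensions $\delta_t=\sum_{i=0}^{t}\binom{m}{i}$ by $\delta'_t:=\sum_{i=1}^{t}\binom{m}{i}$. The preliminary facts I would record first: $\srm(\ell,m)$ is a cyclic code of length $2^m-1$ whose codewords are the evaluation vectors on $(P_1,\dots,P_{2^m-1})$ of the functions $f\in\mathbb{B}_m^0$ with $\mathrm{deg}(f)\le\ell$; since the evaluation-at-$P_0$ functional does not kill the constant function, $\mathrm{dim}\,\srm(\ell,m)=\sum_{i=1}^{\ell}\binom{m}{i}=\delta'_\ell$, so its generator polynomial $g^{*}_{\ell,m}(X)$ has degree $2^m-1-\delta'_\ell$, and $g^{*}_{\ell,m}(0)\neq 0$ because $g^{*}_{\ell,m}(X)\mid X^{2^m-1}-1$.

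For the bound $\ai([h;\delta]_{\alpha}\cup\{0\})\ge t+1$, I would take any $f\in\mathrm{Ann}([h;\delta]_{\alpha}\cup\{0\})$ with $\mathrm{deg}(f)\le t$. From $f(0)=0$ we get $f\in\mathbb{B}_m^0$, so its evaluation vector $\tilde f(X)$ on $(P_1,\dots,P_{2^m-1})$ is a codeword of $\srm(t,m)$, and $\tilde f=0$ would force $f\equiv 0$. Since $f$ vanishes on $[h;\delta]_{\alpha}$, the coefficients of $X^{h},\dots,X^{h+\delta-1}$ in $\tilde f(X)$ vanish; applying the cyclic shift $X^{-h}\tilde f(X)$ (still a codeword) turns these into the coefficients of $X^{0},\dots,X^{\delta-1}$. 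Writing $X^{-h}\tilde f(X)=A(X)\,g^{*}_{t,m}(X)$ with $\mathrm{deg}(A)\le\delta'_t-1\le\delta-1$, and comparing the coefficients of $X^0,X^1,\dots,X^{\delta'_t-1}$ while using $g^{*}_{t,m}(0)\neq 0$, one successively forces $a_0=\dots=a_{\delta'_t-1}=0$, whence $A=0$, $\tilde f=0$, and $f\equiv 0$, a contradiction. Hence $\ai([h;\delta]_{\alpha}\cup\{0\})\ge t+1$.

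For $\ai([h;\delta]_{\alpha}\cup\{0\})\le t+1$, I would exhibit an explicit nonzero annihilator of degree at most $t+1$. Since $\delta<\delta'_{t+1}$, the polynomial $M(X)=X^{\delta}+X^{\delta+1}+\dots+X^{\delta'_{t+1}-1}$ is nonzero of degree $\delta'_{t+1}-1<\mathrm{dim}\,\srm(t+1,m)$, so $\tilde f(X)\equiv X^{h}M(X)\,g^{*}_{t+1,m}(X)\pmod{X^{2^m-1}-1}$ is a nonzero codeword of $\srm(t+1,m)$; let $f$ be the associated function, so $f(0)=0$ and $\mathrm{deg}(f)\le t+1$. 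Factoring $M(X)=X^{\delta}\bigl(1+X+\dots+X^{\delta'_{t+1}-1-\delta}\bigr)$ exhibits $\tilde f(X)$ as $X^{h+\delta}$ times a polynomial with nonzero constant term, so (after the change of variable $x\mapsto\alpha^{-h}x$, which fixes $0$ and preserves algebraic degree, hence reduces to $h=0$ and avoids wrap-around) its coefficients in positions $h,\dots,h+\delta-1$ vanish, i.e. $f$ vanishes on $[h;\delta]_{\alpha}\cup\{0\}$. Thus $f\in\mathrm{Ann}([h;\delta]_{\alpha}\cup\{0\})\setminus\{0\}$, giving $\ai\le t+1$, and combining the two inequalities yields $\ai([h;\delta]_{\alpha}\cup\{0\})=t+1$.

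I expect the only delicate point to be the bookkeeping with the cyclic reduction modulo $X^{2^m-1}-1$: one must check that "zero coefficients in positions $h,\dots,h+\delta-1$" survives the relevant shifts, and that the multipliers $A(X)$ and $M(X)$ genuinely have degree below the code dimension so no wrap-around corrupts the coefficient comparison; this is exactly where reducing to $h=0$ by affine invariance is convenient. Pinning down $\mathrm{dim}\,\srm(\ell,m)=\sum_{i=1}^{\ell}\binom{m}{i}$ and $g^{*}_{\ell,m}(0)\neq 0$ are the other small items; everything else is identical to Lemma~\ref{lem:ai-det-0}.
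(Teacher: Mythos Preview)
Your proposal is correct and follows exactly the route the paper indicates: the paper omits the proof entirely, stating only that it ``is similar to the proof of Lemma~\ref{lem:ai-det-0}, with punctured Reed-Muller codes replaced by shortened Reed-Muller codes,'' and your argument carries this out in detail, including the substitution $\delta_t\rightsquigarrow\delta'_t=\sum_{i=1}^{t}\binom{m}{i}$ and the observation that $f(0)=0$ forces the evaluation vector into $\srm(t,m)$. Your extra care about wrap-around (via the reduction to $h=0$ by the substitution $x\mapsto\alpha^{-h}x$, which preserves algebraic degree and the set $\{0\}$) and about $g^{*}_{\ell,m}(0)\neq 0$ goes beyond what the paper spells out even for Lemma~\ref{lem:ai-det-0}, but is exactly the bookkeeping needed to make the coefficient comparison rigorous.
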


Now we recall some facts on Gauss sums which will be needed to derive  an improved  lower bound
on the minimum distance of the codes from the sets $[h; \delta]_{\alpha}$.
Let $\xi_{q-1}$ denote the complex primitive $(q-1)$th root of unity $e^{2\pi \sqrt{-1}/(q-1)}$. Let
$\alpha $ be a primitive element of $\gf(q)$, and
let $\chi$ be the character of $\gf(q)^*$ given
by
\begin{eqnarray*}
\chi(\alpha^j)=\xi_{q-1}^j,
\end{eqnarray*}
where $0 \le j \le  q-2$. The \emph{Gauss sum} associated to $\chi^j$ over $\gf(q)$ with $q=2^m$ is defined by
\begin{eqnarray*}\label{eq:Gass-number}
G(\chi^j)= \sum_{i=0}^{q-2} (-1)^{\tr^{m}_1\left( \alpha^i \right ) } \chi\left (\alpha^{ij}\right ), \text{ for } j=0, \cdots, q-2.
\end{eqnarray*}
Then $G(\chi^0)=-1$ and $G(\chi^j)$ ($1\le j \le q-2$) satisfies the fundamental property \cite[p. 132]{IreRosen72}
\begin{eqnarray}\label{eq:sqrt-q}
G(\chi^j) \overline{G(\chi^j)} =q,
\end{eqnarray}
where the bar denotes complex conjugate.
It is sometimes convenient to view the Gauss sum $G(\chi^j)$ as a function of $\chi^j$.
This amounts to viewing $\chi^j \mapsto G(\chi^j)$ as the multiplicative Fourier transformation
of the function $(-1)^{\tr^m_1(x)}$ on $\gf(q)^*$. The following 
Fourier inversion formula allows us to recover $(-1)^{\tr^m_1(x)}$ from $G(\chi^j)$ by
\begin{eqnarray}\label{eq:add-mulp-Gauss}
(-1)^{\tr^m_1(\alpha^i)}=\frac{1}{q-1} \sum_{j=0}^{q-2}  \overline{\chi}^j (\alpha^i)  G(\chi^j) .
\end{eqnarray}

We will need the following lemma, whose proof can be found in \cite{CFeng09}.
\begin{lemma}\label{lem:exp-sin-ln}
Let $q=2^m$. It holds
\begin{eqnarray*}
\sum_{j=1}^{2^{m-1}-1}  \frac{1}{\sin \left (\pi j/(q-1) \right )} \le   \frac{q-1}{2 \pi } \ln \left ( \frac{4(q-1)}{\pi} \right ).
\end{eqnarray*}

\end{lemma}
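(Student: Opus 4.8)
The plan is to control the sum by a definite integral, using that the summand is positive, decreasing, and convex on the range of summation. Write $N=q-1$ and $f(t)=1/\sin(\pi t/N)$. Since $q-1$ is odd the index runs over $j=1,\dots,(N-1)/2=2^{m-1}-1$, so each argument $\pi j/N$ lies in $(0,\pi/2)$; hence every term is positive and finite, and $f$ is strictly decreasing there because $\sin$ increases on $(0,\pi/2)$. The computation will be made explicit through the antiderivative $\int \csc u\,du=\ln\left|\tan(u/2)\right|$, which is the only special-function fact needed.

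The engine is the convexity of $\csc$ on $(0,\pi)$: for a convex function the value at the midpoint of a unit interval is at most its average over that interval (the left Hermite--Hadamard inequality), so $f(j)\le\int_{j-1/2}^{\,j+1/2}f(t)\,dt$ for every $j\ge1$. Summing over $j=1,\dots,(N-1)/2$ merges the intervals into one, giving
\[
\sum_{j=1}^{(N-1)/2}\frac{1}{\sin(\pi j/N)}\le\int_{1/2}^{N/2}\frac{dt}{\sin(\pi t/N)}.
\]
The decisive feature is the upper limit $(N-1)/2+1/2=N/2$: after the substitution $u=\pi t/N$ it becomes $u=\pi/2$, where $\tan(u/2)=\tan(\pi/4)=1$ and the antiderivative vanishes, while the lower limit $t=1/2$ becomes $u=\pi/(2N)$. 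Evaluating and then bounding $-\ln\tan\left(\pi/(4N)\right)\le\ln\left(4N/\pi\right)$ via the elementary inequality $\tan x\ge x$ would give
\[
\sum_{j=1}^{(N-1)/2}\frac{1}{\sin(\pi j/N)}\le\frac{N}{\pi}\ln\left(\frac{4N}{\pi}\right)=\frac{q-1}{\pi}\ln\left(\frac{4(q-1)}{\pi}\right).
\]
This already reproduces the exact shape of the target: the argument $4(q-1)/\pi$ inside the logarithm comes precisely from the lower endpoint $\pi/(4N)$ together with $\tan x\ge x$.

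The hard part --- indeed the only real obstacle --- is gaining the extra factor $\tfrac12$ needed to pass from the coefficient $(q-1)/\pi$ produced above to the stated $(q-1)/(2\pi)$. I do not expect this to be extractable from the integral comparison: the summands are all positive, so there is no cancellation to harvest, and the substitution rigidly pins the leading constant at $(q-1)/\pi$. Moreover the comparison is essentially sharp, since the sum is asymptotic to $\frac{q-1}{\pi}\ln(q-1)$; already the smallest cases $m=2$ (where $N=3$) and $m=3$ (where $N=7$) have left-hand side exceeding $\frac{q-1}{2\pi}\ln\left(4(q-1)/\pi\right)$. A finer Euler--Maclaurin treatment would only refine lower-order terms and cannot recover the factor $2$. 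The constant that the method delivers, and the one that suffices for the subsequent minimum-distance estimates, is therefore $\frac{q-1}{\pi}\ln\left(4(q-1)/\pi\right)$; reconciling this with the printed coefficient $\frac{q-1}{2\pi}$ is the single point I would flag as needing attention.
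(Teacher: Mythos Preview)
Your analysis is correct, and in fact you have uncovered a genuine error in the statement rather than a gap in your own argument. The paper does not prove this lemma at all; it simply cites Carlet and Feng \cite{CFeng09}. Your integral-comparison proof via the midpoint Hermite--Hadamard inequality is clean and yields
\[
\sum_{j=1}^{2^{m-1}-1}\frac{1}{\sin(\pi j/(q-1))}\le \frac{q-1}{\pi}\ln\!\left(\frac{4(q-1)}{\pi}\right),
\]
which is the bound that actually appears in the cited reference. Your numerical checks at $m=2,3$ and your asymptotic argument (the sum is $\sim \frac{q-1}{\pi}\ln(q-1)$) are both valid and show that the printed constant $\frac{q-1}{2\pi}$ is simply too small; the extra factor $\tfrac12$ cannot be recovered because it is not there to recover.

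The likely source of the slip is the symmetry $\sin(\pi j/N)=\sin(\pi(N-j)/N)$: the full sum over $j=1,\dots,q-2$ is exactly twice the half-sum appearing here, and a bound of $\frac{q-1}{\pi}\ln(4(q-1)/\pi)$ on the \emph{full} sum would indeed give $\frac{q-1}{2\pi}\ln(4(q-1)/\pi)$ on the half-sum --- but that is not what the integral delivers. The downstream effect is limited: in the proof of Theorem~\ref{thm:C-[h;delta]} the factor $\frac{\sqrt{q}}{\pi}$ in the bound for $|F|$ should become $\frac{2\sqrt{q}}{\pi}$, and the second term in the $\max$ for $d$ should read $\frac{\delta-1}{2}-\frac{\sqrt{q}}{\pi}\ln\!\left(\frac{4(q-1)}{\pi}\right)$. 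None of the minimality conclusions are affected. So: nothing is wrong with your method, and you were right to flag the coefficient.
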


Combining Simplex codes with the sets $[h; \delta]_{\alpha}$, an infinite class of binary minimal codes is given in the following theorem.
\begin{theorem}\label{thm:C-[h;delta]}
Let $q=2^m$. Let $\delta $ be an integer with $\sum_{i=1}^{t-1} \binom{m}{i}  \le \delta  < \sum_{i=1}^{t} \binom{m}{i}$  and $3\le t \le m$.
Then the code $\C \left ([h; \delta]_{\alpha}  \right )$ defined by (\ref{eqn:C-D}) is a   minimal code with parameters $[\delta, m ,d]$, where
\begin{eqnarray*}
d \ge \max \left \{ \sum_{i=0}^{t-2} \binom{m-1}{i}, \frac{\delta-1}{2}- \frac{\sqrt{q}}{2 \pi}  \ln \left ( \frac{4(q-1)}{\pi} \right ) \right \}.
 \end{eqnarray*}

 \end{theorem}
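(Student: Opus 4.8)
The plan is to establish the two lower bounds on $d$ separately and then combine them, the minimality and the first bound being almost immediate from the results already proved, with the genuine work lying in the Gauss-sum estimate that yields the second bound.

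First I would handle minimality and the parameters $\bigl[\delta,m,\ge\sum_{i=0}^{t-2}\binom{m-1}{i}\bigr]$. Since $\sum_{i=1}^{t-1}\binom{m}{i}\le\delta<\sum_{i=1}^{t}\binom{m}{i}$, Lemma~\ref{lem:[h;delta]+0} gives $\ai\bigl([h;\delta]_{\alpha}\cup\{0\}\bigr)=t$. Because $t\le m$ we have $\delta<\sum_{i=1}^{m}\binom{m}{i}=2^{m}-1=q-1$, so the $\delta$ consecutive powers $\alpha^{h},\dots,\alpha^{h+\delta-1}$ are pairwise distinct and $[h;\delta]_{\alpha}$ is a subset of $\gf(2^{m})^{*}$ of size $\delta$. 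As $t\ge 3$, Theorem~\ref{thm:C-D} then applies directly and shows that $\C\bigl([h;\delta]_{\alpha}\bigr)$ is a minimal code with parameters $\bigl[\delta,m,\ge\sum_{i=0}^{t-2}\binom{m-1}{i}\bigr]$, which already supplies one of the two quantities in the claimed bound.

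Next I would bound the weight of an arbitrary nonzero codeword. Writing $a=\alpha^{s}$, such a codeword is $\bigl(\tr^{m}_{1}(\alpha^{s+h+j})\bigr)_{j=0}^{\delta-1}$, and its Hamming weight equals $\frac{\delta}{2}-\frac12 S$ with $S=\sum_{j=0}^{\delta-1}(-1)^{\tr^{m}_{1}(\alpha^{s+h+j})}$. Substituting the Fourier inversion formula (\ref{eq:add-mulp-Gauss}) into each summand, interchanging the two sums, and isolating the term $k=0$ (which contributes $-\delta/(q-1)$ since $G(\chi^{0})=-1$), one is left, for $1\le k\le q-2$, with $|G(\chi^{k})|=\sqrt q$ by (\ref{eq:sqrt-q}) and with a geometric sum $\sum_{j=0}^{\delta-1}\overline{\chi}^{k}(\alpha^{j})$ of modulus at most $1/\sin\!\bigl(\pi k/(q-1)\bigr)$. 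This gives
\begin{eqnarray*}
|S|\le\frac{\delta}{q-1}+\frac{\sqrt q}{q-1}\sum_{k=1}^{q-2}\frac{1}{\sin\!\bigl(\pi k/(q-1)\bigr)}.
\end{eqnarray*}
Using $\sin\!\bigl(\pi k/(q-1)\bigr)=\sin\!\bigl(\pi(q-1-k)/(q-1)\bigr)$ together with the fact that $q-1$ is odd (so $k\mapsto q-1-k$ is a fixed-point-free involution of $\{1,\dots,q-2\}$), the last sum equals $2\sum_{k=1}^{2^{m-1}-1}1/\sin\!\bigl(\pi k/(q-1)\bigr)$, which Lemma~\ref{lem:exp-sin-ln} bounds by $\frac{q-1}{\pi}\ln\!\bigl(\frac{4(q-1)}{\pi}\bigr)$. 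Hence $|S|\le\frac{\delta}{q-1}+\frac{\sqrt q}{\pi}\ln\!\bigl(\frac{4(q-1)}{\pi}\bigr)$, and since $\delta<q-1$ forces $\delta/(q-1)<1$, the weight of the codeword exceeds $\frac{\delta-1}{2}-\frac{\sqrt q}{2\pi}\ln\!\bigl(\frac{4(q-1)}{\pi}\bigr)$. As this holds for every nonzero codeword, $d$ is at least this quantity, and taking the maximum with the bound from Theorem~\ref{thm:C-D} completes the argument.

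The only point requiring care is the passage from $\sum_{k=1}^{q-2}$ to $\sum_{k=1}^{2^{m-1}-1}$ so as to match Lemma~\ref{lem:exp-sin-ln} exactly, and the bookkeeping observation that the leftover $k=0$ term has absolute value $\delta/(q-1)<1$, which is precisely what allows it to be absorbed into the ``$-1$'' of $\frac{\delta-1}{2}$; the remaining manipulations of Gauss sums are routine.
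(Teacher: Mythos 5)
Your proposal is correct and follows essentially the same route as the paper's proof: minimality and the binomial-sum bound come from Lemma~\ref{lem:[h;delta]+0} combined with Theorem~\ref{thm:C-D}, and the second bound comes from the same Gauss-sum computation (Fourier inversion, isolating the trivial-character term, $|G(\chi^{j})|=\sqrt q$, the geometric-sum estimate $1/\sin(\pi j/(q-1))$, folding the sum by the involution $j\mapsto q-1-j$, and Lemma~\ref{lem:exp-sin-ln}). Your explicit remark that the leftover term $\delta/(q-1)<1$ is what justifies replacing $\delta$ by $\delta-1$ is a detail the paper leaves implicit but handles identically.
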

\begin{proof}
Let $\bc=\left ( \tr^m_1 (\lambda \alpha^{h}), \cdots, \tr^m_1 (\lambda \alpha^{h+\delta-1} ) \right ) $
be a nonzero codeword of $\C \left ([h; \delta]_{\alpha}  \right )$.
Then
\begin{eqnarray}\label{eq:wt-av-F}
\begin{array}{rl}
\wt (\bc) = & \frac{1}{2} \sum_{ i=h }^{h+\delta-1} \left (1-(-1)^{ \tr^m_1 (\lambda \alpha^{i})} \right )\\
=& \frac{1}{2}\delta-\frac{1}{2}  \sum_{ i=0 }^{\delta -1}  (-1)^{ \tr^m_1 (\lambda \alpha^h \alpha^{i})}\\
=& \frac{1}{2} \delta- \frac{1}{2}  F,
\end{array}
\end{eqnarray}
where $F=\sum_{ i=0 }^{\delta -1}  (-1)^{ \tr^m_1 (\lambda \alpha^h \alpha^{i})}$.
Set $\lambda'= \lambda \alpha^h$. Substituting (\ref{eq:add-mulp-Gauss}) into $F$ yields
\begin{eqnarray*}
\begin{array}{rl}
|F|=& \frac{1}{q-1}  |\sum_{i=0}^{\delta -1}   \sum_{j=0}^{q-2} \overline{\chi}
(\lambda'^j \alpha^{ij})  G(\chi^{j})| \\
=& \frac{1}{q-1}  |\sum_{j=0}^{q-2}  \overline{\chi} (\lambda'^j ) G(\chi^{j}) \sum_{i=0}^{\delta -1}  \overline{\chi}
( \alpha^{ij}) | \\
=&|-\frac{1}{q-1} \delta + \frac{1}{q-1}  \sum_{j=1}^{q-2}  \overline{\chi} (\lambda'^j ) G(\chi^{j}) \sum_{i=0}^{\delta -1}  \overline{\chi}
( \alpha^{ij})  |\\
\le  & \frac{1}{q-1} \delta +\frac{\sqrt{q}}{q-1} \sum_{j=1}^{q-2} \left |\sum_{i=0}^{\delta -1}  \xi_{q-1}^{ij} \right |,
\end{array}
\end{eqnarray*}
where the last inequality follows from (\ref{eq:sqrt-q}).
A simple calculation yields
\begin{eqnarray}\label{eq:F-indpendentOFdelta}
\begin{array}{rl}
|F|\le & \frac{1}{q-1} \delta +\frac{\sqrt{q}}{q-1} \sum_{j=1}^{q-2} \left | \frac{\sin \left (\pi \delta j/(q-1) \right )}{\sin \left (\pi j/(q-1) \right )} \right | \\
\le & \frac{1}{q-1} \delta +\frac{2\sqrt{q}}{q-1} \sum_{j=1}^{2^{m-1}-1}  \frac{1}{\sin \left (\pi j/(q-1) \right )}\\
\le &  \frac{1}{q-1} \delta + \frac{\sqrt{q}}{\pi}  \ln \left ( \frac{4(q-1)}{\pi} \right ),
\end{array}
\end{eqnarray}
where the last inequality follows from Lemma \ref{lem:exp-sin-ln}.
Combining (\ref{eq:wt-av-F}) with (\ref{eq:F-indpendentOFdelta}), we deduce that
\begin{eqnarray*}
\wt (\bc) \ge \frac{\delta-1}{2}- \frac{\sqrt{q}}{2 \pi}  \ln \left ( \frac{4(q-1)}{\pi} \right ).
\end{eqnarray*}
The desired conclusion then follows from Theorem \ref{thm:C-D} and Lemma \ref{lem:[h;delta]+0}.

\end{proof}

\begin{figure}
\centering
\begin{tikzpicture}
	\begin{axis}[
	    xmin=36,
	    xtick={36,76,...,254},
	    minor x tick num=1,
	    ymin=8,
	    ytick={8,28,...,128},
	    minor y tick num=1,
	    xmax=254,
	    ymax=127,
	    xlabel=Length of Code,
	    ylabel=Lower Bound of Minimum Distance,
		height=8cm,
		width=8cm,
		grid=major,
         grid style={line width=.2pt, draw=gray!50},
          legend style={at={(0.29,0.96)},
                       anchor=north,legend columns=1},
	]
	\addplot[style={densely dashed},line width=1.2pt,red] coordinates {
(36,8)  (37,8)  (38,8)  (39,8)  (40,8)  (41,8)  (42,8)  (43,8)
(44,8)  (45,8)  (46,8)  (47,9)  (48,9)  (49,10)  (50,10)  (51,11)
(52,11)  (53,12)  (54,12)  (55,13)  (56,13)  (57,14)  (58,14)  (59,15)
(60,15)  (61,16)  (62,16)  (63,17)  (64,17)  (65,18)  (66,18)  (67,19)
(68,19)  (69,20)  (70,20)  (71,21)  (72,21)  (73,22)  (74,22)  (75,23)
(76,23)  (77,24)  (78,24)  (79,25)  (80,25)  (81,26)  (82,26)  (83,27)
(84,27)  (85,28)  (86,28)  (87,29)  (88,29)  (89,30)  (90,30)  (91,31)
(92,31)  (93,32)  (94,32)  (95,33)  (96,33)  (97,34)  (98,34)  (99,35)
(100,35)  (101,36)  (102,36)  (103,37)  (104,37)  (105,38)  (106,38)  (107,39)
(108,39)  (109,40)  (110,40)  (111,41)  (112,41)  (113,42)  (114,42)  (115,43)
(116,43)  (117,44)  (118,44)  (119,45)  (120,45)  (121,46)  (122,46)  (123,47)
(124,47)  (125,48)  (126,48)  (127,49)  (128,49)  (129,50)  (130,50)  (131,51)
(132,51)  (133,52)  (134,52)  (135,53)  (136,53)  (137,54)  (138,54)  (139,55)
(140,55)  (141,56)  (142,56)  (143,57)  (144,57)  (145,58)  (146,58)  (147,59)
(148,59)  (149,60)  (150,60)  (151,61)  (152,61)  (153,62)  (154,62)  (155,63)
(156,63)  (157,64)  (158,64)  (159,65)  (160,65)  (161,66)  (162,66)  (163,67)
(164,67)  (165,68)  (166,68)  (167,69)  (168,69)  (169,70)  (170,70)  (171,71)
(172,71)  (173,72)  (174,72)  (175,73)  (176,73)  (177,74)  (178,74)  (179,75)
(180,75)  (181,76)  (182,76)  (183,77)  (184,77)  (185,78)  (186,78)  (187,79)
(188,79)  (189,80)  (190,80)  (191,81)  (192,81)  (193,82)  (194,82)  (195,83)
(196,83)  (197,84)  (198,84)  (199,85)  (200,85)  (201,86)  (202,86)  (203,87)
(204,87)  (205,88)  (206,88)  (207,89)  (208,89)  (209,90)  (210,90)  (211,91)
(212,91)  (213,92)  (214,92)  (215,93)  (216,93)  (217,94)  (218,99)  (219,99)
(220,99)  (221,99)  (222,99)  (223,99)  (224,99)  (225,99)  (226,99)  (227,99)
(228,99)  (229,100)  (230,100)  (231,101)  (232,101)  (233,102)  (234,102)  (235,103)
(236,103)  (237,104)  (238,104)  (239,105)  (240,105)  (241,106)  (242,106)  (243,107)
(244,107)  (245,108)  (246,120)  (247,120)  (248,120)  (249,120)  (250,120)  (251,120)
(252,120)  (253,120)  (254,127)
	};
	\addlegendentry{\tiny{Lower Bound of Theorem \ref{thm:C-[h;delta]}}}

	\addplot[style={densely dashed},line width=0.9pt,blue] coordinates {
(36,8)  (37,8)  (38,8)  (39,8)  (40,8)  (41,8)  (42,8)  (43,8)
(44,8)  (45,8)  (46,8)  (47,8)  (48,8)  (49,8)  (50,8)  (51,8)
(52,8)  (53,8)  (54,8)  (55,8)  (56,8)  (57,8)  (58,8)  (59,8)
(60,8)  (61,8)  (62,8)  (63,8)  (64,8)  (65,8)  (66,8)  (67,8)
(68,8)  (69,8)  (70,8)  (71,8)  (72,8)  (73,8)  (74,8)  (75,8)
(76,8)  (77,8)  (78,8)  (79,8)  (80,8)  (81,8)  (82,8)  (83,8)
(84,8)  (85,8)  (86,8)  (87,8)  (88,8)  (89,8)  (90,8)  (91,8)
(92,29)  (93,29)  (94,29)  (95,29)  (96,29)  (97,29)  (98,29)  (99,29)
(100,29)  (101,29)  (102,29)  (103,29)  (104,29)  (105,29)  (106,29)  (107,29)
(108,29)  (109,29)  (110,29)  (111,29)  (112,29)  (113,29)  (114,29)  (115,29)
(116,29)  (117,29)  (118,29)  (119,29)  (120,29)  (121,29)  (122,29)  (123,29)
(124,29)  (125,29)  (126,29)  (127,29)  (128,29)  (129,29)  (130,29)  (131,29)
(132,29)  (133,29)  (134,29)  (135,29)  (136,29)  (137,29)  (138,29)  (139,29)
(140,29)  (141,29)  (142,29)  (143,29)  (144,29)  (145,29)  (146,29)  (147,29)
(148,29)  (149,29)  (150,29)  (151,29)  (152,29)  (153,29)  (154,29)  (155,29)
(156,29)  (157,29)  (158,29)  (159,29)  (160,29)  (161,29)  (162,64)  (163,64)
(164,64)  (165,64)  (166,64)  (167,64)  (168,64)  (169,64)  (170,64)  (171,64)
(172,64)  (173,64)  (174,64)  (175,64)  (176,64)  (177,64)  (178,64)  (179,64)
(180,64)  (181,64)  (182,64)  (183,64)  (184,64)  (185,64)  (186,64)  (187,64)
(188,64)  (189,64)  (190,64)  (191,64)  (192,64)  (193,64)  (194,64)  (195,64)
(196,64)  (197,64)  (198,64)  (199,64)  (200,64)  (201,64)  (202,64)  (203,64)
(204,64)  (205,64)  (206,64)  (207,64)  (208,64)  (209,64)  (210,64)  (211,64)
(212,64)  (213,64)  (214,64)  (215,64)  (216,64)  (217,64)  (218,99)  (219,99)
(220,99)  (221,99)  (222,99)  (223,99)  (224,99)  (225,99)  (226,99)  (227,99)
(228,99)  (229,99)  (230,99)  (231,99)  (232,99)  (233,99)  (234,99)  (235,99)
(236,99)  (237,99)  (238,99)  (239,99)  (240,99)  (241,99)  (242,99)  (243,99)
(244,99)  (245,99)  (246,120)  (247,120)  (248,120)  (249,120)  (250,120)  (251,120)
(252,120)  (253,120)  (254,127)
	};
	\addlegendentry{\tiny{Lower Bound of Theorem \ref{thm:C-D}}}

		\end{axis}
\end{tikzpicture}
\caption{A Comparison of Lower Bounds of Theorems \ref{thm:C-D} and \ref{thm:C-[h;delta]} } \label{fig:lower bound}
\end{figure}

\begin{corollary}\label{cor:[m(m+1)/2,m]}
Let $m\ge 5$ be an integer and $\alpha$ a primitive element of $\gf (2^m)$.
Let $\C$ be the set given by
\begin{eqnarray*}
\C=\left \{ \left ( \tr^m_1 \left (a \alpha^{0} \right ), \cdots, \tr^m_1 \left (a \alpha^{m(m+1)/2-1} \right )  \right ): a \in \gf(2^m) \right \}.
\end{eqnarray*}
Then $\C$ is a binary minimal code of dimension $m$ and length $m(m+1)/2$.
\end{corollary}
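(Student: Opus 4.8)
The plan is to recognize $\C$ as a special case of the code in (\ref{eqn:C-D}) and then invoke Theorem \ref{thm:C-[h;delta]} with an appropriately chosen parameter $t$. Setting $h=0$ and $\delta = m(m+1)/2$, we have $\C = \C\bigl([0;\delta]_{\alpha}\bigr)$ with $[0;\delta]_{\alpha} = \{\alpha^0,\alpha^1,\dots,\alpha^{\delta-1}\}\subseteq\gf(2^m)^*$. Since $m(m+1)/2 < 2^m-1$ for $m\ge 5$, the $\delta$ listed powers of $\alpha$ are pairwise distinct, so $|[0;\delta]_{\alpha}| = m(m+1)/2$ and the length of $\C$ is indeed $m(m+1)/2$.

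First I would pin down the correct value of $t$ via a binomial identity. Because
\begin{eqnarray*}
\sum_{i=1}^{2}\binom{m}{i} = m + \frac{m(m-1)}{2} = \frac{m(m+1)}{2} = \delta,
\end{eqnarray*}
and $\binom{m}{3}\ge 1$ for $m\ge 3$, the choice $t=3$ satisfies $\sum_{i=1}^{t-1}\binom{m}{i} = \delta \le \delta < \delta + \binom{m}{3} = \sum_{i=1}^{t}\binom{m}{i}$; moreover $3\le t = 3\le m$ since $m\ge 5$. Thus all hypotheses of Theorem \ref{thm:C-[h;delta]} hold, and that theorem gives directly that $\C = \C\bigl([0;m(m+1)/2]_{\alpha}\bigr)$ is a minimal code with parameters $[m(m+1)/2, m, d]$ for the $d$ displayed there; in particular it has dimension $m$ and length $m(m+1)/2$, which is the claim.

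Alternatively, one can bypass Theorem \ref{thm:C-[h;delta]} altogether: the same identity together with Lemma \ref{lem:[h;delta]+0} gives $\ai\bigl([0;\delta]_{\alpha}\cup\{0\}\bigr) = 3 \ge 3$, whence Theorem \ref{thm:C-D} already shows that $\C\bigl([0;\delta]_{\alpha}\bigr)$ is an $m$-dimensional minimal code of length $|[0;\delta]_{\alpha}| = m(m+1)/2$. In either route there is essentially no obstacle; the corollary is a clean specialization, and the only point requiring care is selecting $t=3$ from the binomial identity and checking that it lies in the admissible range $3\le t\le m$, which is guaranteed by the hypothesis $m\ge 5$.
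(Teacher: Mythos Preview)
Your proof is correct and matches the paper's intended argument: the corollary is stated without proof immediately after Theorem \ref{thm:C-[h;delta]}, so the implicit justification is precisely the specialization $h=0$, $\delta=m(m+1)/2$, $t=3$ via the identity $\sum_{i=1}^{2}\binom{m}{i}=m(m+1)/2$ that you spell out. Your alternative route through Lemma \ref{lem:[h;delta]+0} and Theorem \ref{thm:C-D} is also valid and in fact just unwinds the proof of Theorem \ref{thm:C-[h;delta]} for this case.
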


According to our best knowledge there is only one known example of binary minimal codes with dimension $m$  and length $m(m+1)/2$
, which was introduced in \cite{ZYW19}. These minimal codes can be described as follows:
\begin{eqnarray}\label{eq:P and P+P}
\left \{
\begin{array}{c}
 \left ( \tr^m_1(a \alpha_1), \cdots, \tr^m_1(a \alpha_m), \right . \\
\left .  \tr^m_1(a (\alpha_1+\alpha_2)), \cdots, \tr^m_1(a (\alpha_{m-1}+\alpha_m))    \right )
 \end{array}
 : a \in \gf (2^m)  \right \},
\end{eqnarray}
where $\alpha_1, \cdots, \alpha_m$
form a basis of $\gf(2^m)$ over $\gf(2)$.
Obviously, the codes obtained in (\ref{eq:P and P+P}) are unique, up to equivalence.
Their minimum   distance $d(m)$  is equal to $m$.  Many infinite families of minimal codes of
dimension $m$ and length $m(m+1)/2$
can be produced form
Corollary \ref{cor:[m(m+1)/2,m]}.
Denote by $d_{max}(m)$ and $d_{min}(m)$ the  largest and smallest values of the minimum distances of the codes of Corollary \ref{cor:[m(m+1)/2,m]}, respectively.
Figure \ref{fig:minimum distance} shows that the minimum distance of the minimal code form Corollary \ref{cor:[m(m+1)/2,m]} would be
better than that of the code given in (\ref{eq:P and P+P}), and suggests the following conjecture. 

\begin{figure}
\centering
\begin{tikzpicture}
	\begin{axis}[
	    xmin=5,
	    xtick={6,8,...,16},
	    minor x tick num=1,
	    ymin=5,
	    ytick={10,20,...,50},
	    minor y tick num=1,
	    xmax=16,
	    ymax=50,
	    xlabel=Dimension $m$,
	    ylabel=Minimum Distance,
		height=8cm,
		width=8cm,
		grid=both,
         grid style={line width=.2pt, draw=gray!50},
          legend style={at={(0.2,0.95)},
                       anchor=north,legend columns=1},
	]
	\addplot[style={densely dashed},blue, mark=square*] coordinates {
		(5,5)
		(6,7)
		(7,11)
		(8,13)
		(9,16)
		(10,21)
		(11,25)
		(12,29)
		(13,34)
		(14,39)
		(15,44)
		(16,50)
	};
	\addlegendentry{\tiny{$d_{max}(m)$}}
	 		
	\addplot[red,mark=diamond*] coordinates{
	    (5,5)
		(6,6)
		(7,9)
		(8,11)
		(9,13)
		(10,14)
		(11,15)
		(12,23)
		(13,24)
		(14,27)
		(15,27)
		(16,35)
	};
	\addlegendentry{\tiny{$d_{min}(m)$}}

	\addplot[style={loosely dashed},green,mark=star] coordinates{
		(5,5)
		(6,6)
		(7,7)
		(8,8)
		(9,9)
		(10,10)
		(11,11)
		(12,12)
		(13,13)
		(14,14)
		(15,15)
		(16,16)
	};
	\addlegendentry{\tiny{$d(m)$}}
	\end{axis}
\end{tikzpicture}
\caption{Minimum Distance of Codes in Corollary \ref{cor:[m(m+1)/2,m]} and (\ref{eq:P and P+P})} \label{fig:minimum distance}
\end{figure}
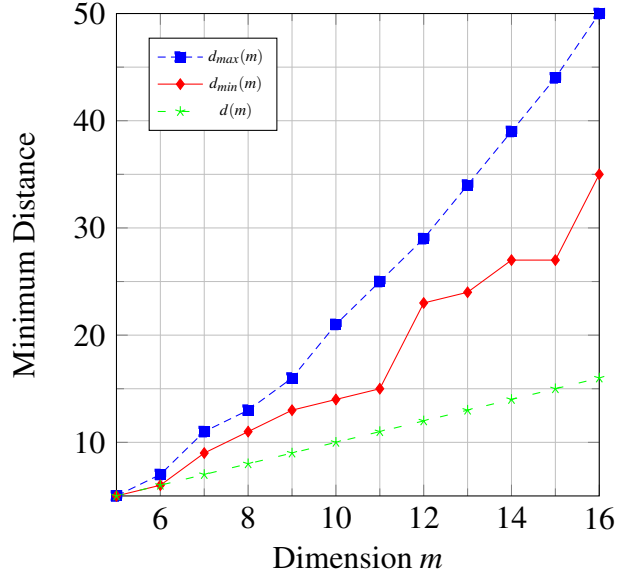

\begin{conj}
Let $m\ge 7$ be an integer. Then the minimal code in Corollary \ref{cor:[m(m+1)/2,m]}
has minimum distance greater than $m$. Moreover, if two primitive elements $\alpha$ and $\alpha'$ of $\gf(2^m)$ satisfy $\alpha' \neq \alpha^{2^i}$ for any $0 \le i \le m-1$, then the two codes corresponding to $\alpha$ and $\alpha'$  are inequivalent.
\end{conj}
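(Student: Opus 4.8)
The plan is to handle the two assertions separately. Write $\delta=m(m+1)/2$ and $D=[0;\delta]_\alpha=\{\alpha^0,\dots,\alpha^{\delta-1}\}$, so that the code of Corollary~\ref{cor:[m(m+1)/2,m]} is $\C(D)$ in the notation of Theorem~\ref{thm:C-D}. For $a\neq 0$ a weight count gives $\wt(\bc_a)=\delta-|D\cap H_a|$, where $\bc_a=(\Tr(a\alpha^i))_{i<\delta}$ and $H_a=\{x\in\gf(2^m):\Tr(ax)=0\}$, so the minimum-distance claim is equivalent to $|D\cap H|\le\binom{m}{2}-1$ for every linear hyperplane $H$ and every $m\ge7$. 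Since $\delta=\binom{m}{1}+\binom{m}{2}$, Lemma~\ref{lem:[h;delta]+0} forces $\ai(D\cup\{0\})=3$, so Theorem~\ref{thm:C-D} yields only $d\ge\binom{m-1}{0}+\binom{m-1}{1}=m$; the task is thus to exclude the single borderline value $|D\cap H|=\binom{m}{2}$. The Gauss-sum bound of Theorem~\ref{thm:C-[h;delta]} is of size $\Theta(2^{m/2}\log 2^m)$, far larger than $\delta-2m=\tfrac12 m(m-3)$ once $m$ grows, so a sharper input is needed.

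To treat the borderline case I would start from the only elementary fact available: since $\alpha$ is primitive its minimal polynomial $m_\alpha$ has degree $m$, so any $m$ consecutive powers $\alpha^{j},\dots,\alpha^{j+m-1}$ form an $\gf(2)$-basis and cannot lie in $H$; hence, if $|D\cap H|=\binom{m}{2}$, the $m$ exceptional indices $O=\{i<\delta:\alpha^i\notin H\}$ meet every window of $m$ consecutive elements of $\{0,\dots,\delta-1\}$. A gap count forces only $|O|$ to be at least roughly $m/2$, which is still compatible with $|O|=m$, so the recurrence structure of $s_i=\Tr(a\alpha^i)$ must be exploited. Passing to the state vectors $u_i=(s_i,\dots,s_{i+m-1})=T^{i}u_0$, where $T$ is the companion matrix of $m_\alpha$ (invertible, of multiplicative order $2^m-1$), yields $\binom{m}{2}+1$ distinct nonzero vectors whose total Hamming weight is at most $m|O|\le m^2$, i.e.\ average weight at most $2$; since a generic run of that many consecutive states along the $T$-orbit has average weight $\approx m/2$, I expect the configuration to be impossible for $m\ge7$. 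The main obstacle sits precisely here: turning ``the $T$-orbit is sufficiently equidistributed over a stretch of length $\Theta(m^2)$'' into a theorem is essentially a short-interval exponential-sum estimate beyond the classical bounds, so a new argument will be needed — perhaps combining the detailed run-length statistics of $m$-sequences with the window-hitting constraint, or choosing a suitable auxiliary polynomial — while a finite computer check disposes of the small $m$ near the threshold $7$.

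For the inequivalence claim I would first note that the statement needs a correction: reversing the coordinate order of $\C(D)$ and then applying the $\gf(2)$-linear bijection $x\mapsto\alpha^{\delta-1}x$ of $\gf(2^m)$ carries $\{1,\alpha^{-1},\dots,\alpha^{-(\delta-1)}\}$ onto $\{1,\alpha,\dots,\alpha^{\delta-1}\}$, so $\C_\alpha$ and $\C_{\alpha^{-1}}$ are always permutation-equivalent, yet for $m\ge3$ the minimal polynomial of a primitive element is never self-reciprocal, so $\alpha^{-1}$ is never a Galois conjugate of $\alpha$. The correct statement must therefore also exclude the conjugacy class of $\alpha^{-1}$, and that is what I would aim to prove. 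The plan is to use that over $\gf(2)$ two linear codes are permutation-equivalent iff their binary matroids are isomorphic; the matroid of $\C(D)$ is that of the shortened cyclic dual $\C(D)^{\perp}=\{m_\alpha(x)q(x):q\in\gf(2)[x],\ \deg q<\binom{m}{2}\}$, and the only codeword of this code supported inside a block of $m+1$ consecutive coordinates $\{k,\dots,k+m\}$ is $x^{k}m_\alpha(x)$, so $k+\supp(m_\alpha)$ is a circuit for each $0\le k\le\delta-1-m$. Hence, whenever $\wt(m_\alpha)$ is the minimum weight of $\C(D)^{\perp}$, the minimum-size circuits of the matroid comprise these $\binom{m}{2}$ translates of $\supp(m_\alpha)$, and a matroid isomorphism $\C_\alpha\cong\C_{\alpha'}$ must carry this family of translated intervals onto its analogue for $\alpha'$; a combinatorial rigidity argument should then force $\supp(m_{\alpha'})$ to equal $\supp(m_\alpha)$ or its reflection, i.e.\ $m_{\alpha'}=m_\alpha$ or $m_{\alpha'}(x)=x^{m}m_\alpha(1/x)$, meaning $\alpha'$ is a conjugate of $\alpha$ or of $\alpha^{-1}$. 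The hard part is twofold: dealing with the primitive polynomials for which $\wt(m_\alpha)$ is \emph{not} the minimum weight of $\C(D)^{\perp}$ — there small multiples $m_\alpha(x)q(x)$ supply extra minimum-size circuits that must be disentangled — and making the rigidity step uniform in $m$; for the remaining cases one may have to compare finer invariants, such as the complete weight enumerator of $\C(D)^{\perp}$, which encodes the short $\gf(2)$-linear relations $\sum_j\alpha^{i_j}=0$ with all $i_j<\delta$.
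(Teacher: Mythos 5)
This statement is posed in the paper as a \emph{conjecture}, supported only by the numerical evidence of Figure~\ref{fig:minimum distance}; the paper supplies no proof, so there is nothing to compare your argument against. Judged on its own terms, your proposal is a research plan rather than a proof, and by your own admission both halves are left open at their decisive steps. For the minimum-distance claim, your reduction is correct and sharp: since $\delta=m(m+1)/2=\sum_{i=1}^{2}\binom{m}{i}$, Lemma~\ref{lem:[h;delta]+0} gives $\ai(D\cup\{0\})=3$ and Theorem~\ref{thm:C-D} yields exactly $d\ge m$, so the conjecture amounts to excluding the single borderline value $|D\cap H|=\binom{m}{2}$ for a trace hyperplane $H$; your observations that $m$ consecutive powers of $\alpha$ form a basis and that the $\binom{m}{2}+1$ state vectors would have average weight about $2$ are valid, but the passage from ``average weight $2$ is atypical for an $m$-sequence over a window of length $\Theta(m^2)$'' to an actual contradiction is precisely the missing theorem, and no known short-interval bound (including the Gauss-sum estimate of Theorem~\ref{thm:C-[h;delta]}, which as you note is far too weak at this length) closes it. The matroid-circuit approach to inequivalence is likewise only a sketch: the rigidity step and the treatment of primitive polynomials whose weight is not the minimum weight of $\C(D)^{\perp}$ are both unaddressed.

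That said, your preliminary observation about $\alpha^{-1}$ is correct and is the most valuable part of the proposal: reversing the coordinate order of $\C_{\alpha^{-1}}$ and replacing $a$ by $a\alpha^{-(\delta-1)}$ carries $\C_{\alpha^{-1}}$ onto $\C_{\alpha}$, so the two codes are always permutation-equivalent, while for $m\ge 3$ one has $\alpha^{-1}\neq\alpha^{2^i}$ for all $i$ because $2^m-1$ cannot divide $2^i+1$. Hence the second assertion of the conjecture is \emph{false as stated} and must at least be amended to also exclude the conjugacy class of $\alpha^{-1}$ (equivalently, to require that the minimal polynomials of $\alpha$ and $\alpha'$ be neither equal nor reciprocal to one another). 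You should present this as an explicit correction rather than as an aside; the remainder of the proposal should then be regarded as a plausible but unproven programme.
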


By the definition of shortened codes, Proposition \ref{prop:prm-generator-alpha}
shows that the generator polynomial of the shortened second-order Reed-Muller code
$\srm (2,m)$ can be expressed as
\begin{eqnarray*}
 g_{2, \alpha}^*(X)= \prod_{ \scriptsize{\begin{array}{c}i_{m-1}+\cdots+ i_0 \le m-3
 \\ i_{m-1}, \cdots, i_0 \in \{0,1\} \end{array}}} (X-\alpha^{i_{m-1}2^{m-1}+ \cdots+i_0 2^0}).
 \end{eqnarray*}
For a positive integer $\epsilon$, let $\mathfrak C_{\epsilon}$ be the code contained in the shortened second-order Reed-Muller code $\srm(2,m)$   given by
\begin{eqnarray}\label{eq:modpoly-linear}
\mathfrak C_{\epsilon} =\left \{ \left ( \sum_{i=0}^{\epsilon-1}  c_i X^i \right ) g_{2, \alpha}^*(X)  : c_i \in \gf(2) \right \}.
\end{eqnarray}

The following  theorem provides a way of constructing linear codes of dimension $m$ and length less than $m(m+1)/2$ via sets of
algebraic immunity $2$.
\begin{theorem}\label{thm:C-[0;m(m+1)/2]}
Let $\epsilon$ be an integer with $1\le \epsilon < m(m-1)/2$.
 Let $\C$ be the set given by
\begin{eqnarray*}
\C=\left \{ \left ( \tr^m_1 \left (a \alpha^{0} \right ), \cdots, \tr^m_1 \left (a \alpha^{m(m+1)/2-\epsilon-1} \right )  \right ): a \in \gf(2^m) \right \}.
\end{eqnarray*}
Then $\C$ is a   minimal code
if and only if the minimum distance of the code $\mathfrak C_{\epsilon}$ in (\ref{eq:modpoly-linear}) is greater than $2^{m-2}$.

 \end{theorem}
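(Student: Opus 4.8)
The plan is to read the statement off Theorem~\ref{thm:min-rm-ann} applied to the Simplex code, translating its condition~(2) into a statement about the minimum weight of $\mathfrak C_\epsilon$. Put $\delta=m(m+1)/2-\epsilon$, so that, in the notation of (\ref{eqn:C-D}), $\C=\C([0;\delta]_\alpha)$, and set $\tilde D=[0;\delta]_\alpha\cup\{0\}$. First I would place $\C$ in the framework of Theorem~\ref{thm:min-rm-ann}: let $\C'=\{(\tr^m_1(aP_0),\dots,\tr^m_1(aP_{2^m-1})):a\in\gf(2^m)\}\subseteq\RM(1,m)$ be the Simplex code with a prepended identically-zero coordinate; its codewords are exactly the functions $\tr^m_1(ax)$, $a\in\gf(2^m)$, it is minimal, and puncturing it at the positions indexed by $\overline{\tilde D}$ yields $\C$ up to that one zero coordinate, which affects neither minimality nor dimension. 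Since $1\le\epsilon<m(m-1)/2$ forces $m<\delta<m(m+1)/2=\sum_{i=1}^{2}\binom{m}{i}$, Lemma~\ref{lem:[h;delta]+0} gives $\ai(\tilde D)=2$; in particular no nonzero function of degree $\le1$ annihilates $\tilde D$, which both guarantees $\dim\C=m$ and kills the diagonal instances $f_1=f_2$ of condition~(2) (there $f_1f_2$ has degree $\le1$). Hence, by Theorem~\ref{thm:min-rm-ann}, $\C$ is minimal if and only if no product $\tr^m_1(a_1x)\,\tr^m_1(a_2x)$ with $a_1,a_2\in\gf(2^m)^\ast$, $a_1\neq a_2$, vanishes identically on $[0;\delta]_\alpha$.

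The second step is a geometric description of these products. Any $g_{a_1,a_2}:=\tr^m_1(a_1x)\tr^m_1(a_2x)$ with $a_1\neq a_2$ both nonzero lies in $\srm(2,m)$ (it has degree $\le2$ and vanishes at the origin) and has Hamming weight exactly $2^{m-2}$, since $x\mapsto(\tr^m_1(a_1x),\tr^m_1(a_2x))$ is onto $\gf(2)^2$ with fibres of size $2^{m-2}$ and $\supp(g_{a_1,a_2})$ is the fibre over $(1,1)$; by Lemma~\ref{lem:minimum-set-geometry} it is therefore a minimum-weight codeword of $\srm(2,m)$, the incidence vector of an $(m-2)$-flat off the origin. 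Conversely, every minimum-weight codeword of $\srm(2,m)$ is of this shape: by Lemma~\ref{lem:minimum-set-geometry} it is the indicator of a flat $A=\{x:\tr^m_1(a_1x)=c_1,\ \tr^m_1(a_2x)=c_2\}$ with $a_1,a_2$ $\gf(2)$-linearly independent and $(c_1,c_2)\neq(0,0)$, and combining $\mathbf 1_A=(\tr^m_1(a_1x)+c_1+1)(\tr^m_1(a_2x)+c_2+1)$ with the identity $\tr^m_1(ax)\bigl(\tr^m_1(bx)+1\bigr)=\tr^m_1(ax)\tr^m_1\bigl((a+b)x\bigr)$ rewrites $\mathbf 1_A$ as a product $\tr^m_1(b_1x)\tr^m_1(b_2x)$ with $b_1\neq b_2$ nonzero in each of the three cases $(c_1,c_2)\in\{(1,1),(1,0),(0,1)\}$. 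Thus the nonzero codewords of $\srm(2,m)$ of weight $\le2^{m-2}$ are exactly the functions $g_{a_1,a_2}$, $a_1\neq a_2$.

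The third step converts ``vanishes on $[0;\delta]_\alpha$'' into divisibility inside the cyclic code $\srm(2,m)$. In the polynomial model, a degree-$\le2$ function $h$ with $h(0)=0$ is the codeword $\sum_{i=0}^{2^m-2}h(\alpha^i)X^i$ of $\srm(2,m)$, and $h$ vanishes on $[0;\delta]_\alpha=\{\alpha^0,\dots,\alpha^{\delta-1}\}$ precisely when $X^\delta$ divides this polynomial. Writing a codeword as $p(X)g_{2,\alpha}^*(X)$ with $\deg p<\dim\srm(2,m)=m(m+1)/2$, and using $X\nmid g_{2,\alpha}^*(X)$, divisibility by $X^\delta$ is equivalent to $p(X)=X^\delta q(X)$ with $\deg q<m(m+1)/2-\delta=\epsilon$; since $\delta+\epsilon=m(m+1)/2$ and $\deg g_{2,\alpha}^*=2^m-1-m(m+1)/2$, one has $\deg\bigl(X^\delta q(X)g_{2,\alpha}^*(X)\bigr)\le 2^m-2$, so there is no reduction modulo $X^{2^m-1}-1$, the codewords of $\srm(2,m)$ divisible by $X^\delta$ are exactly $X^\delta\cdot\mathfrak C_\epsilon$, and $c'\mapsto X^\delta c'$ is a weight-preserving bijection onto them. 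Chaining the three steps: $\C$ is not minimal $\iff$ some $g_{a_1,a_2}$ ($a_1\neq a_2$) vanishes on $[0;\delta]_\alpha$ $\iff$ some minimum-weight codeword of $\srm(2,m)$ is divisible by $X^\delta$ $\iff$ $\mathfrak C_\epsilon$ has a nonzero codeword of weight $\le2^{m-2}$ (which then has weight exactly $2^{m-2}$) $\iff$ $d(\mathfrak C_\epsilon)\le2^{m-2}$, which is equivalent to the assertion.

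I expect the second step to be the main obstacle: one must show that the products $\tr^m_1(a_1x)\tr^m_1(a_2x)$ account for \emph{all} minimum-weight codewords of $\srm(2,m)$, i.e.\ that condition~(2) of Theorem~\ref{thm:min-rm-ann} for the Simplex code is sensitive exactly at the minimum weight of $\srm(2,m)$; the case analysis over $(c_1,c_2)$ and the identity turning mixed indicators into pure products are the crux. The degree bookkeeping in the third step (that $X^\delta\mathfrak C_\epsilon$ is a genuine, un-wrapped, isometric copy of $\mathfrak C_\epsilon$) is routine but should be spelled out.
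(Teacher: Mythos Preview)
Your proof is correct and, at its core, matches the paper's: both hinge on identifying the products $\tr^m_1(a_1x)\tr^m_1(a_2x)$ (for $a_1\neq a_2$ nonzero) with the minimum-weight codewords of $\srm(2,m)$ via Lemma~\ref{lem:minimum-set-geometry}, and on translating ``vanishes on $[0;\delta]_\alpha$'' into membership in $X^\delta\mathfrak C_\epsilon$ by the polynomial-divisibility argument (the paper's computation that $c_0=\cdots=c_{\delta-1}=0$ in the expansion $(\sum c_iX^i)g^*_{2,\alpha}(X)$ is exactly your step~3). The organizational difference is that you invoke Theorem~\ref{thm:min-rm-ann} upfront to reduce minimality of $\C$ to condition~(2), whereas the paper argues both implications by hand; your route is cleaner and makes the dimension claim and the diagonal case $f_1=f_2$ fall out of $\ai(\tilde D)=2$ via Lemma~\ref{lem:[h;delta]+0}. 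Your step~2 is in fact more careful than the paper's treatment: the paper simply asserts from Lemma~\ref{lem:minimum-set-geometry} that a weight-$2^{m-2}$ codeword of $\srm(2,m)$ equals $\tr^m_1(a_1x)\tr^m_1(a_2x)$, without supplying the $(c_1,c_2)$ case analysis and the identity $\tr^m_1(ax)(\tr^m_1(bx)+1)=\tr^m_1(ax)\tr^m_1((a+b)x)$ that you correctly flag as the crux.
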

\begin{proof}
Since $1\le \epsilon < m(m-1)/2$, it follows immediately that $\mathrm{dim}(\C)=m$.

Let us first prove the sufficient condition for $\C$ to be a minimal code.
Suppose the assertion of the theorem is false.
There would be
a codeword $\left ( \sum_{i=0}^{\epsilon -1} c_iX^i \right ) g^*_{2,\alpha}(X) $  of weight $2^{m-2}$ of $ \mathfrak C_{\epsilon}$
, where $c_i \in \gf(2)$. Set $g(X)=X^{m(m+1)/2- \epsilon}\left ( \sum_{i=0}^{\epsilon -1} c_iX^i \right ) g^*_{2,\alpha}(X)$
and let $f\in \mathbb B_{m}^0$ be the Boolean function corresponding to $g(X)$.
Then $g(X)$ is a codeword of weight $2^{m-2}$ of  $\srm(2,m)$ and $f$ satisfies
\begin{eqnarray}\label{eq:f-annihilator}
f(\alpha^i)=0,
\end{eqnarray}
where $0 \le i \le m(m+1)/2- \epsilon-1$.
Lemma \ref{lem:minimum-set-geometry} now leads to $f(x)=\tr^m_1(a_1x) \tr^m_1(a_2x)$, where $a_1\neq a_2 \in \gf(2^m)^*$.
Let $\mathbf{c}_i$ be the nonzero codeword of $\C$ given by $\left ( \tr^m_1(a_i \alpha^{i}) \right )_{i=0}^{m(m+1)/2- \epsilon-1}$, where $i=1 \text{ or } 2$.
From (\ref{eq:f-annihilator}), it is a simple matter to check that $\supp (\mathbf{c}_1+ \mathbf{c}_2)= \supp (\mathbf{c}_1) \dot \cup \supp ( \mathbf{c}_2) $,
which contradicts the assumption that $\C$ is a minimal code. Consequently, the minimum distance of $\mathfrak C_{\epsilon}$
is greater than $2^{m-2}$.

Conversely, let $\mathfrak C_{\epsilon}$ be a linear code of minimum distance greater than $2^{m-2}$.
Suppose that $\C$ is not a minimal code. Then we could find two distinct nonzero codewords $\mathbf{c_0}$ and $\mathbf c_1$ of $\C$
such that $\supp ( \mathbf{c}_1) \subsetneq \supp (\mathbf{c}_0)$, where $\mathbf c_i= \left ( \tr^m_1(a_i \alpha^i) \right )_{i=0}^{m(m+1)/2-\epsilon-1}$.
Let $f$ be the quadratic Boolean function $\tr^m_1( a_1x) \tr^m_1(a_2 x)$, where $a_2=a_0+a_1$.
A trivial verification shows that
\begin{eqnarray}\label{eq:f-0-sequen=0}
f(\alpha^0)=\cdots =f(\alpha^{m(m+1)/2-\epsilon-1})=0,
\end{eqnarray}
and
\begin{eqnarray}\label{eq:wt(f)=q/4}
\wt (f) =2^{m-2}.
\end{eqnarray}
Since $g^*_{2,\alpha}(X)$ is the generator polynomial of $\srm(2,m)$, the codeword $\left ( f(P_i) \right )_{i=1}^{2^m-1}$
can be uniquely expressed as $\left (\sum_{i=0}^{m(m+1)/2-1} c_iX^i \right ) g^*_{2,\alpha}(X)$, where $c_i\in \gf(2)$.
From (\ref{eq:f-0-sequen=0}) it may be concluded that $c_i=0$ for any $0 \le i \le m(m+1)/2-\epsilon-1$.
Combining this with (\ref{eq:wt(f)=q/4}) we deduce that
 $$\wt \left (\left (\sum_{i=0}^{\epsilon-1} c_{i+m(m+1)/2-\epsilon}X^i \right ) g^*_{2,\alpha}(X)  \right )=2^{m-2}.$$
This contradicts our assumption about the minimum distance of $\mathfrak C_{\epsilon}$.
It completes the proof.

\end{proof}

Let $\alpha$ be any primitive element of $\gf(2^m)$. Let us denote by $\epsilon_m(\alpha)$
the maximum $\epsilon$ such that the code $\C$ in Theorem \ref{thm:C-[0;m(m+1)/2]} is an $m$-dimensional minimal code.
The values of $\epsilon_m(\alpha)$ and their  corresponding frequencies  are listed in
Table \ref{tab:defect}  for $5 \le m \le 9$. It shows that a large number of minimal codes with dimension $m$ and length less than $m(m+1)/2$
can be produced from Theorem \ref{thm:C-[0;m(m+1)/2]}.

\begin {table}
\begin{center}
\begin{tabular}{>{\sf }lcc|lcc}    %
\toprule
 $m$ & $\epsilon_{m}(\alpha)$ & Freq. & \cellcolor{mygray} $m$ & \cellcolor{mygray} $\epsilon_{m}(\alpha)$ &  \cellcolor{mygray} Freq. \\
\rowcolor{mygray}
5     & 0  & 10 &   \cellcolor{white} 9 &  \cellcolor{white} 1 & \cellcolor{white} 36 \\
5       & \cellcolor{red} 1  & 20& \cellcolor{mygray} 9 & \cellcolor{mygray} 5 & \cellcolor{mygray} 18   \\
\rowcolor{mygray}
6    & 0  & 12 & \cellcolor{white} 9 & \cellcolor{white}6 &\cellcolor{white} 18\\
6 & 1 & 12 & \cellcolor{mygray} 9 & \cellcolor{mygray} 7 & \cellcolor{mygray} 18 \\
\rowcolor{mygray}
6    & \cellcolor{red}3  & 12& \cellcolor{white}  9 & \cellcolor{white} 8 & \cellcolor{white} 36 \\
7 & 0 & 42 &\cellcolor{mygray}  9 &\cellcolor{mygray}  9 &\cellcolor{mygray}  36\\
\rowcolor{mygray}
7 & 3 & 28  & \cellcolor{white} 9 & \cellcolor{white} 10 & \cellcolor{white} 72 \\
7     & 4  & 42 &  \cellcolor{mygray} 9& \cellcolor{mygray} 11 & \cellcolor{mygray} 36\\
\cellcolor{mygray}7     & \cellcolor{red}5  & \cellcolor{mygray}14 &  ~9 &  12 &54 \\
\rowcolor{mygray}
\cellcolor{white}8    & \cellcolor{white} 6  & \cellcolor{white} 48 &  ~9 &  13 & 18 \\
\cellcolor{mygray}8 & \cellcolor{mygray}7 & \cellcolor{mygray} 16 &  ~9 &  14 & 54 \\
\rowcolor[gray]{0.9}
\cellcolor{white}8    & \cellcolor{red}10  & \cellcolor{white} 64 &  ~9 & \cellcolor{red} 15 &18\\
\rowcolor[gray]{0.9}
9 & 0 & 18 \\
\bottomrule
\end{tabular}
\caption {Value Distribution of $\epsilon_{m}(\alpha)$ ($5\le m \le 9$)} \label{tab:defect}
\end{center}
\end {table}


As a corollary of Theorem \ref{thm:C-[0;m(m+1)/2]}, we have the following.

\begin{corollary}\label{cor:C-[0;m(m+1)/2-2]}
Let $\C$ be the set given by
\begin{eqnarray*}
\C=\left \{ \left ( \tr^m_1 \left (a \alpha^{0} \right ), \cdots, \tr^m_1 \left (a \alpha^{m(m+1)/2-2} \right )  \right ): a \in \gf(2^m) \right \}.
\end{eqnarray*}
Then $\C$ is    minimal
if and only if the Hamming weight of the generator polynomial $g_{2, \alpha}^*(X)$
of the shortened second-order Reed-Muller code $\srm (2, m)$ is not equal to $2^{m-2}$.
 \end{corollary}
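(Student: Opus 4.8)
The plan is to deduce this corollary directly from Theorem \ref{thm:C-[0;m(m+1)/2]} by specializing to $\epsilon = 1$. First I would check that the hypothesis $1 \le \epsilon < m(m-1)/2$ of that theorem is met by $\epsilon = 1$ (it holds for every $m$ in the range under consideration), so Theorem \ref{thm:C-[0;m(m+1)/2]} applies and asserts that $\C$ is minimal if and only if the minimum distance of the code $\mathfrak C_{1}$ from (\ref{eq:modpoly-linear}) is greater than $2^{m-2}$.

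The second step is to unwind the definition of $\mathfrak C_{\epsilon}$ at $\epsilon = 1$. The sum $\sum_{i=0}^{\epsilon-1} c_i X^i$ then collapses to a single constant $c_0 \in \gf(2)$, so that $\mathfrak C_{1} = \{0,\ g_{2,\alpha}^{*}(X)\}$. Since $g_{2,\alpha}^{*}(X)$ is a generator polynomial it is nonzero, hence $\mathfrak C_{1}$ is a two-element linear code whose minimum (nonzero) weight is exactly $\wt\!\left(g_{2,\alpha}^{*}(X)\right)$; in other words, the minimum distance of $\mathfrak C_{1}$ equals $\wt\!\left(g_{2,\alpha}^{*}(X)\right)$.

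Finally, I would invoke the fact that $g_{2,\alpha}^{*}(X)$ is, by construction, a codeword of the shortened second-order Reed-Muller code $\srm(2,m)$, whose minimum weight equals $2^{m-\ell} = 2^{m-2}$ by Lemma \ref{lem:minimum-set-geometry}. Consequently $\wt\!\left(g_{2,\alpha}^{*}(X)\right) \ge 2^{m-2}$ always holds, so the strict inequality ``$d(\mathfrak C_{1}) > 2^{m-2}$'' is equivalent to the non-equality ``$\wt\!\left(g_{2,\alpha}^{*}(X)\right) \neq 2^{m-2}$''. Chaining this equivalence with the one furnished by Theorem \ref{thm:C-[0;m(m+1)/2]} yields the claim. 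There is no real obstacle here; the only points deserving a line of care are the remark that the minimum distance of a two-element linear code is the weight of its unique nonzero codeword, and the lower bound $\wt\!\left(g_{2,\alpha}^{*}(X)\right) \ge 2^{m-2}$ obtained from the minimum-weight statement for $\srm(2,m)$.
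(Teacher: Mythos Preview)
Your proposal is correct and follows exactly the route the paper intends: the paper presents this statement merely as ``a corollary of Theorem \ref{thm:C-[0;m(m+1)/2]}'' with no written proof, and your specialization $\epsilon=1$, identification $\mathfrak C_{1}=\{0,\,g_{2,\alpha}^{*}(X)\}$, and use of Lemma \ref{lem:minimum-set-geometry} to convert ``$>2^{m-2}$'' into ``$\neq 2^{m-2}$'' are precisely the details being left to the reader.
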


 \begin{corollary}\label{cor:C-[0;m(m+1)/2-3]}
Let $\C$ be the set given by
\begin{eqnarray*}
\C=\left \{ \left ( \tr^m_1 \left (a \alpha^{0} \right ), \cdots, \tr^m_1 \left (a \alpha^{m(m+1)/2-3} \right )  \right ): a \in \gf(2^m) \right \}.
\end{eqnarray*}
Then $\C$ is a minimal  code if and only if  both $\wt \left( g_{2, \alpha}^*(X) \right )$ and
$\wt \left( (1+X) g_{2, \alpha}^*(X) \right )$ are greater than $2^{m-2}$.
 \end{corollary}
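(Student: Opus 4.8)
The plan is to recognise $\C$ as a special case of the code in Theorem~\ref{thm:C-[0;m(m+1)/2]}. The tuple $\bigl(\tr^m_1(a\alpha^{0}),\dots,\tr^m_1(a\alpha^{m(m+1)/2-3})\bigr)$ has exactly $m(m+1)/2-2$ coordinates, so $\C$ is precisely the code of Theorem~\ref{thm:C-[0;m(m+1)/2]} taken with $\epsilon=2$ (and $\epsilon=2<m(m-1)/2$ holds throughout the relevant range of $m$). Applying that theorem, $\C$ is a minimal code if and only if the minimum distance of the code $\mathfrak C_{2}$ of (\ref{eq:modpoly-linear}) exceeds $2^{m-2}$. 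Thus everything reduces to computing $d(\mathfrak C_{2})$.

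Next I would inspect $\mathfrak C_{2}=\{(c_0+c_1X)\,g^*_{2,\alpha}(X):c_0,c_1\in\gf(2)\}$ directly. Since $g^*_{2,\alpha}(X)$ is a nonzero polynomial of degree $2^m-1-m(m+1)/2<2^m-2$, the codewords $g^*_{2,\alpha}(X)$ and $Xg^*_{2,\alpha}(X)$ are linearly independent in $\gf(2)[X]/(X^{2^m-1}-1)$; hence $\mathfrak C_{2}$ has dimension $2$ and exactly three nonzero codewords, namely $g^*_{2,\alpha}(X)$, $Xg^*_{2,\alpha}(X)$ and $(1+X)g^*_{2,\alpha}(X)$. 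Because $\srm(2,m)$ is a cyclic code and multiplication by $X$ is a cyclic shift of the coordinates, $\wt\bigl(Xg^*_{2,\alpha}(X)\bigr)=\wt\bigl(g^*_{2,\alpha}(X)\bigr)$, so
\[
d(\mathfrak C_{2})=\min\Bigl\{\wt\bigl(g^*_{2,\alpha}(X)\bigr),\ \wt\bigl((1+X)g^*_{2,\alpha}(X)\bigr)\Bigr\}.
\]

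Combining the two steps, $\C$ is minimal if and only if $\min\{\wt(g^*_{2,\alpha}),\wt((1+X)g^*_{2,\alpha})\}>2^{m-2}$, which is exactly the requirement that both $\wt(g^*_{2,\alpha})$ and $\wt((1+X)g^*_{2,\alpha})$ be greater than $2^{m-2}$; this is the assertion. I do not expect any serious obstacle: the proof is the bookkeeping of pinning down $\epsilon=2$ in Theorem~\ref{thm:C-[0;m(m+1)/2]} and enumerating the three nonzero codewords of the two-dimensional code $\mathfrak C_{2}$, the only non-formal ingredient being cyclic-shift invariance of the Hamming weight, immediate from the cyclicity of $\srm(2,m)$. (Should one prefer not to invoke the theorem, the same conclusion is obtainable from scratch by the argument in its proof: ``$\C$ not minimal'' translates into the existence of a quadratic $\tr^m_1(a_1x)\tr^m_1(a_2x)$ with $a_1,a_2$ $\gf(2)$-independent vanishing on $\{\alpha^{0},\dots,\alpha^{m(m+1)/2-3}\}$, i.e.\ a weight-$2^{m-2}$ codeword of $\srm(2,m)$ lying in $X^{m(m+1)/2-2}\mathfrak C_{2}$, and the latter has weight enumerator governed by $\wt(g^*_{2,\alpha})$ and $\wt((1+X)g^*_{2,\alpha})$ alone.)
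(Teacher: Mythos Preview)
Your proposal is correct and is exactly the argument the paper intends: the corollary is stated without proof as an immediate consequence of Theorem~\ref{thm:C-[0;m(m+1)/2]} with $\epsilon=2$, and your explicit enumeration of the three nonzero codewords of $\mathfrak C_{2}$ together with the cyclic-shift observation $\wt(Xg^*_{2,\alpha})=\wt(g^*_{2,\alpha})$ supplies precisely the missing bookkeeping.
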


\begin{example}
 Let $q=2^5$ and $\alpha$ be a primitive element with minimal polynomial
 $ \alpha^5 + \alpha^3 + 1=0$. Then $g^*_{2,\alpha}(X)=X^{16} + X^{12} + X^{11} + X^{10} + X^9 + X^4 + X + 1$. Clearly, $\wt \left ( g^*_{2,\alpha}(X) \right )=8$.
    The binary linear code $\C$ in Corollary \ref{cor:[m(m+1)/2,m]} is a minimal code, but that in Corollary \ref{cor:C-[0;m(m+1)/2-2]} is not.
 \end{example}

 \begin{example}
 Let $q=2^6$ and $\alpha$ be a primitive element with minimal polynomial
 $\alpha^6 + \alpha^5 + \alpha^3 + \alpha^2 + 1=0$. Then $g^*_{2,\alpha}(X)=X^{42} + X^{41} + X^{39} + X^{38} + X^{37} + X^{32} + X^{31} + X^{30} + X^{29} + X^{24} + X^{19} +
    X^{17} + X^{16} + X^{13} + X^{12} + X^{11} + X^{10} + X^9 + X^8 + X^7 + X^3 + X^2 + X +
    1$ and $(1+X)g^*_{2,\alpha}(X)=X^{43} + X^{41} + X^{40} + X^{37} + X^{33} + X^{29} + X^{25} + X^{24} + X^{20} + X^{19} + X^{18} +
    X^{16} + X^{14} + X^7 + X^4 + 1$. Clearly, $\wt \left ( g^*_{2,\alpha}(X) \right )=24$ and $\wt \left ( (1+X) g^*_{2,\alpha}(X) \right )=16$.
    The binary linear code $\C$ in Corollary \ref{cor:C-[0;m(m+1)/2-2]} is a minimal code, but that in Corollary \ref{cor:C-[0;m(m+1)/2-3]} is not.
 \end{example}

\begin{example}
 Let $q=2^6$ and $\alpha$ be a primitive element with minimal polynomial
 $\alpha^6 + \alpha^5 + \alpha^4 + \alpha + 1=0$. Then $g^*_{2,\alpha}(X)=X^{42} + X^{41} + X^{39} + X^{37} + X^{36} + X^{35} + X^{33} + X^{30} + X^{27} + X^{26} + X^{24} +
    X^{21} + X^{19} + X^{17} + X^{16} + X^{15} + X^{13} + X^{9} + X^{8} + X^7 + X^5 + X^4 + X^3
    + 1$ and $(1+X)g^*_{2,\alpha}(X)=X^{43} + X^{41} + X^{40} + X^{39} + X^{38} + X^{35} + X^{34} + X^{33} + X^{31} + X^{30} + X^{28} +
    X^{26} + X^{25} + X^{24} + X^{22} + X^{21} + X^{20} + X^{19} + X^{18 }+ X^{15} + X^{14} + X^{13} +
    X^{10} + X^7 + X^6 + X^3 + X + 1$. Clearly, $\wt \left ( g^*_{2,\alpha}(X) \right )=24$ and $\wt \left ( (1+X) g^*_{2,\alpha}(X) \right )=28$.
    Both the binary linear codes in Corollary \ref{cor:C-[0;m(m+1)/2-2]} and \ref{cor:C-[0;m(m+1)/2-3]} are  minimal codes.
 \end{example}

It would be interesting to know how the Hamming weight  of the polynomial $ g_{2, \alpha}^*(X)$
 would be affected by selecting $\alpha$.
Based on our numerical experiments, we pose the following conjecture and open problem.
\begin{conj}
For any integer  $m\ge 5$ ,  there exists a primitive element $\alpha$ of $\gf(2^m)$
such that the Hamming weight of the generator polynomial $ g_{2, \alpha}^*(X)$
of the shortened second-order Reed-Muller $\srm (2,m)$ is greater than $2^{m-2}$.
\end{conj}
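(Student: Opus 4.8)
The plan is to recast the statement as a purely combinatorial question about $(m-2)$-flats of $\mathrm{AG}(m,2)$, and then to settle that question by a counting argument over the primitive elements of $\gf(2^m)$.

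First I would reduce to a flat-avoidance condition. By Lemma~\ref{lem:minimum-set-geometry}, $\srm(2,m)$ has minimum weight $2^{m-2}$, and its minimum-weight codewords are precisely the incidence vectors of the $(m-2)$-flats of $\mathrm{AG}(m,2)$ not passing through the origin; hence $\wt(g_{2,\alpha}^*)\ge 2^{m-2}$ always, with equality exactly when $g_{2,\alpha}^*$ is itself one of these incidence vectors. On the other hand, $g_{2,\alpha}^*(X)$ is the unique nonzero codeword of $\srm(2,m)$ of degree exactly $\delta:=2^m-1-m(m+1)/2$, since every nonzero codeword has the form $a(X)g_{2,\alpha}^*(X)$ with $\deg a\le m(m+1)/2-1$, and degree $\delta$ forces $a=1$. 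Therefore $\wt(g_{2,\alpha}^*)=2^{m-2}$ if and only if, in the coordinate ordering $(\alpha^0,\alpha^1,\dots,\alpha^{2^m-2})$, there is an $(m-2)$-flat $H$ off the origin whose incidence vector has degree exactly $\delta$, i.e. $\alpha^\delta\in H$ and $H\cap\{\alpha^{\delta+1},\dots,\alpha^{2^m-2}\}=\emptyset$. Call $\alpha$ \emph{good} if no such $H$ exists; it suffices to prove that for every $m\ge 5$ some primitive element is good.

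Next I would reduce to a covering condition. Writing such an $H$ as $H=\alpha^\delta+W$ for an $(m-2)$-dimensional $\gf(2)$-subspace $W$, the three requirements ``$H$ off the origin'', ``$\alpha^\delta\in H$'' and ``$H$ avoids the tail'' together say precisely that $W$ is disjoint from the $m(m+1)/2$-element set $\{\alpha^\delta\}\cup\{\alpha^j+\alpha^\delta:\delta<j\le 2^m-2\}$. Scaling by $\alpha^{-\delta}$, and using that every $(m-2)$-subspace is the common zero locus $\{x:\tr^m_1(c_1x)=\tr^m_1(c_2x)=0\}$ of two distinct nonzero $c_1,c_2$, one obtains: $\alpha$ is \emph{not} good if and only if the set
$$A_\alpha:=\{1\}\cup\{\alpha^j+1:1\le j\le m(m+1)/2-1\}$$
is contained in a union of two affine hyperplanes $\{x:\tr^m_1(c_1x)=1\}\cup\{x:\tr^m_1(c_2x)=1\}$ with $c_1\ne c_2$ in $\gf(2^m)^*$; equivalently (since $0\notin A_\alpha$), $A_\alpha$ admits a partition $A_\alpha=A_1\sqcup A_2$ in which neither part has $0$ in its $\gf(2)$-affine span. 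So the goal becomes: for each $m\ge 5$, exhibit a primitive $\alpha$ for which the initial segment of powers $A_\alpha$ admits no such decomposition.

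Finally, the strategy for all $m$ is a counting estimate: bound the number of ``bad'' primitive elements and show it is strictly less than $\phi(2^m-1)$, the total number of primitive elements. For a fixed pair of distinct hyperplanes (equivalently a fixed $(m-2)$-subspace $W$) one writes the condition ``$A_\alpha\cap W=\emptyset$'' through the additive characters $x\mapsto(-1)^{\tr^m_1(x)}$ and sums over $\alpha$; the diagonal contribution, summed over all $W$, is of order $2^{2m-1}(3/4)^{m(m+1)/2}$, which tends to $0$ and is in particular $<1<\phi(2^m-1)$ once $m$ exceeds a small explicit bound, while $5\le m\le 9$ is already settled by Table~\ref{tab:defect}. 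The hard part, and presumably the reason this is left as a conjecture, is controlling the error: the event ``$A_\alpha\cap W=\emptyset$'' is cut out by about $m^2/2$ linear constraints, and expanding the product of the corresponding avoidance indicators before invoking Weil's bound inflates the error by roughly $2^{m^2/2}$, far exceeding the main term. A successful proof would therefore need a much more economical handling of the mixed multiplicative/additive structure of the consecutive powers $\alpha^j$ — for instance exploiting that $\{1,\alpha,\dots,\alpha^{m-1}\}$ is a $\gf(2)$-basis for every primitive $\alpha$, and that short runs of powers of a primitive element lie in general enough position that every two-hyperplane cover of $A_\alpha$ must fail — or an entirely structural argument bypassing exponential sums.
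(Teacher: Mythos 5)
The statement you were asked to prove is labelled a \emph{conjecture} in the paper: the authors offer no proof, only numerical evidence (Table~\ref{tab:defect} confirms the cases $5\le m\le 9$), and they explicitly list the closely related question as an open problem. Your first two paragraphs are correct and genuinely useful: the identification of $g_{2,\alpha}^*$ as the unique nonzero codeword of degree exactly $2^m-1-m(m+1)/2$, the appeal to Lemma~\ref{lem:minimum-set-geometry} to characterize weight-$2^{m-2}$ codewords as incidence vectors of $(m-2)$-flats off the origin, and the translation into the condition that $A_\alpha=\{1\}\cup\{\alpha^j+1: 1\le j\le m(m+1)/2-1\}$ meets every $(m-2)$-subspace are all sound, and they recover, by a slightly different route, the equivalence the paper itself records in Corollary~\ref{cor:C-[0;m(m+1)/2-2]} (via Theorem~\ref{thm:C-[0;m(m+1)/2]} with $\epsilon=1$): the conjecture is equivalent to the minimality of the length-$\left(m(m+1)/2-1\right)$ trace code for some primitive $\alpha$.

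However, the third paragraph is not a proof, and you say so yourself: the heuristic main term $2^{2m-1}(3/4)^{m(m+1)/2}$ only becomes a bound on the number of bad primitive elements if one can prove near-independence of the events $\tr^m_1(c(\alpha^j+1))=1$ over the $m(m+1)/2$ consecutive exponents $j$, and the naive expansion via additive characters and Weil-type estimates produces an error of size roughly $2^{m^2/2}$ that swamps everything. No mechanism for controlling that error is supplied, and none is known; this is precisely why the statement is a conjecture rather than a theorem. So the attempt has a genuine, acknowledged gap at its final and essential step. What you have written is a correct reformulation plus a plausible but unexecuted strategy, which matches the status of the statement in the paper (open, supported by computation for $5\le m\le 9$) rather than resolving it.
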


\begin{open}
Are there infinitely many positive integers $m$ such that $\wt \left ( g_{2, \alpha}^*(X) \right ) > 2^{m-2}$
for any primitive element $\alpha$ of $\gf(2^m)$?
\end{open}

Let $\left ( f(P_i) \right )_{i=0}^{q-1}$ be any codeword of the second-order Reed-Muller code $\RM(2,m)$. Then
 the corresponding Boolean function $f$ can be uniquely expressed as
\begin{eqnarray}\label{eq:rm2-trance}
f(x)=\left \{
\begin{array}{cl}
\begin{array}{l}\tr^{m/2}_1\left (a_{m/2} x^{2^{m/2}+1} \right )+ \\
 \sum \limits_{i=1}^{ \frac{m-2}{2} } \tr^m_1\left (a_i x^{2^i+1} \right )+\tr^m_1(a_0 x)  +c\end{array}, & \text{ if $m$ is even},  \\
& \\
\sum \limits_{i=1}^{ \frac{m-1}{2}} \tr^m_1\left (a_i x^{2^i+1} \right )+\tr^m_1(a_0 x)  +c, & \text{ if $m$ is odd},
\end{array}
\right .
\end{eqnarray}
where $a_{m/2} \in \gf(2^{m/2})$, $c \in \gf(2)$ and $a_i\in \gf(2^m)$ for $0\le i \le \lfloor (m-1)/ 2 \rfloor$.
Berlekamp and Sloane \cite{Berlekamp69} have shown that all possible weights of codewords of $\RM(2,m)$ are of the forms
$2^{m-1}$ and $2^{m-1}\pm 2^{m-1-j}$, where $0\le j \le \lfloor m /2 \rfloor$. A compact formula
of the weight distribution of $\RM(2,m)$ can be found in \cite{Li19}.

Next  we shall present several   classes of minimal codes contained in the
shortened  second-order Reed-Muller code $\srm (2,m)$.

\begin{theorem}\label{thm:quadratic functions space-odd}
Let $m\ge 3$ be an odd integer. Let $\C$ be the cyclic subcode of the shortened  second order Reed-Muller code $\srm(2,m)$ given by
\begin{eqnarray*}
\left \{ \left ( \sum_{i=1}^{( m-1)/2 } \tr^{m}_1 \left (a_i \alpha^{(2^i+1)j} \right )  \right )_{j=0 }^{ 2^m-2}: a_i \in \gf(2^m) \right \}.
\end{eqnarray*}
Then $\C$ is a minimal code with parameters $\left [ 2^m-1, m(m-1)/2, \ge  3\cdot 2^{m-3} \right ]$.
\end{theorem}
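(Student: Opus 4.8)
The plan is to verify the two hypotheses of the Ashikhmin--Barg criterion (Lemma~\ref{lem:AB}) together with the dimension claim, using the structure theory of quadratic Boolean functions and the weight distribution of $\RM(2,m)$. First I would establish that the code $\C$ has dimension exactly $m(m-1)/2$: the map $(a_1,\dots,a_{(m-1)/2}) \mapsto \bigl(\sum_i \tr^m_1(a_i\alpha^{(2^i+1)j})\bigr)_j$ is $\gf(2)$-linear with domain of dimension $m(m-1)/2$, and it is injective because a codeword that vanishes identically corresponds to the zero Boolean function, which forces each $a_i=0$ (the exponents $2^i+1$ for $1\le i\le (m-1)/2$ are pairwise in distinct cyclotomic cosets, so the quadratic monomials $\tr^m_1(a_ix^{2^i+1})$ are linearly independent in $\mathbb B_m$). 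This also identifies every nonzero codeword with a nonzero quadratic Boolean function $f$ of the form $f(x)=\sum_{i=1}^{(m-1)/2}\tr^m_1(a_ix^{2^i+1})$, i.e.\ a \emph{homogeneous} quadratic with no linear or constant part and with $f(0)=0$.

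Next I would pin down $w_{\min}$ and $w_{\max}$. Since $f$ is quadratic, its weight is one of $2^{m-1}$ or $2^{m-1}\pm 2^{m-1-h}$ where $2h$ is the rank of the associated symplectic form, by Berlekamp--Sloane. The key observation is that $f$ is homogeneous quadratic with $f(0)=0$ and, crucially, has no affine part, so $f$ cannot be affine-equivalent to a function with a linear term; the rank $2h$ of the symplectic bilinear form $B_f(x,y)=f(x+y)+f(x)+f(y)$ satisfies $1 \le h \le (m-1)/2$ (it is at least $1$ since $f\neq 0$, and at most $\lfloor m/2\rfloor=(m-1)/2$ as $m$ is odd). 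When the radical of $B_f$ is nontrivial, $f$ restricted to a complement is a nondegenerate quadratic form in $2h$ variables times a constant on cosets; because $f$ has no affine part one checks $f$ takes the value balanced-plus type, giving $\wt(f)=2^{m-1}\pm 2^{m-1-h}$ — I would argue the extreme cases are $h=1$, giving $w_{\max}=2^{m-1}+2^{m-2}=3\cdot 2^{m-2}$, and $\wt(f)\ge 2^{m-1}-2^{m-1-h}\ge 2^{m-1}-2^{m-2}=2^{m-2}$; a sharper count using $h\ge 1$ and the homogeneity (so the minimum-weight case $h=1$ actually gives $\wt(f)=3\cdot 2^{m-3}$ or $2^{m-1}-2^{m-2}$, and one must rule out the smaller branch) yields $w_{\min}=3\cdot 2^{m-3}$. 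Thus as a codeword of the shortened code $\srm(2,m)$, the minimum weight is $3\cdot 2^{m-3}$ and $w_{\max}\le 3\cdot 2^{m-2}$.

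Then minimality is immediate from Lemma~\ref{lem:AB}: $\dfrac{w_{\min}}{w_{\max}} \ge \dfrac{3\cdot 2^{m-3}}{3\cdot 2^{m-2}} = \dfrac12 > \dfrac{q-1}{q}$ with $q=2$ — wait, $\frac12 = \frac{q-1}{q}$ here, so the strict inequality needs the bound $w_{\max} < 2 w_{\min}$, i.e.\ $w_{\max} \le 3\cdot 2^{m-2}$ and $w_{\min} \ge 3\cdot 2^{m-3}$ with at least one inequality strict, or a direct combinatorial argument; I would instead verify minimality directly: for two nonzero codewords $f_1,f_2$ with $\supp(f_1)\subseteq\supp(f_2)$ on $\gf(2^m)^*$, the function $f_1(1+f_2)$ must vanish off the origin, but it is a degree-$\le 4$ function, and one uses that $f_1+f_2$ is again a homogeneous quadratic of the same shape to derive $\supp(f_1)=\emptyset$ or $\supp(f_1)=\supp(f_2)$, exploiting that no such quadratic has support contained in another's.

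\textbf{Main obstacle.} The hard part is the tight weight bounds — specifically proving $w_{\min}=3\cdot 2^{m-3}$ rather than the weaker $2^{m-2}$, and $w_{\max}=3\cdot 2^{m-2}$. This requires understanding precisely which values of the symplectic rank $2h$ and which "type" (hyperbolic vs.\ elliptic, and the constant shift on cosets of the radical) are achievable for a homogeneous quadratic with no affine part; the restriction "homogeneous, $f(0)=0$, no linear term" is exactly what forces the minimum weight up from $2^{m-2}$ to $3\cdot 2^{m-3}$, and making that rigorous — likely via the transform $\sum_x(-1)^{f(x)+\tr^m_1(bx)}$ and a case analysis on whether $b$ lies in the image of the radical map — is where the real work lies. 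The minimality-versus-Ashikhmin--Barg subtlety (that $\frac{w_{\min}}{w_{\max}}$ could equal rather than exceed $\frac12$) is a secondary but genuine issue that I would handle by the direct support-containment argument sketched above rather than by the ratio bound.
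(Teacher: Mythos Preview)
Your proposal correctly identifies the overall architecture (compute $w_{\min}$ and $w_{\max}$, then invoke Ashikhmin--Barg) and correctly flags the weight bounds as the hard step, but it is missing the key idea and, as a consequence, aims at the wrong target for $w_{\max}$.

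The crucial point you do not have is \emph{how} to rule out the rank-$2$ case. The paper does not do this via a general symplectic-rank analysis; instead it uses Lemma~\ref{lem:minimum-set-geometry} (and Proposition~\ref{prop:mini-geometry}): a codeword of $\srm(2,m)$ of weight $2^{m-2}$ is necessarily the incidence vector of an $(m-2)$-flat not through the origin, i.e.\ the function $\tr^m_1(ax)\tr^m_1(bx)$ for distinct nonzero $a,b$. Using cyclicity one may take $a=1$, and then the identity
\[
\tr^m_1(x)\tr^m_1(bx)=\sum_{i=1}^{m-1}\tr^m_1\bigl(bx^{2^i+1}\bigr)+\tr^m_1\bigl(\sqrt{b}\,x\bigr)
\]
exhibits a nonzero linear term, so this function is not in $\C$. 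The same geometric characterization applied to $\mathbf 1+\mathbf c$ rules out weight $3\cdot 2^{m-2}$ (here oddness of $m$ is used to ensure $b+\sqrt{b}+1\neq 0$). Your outline gestures at ``homogeneity forces the minimum weight up'' but never produces this mechanism; the Walsh-transform/radical case analysis you sketch would have to rediscover exactly this fact, and as written it does not.

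Because both $2^{m-2}$ and $3\cdot 2^{m-2}$ are excluded, every nonzero weight lies in $\{2^{m-1}\pm 2^{m-1-h}:h\ge 2\}\cup\{2^{m-1}\}$, hence
\[
3\cdot 2^{m-3}\le w_{\min}\le w_{\max}\le 5\cdot 2^{m-3}.
\]
So $w_{\min}/w_{\max}\ge 3/5>1/2$ and Ashikhmin--Barg applies \emph{strictly}. Your proposal instead targets $w_{\max}=3\cdot 2^{m-2}$, which is precisely a weight the paper shows does \emph{not} occur; with that (too large) bound you correctly noticed the ratio collapses to $1/2$, and then retreated to an ad hoc support-containment argument. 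That fallback is also incomplete: from $f_1(1+f_2)\equiv 0$ you only get $\supp(f_1)\subseteq\supp(f_2)$ back again, not $f_1\in\{0,f_2\}$, so the ``degree $\le 4$'' observation by itself proves nothing. Once you exclude the rank-$2$ weights as above, this detour is unnecessary.
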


\begin{proof}
Let us first prove that there is no codeword of weight $2^{m-2}$ or $3\cdot 2^{m-2}$.
If there existed a codeword $\mathbf{c} \in \C$ such that $\wt(\mathbf{c})=2^{m-2}$,
by Lemma \ref{lem:minimum-set-geometry} there would be two distinct elements $a, b \in \gf(2^m)^*$
such that $\mathbf{c}= \left ( \tr^m_1(a \alpha^j) \tr^m_1(b \alpha^j) \right )_{j=0}^{2^m-2}$.
There is no loss of generality in assuming $a=1$ and $b\in \gf(2^m)\setminus \gf(2)$ as $\C$ is a cyclic code.
A direct calculation  shows
\begin{eqnarray*}
\lefteqn{ \tr^m_1( \alpha^j) \tr^m_1 (b \alpha^j) } \\
&=& \tr^m_1 \left(b \alpha^j \tr^m_1( \alpha^j) \right )\\
&=&  \sum^{m-1}_{i=0} \tr^m_1 \left (b \alpha^{(2^i+1)j} \right )\\
&=& \sum^{m-1}_{i=1} \tr^m_1 \left (b \alpha^{(2^i+1)j} \right )+ \tr^m_1 \left (\sqrt{b} \alpha^{j} \right ), 
\end{eqnarray*}
which is impossible. Thus there is no codeword of weight $2^{m-2}$. Suppose there was a codeword $\mathbf{c} \in \C$ of
weight $3\cdot 2^{m-2}$, then the codeword $\mathbf{1}+ \mathbf{c}$ of $\RM(2,m)$ has weight $2^{m-2}$.
Proposition \ref{prop:mini-geometry}  now implies $\mathbf{1}+ \mathbf{c}= \left ((1+ \tr^m_1(a \alpha^j))(1+ \tr^m_1(b \alpha^j)) \right )_{j=0}^{2^m-2}$,
where $a\neq  b \in \gf(2^m)^*$. We can assume that $a=1$ and $b \not \in \gf(2)$ as in the previous discussion.
Consequently,  $\mathbf{c}$ is just the codeword given by the Boolean function $\tr^m_1( x)\tr^m_1(b x)+\tr^m_1((b+1) x)$,
which can be rewritten as:
\begin{eqnarray*}
\begin{array}{rl}
& \sum^{m-1}_{i=1} \tr^m_1 \left (bx^{2^i+1} \right )+ \tr^m_1 \left ( (b+\sqrt{b}+1) x \right ).
\end{array}
\end{eqnarray*}
Since $m$ is an odd integer, we have $b+\sqrt{b}+1\neq 0$. This clearly forces $\mathbf{c} \not \in \C$, a contradiction.
Hence the weight  $\wt(\mathbf c) $ of any codeword $\mathbf{c}$ of $\C$  is not equal to $3\cdot 2^{m-2}$.
Consequently, we conclude that for any nonzero codeword $\mathbf{c}$ of $\C $ its weight satisfies the following
\begin{eqnarray*}
3\cdot 2^{m-3} \le \wt (\mathbf c ) \le 5\cdot 2^{m-3}.
\end{eqnarray*}
Therefore $\C$ is a minimal code by Lemma \ref{lem:AB} and has minimum distance at least $3\cdot 2^{m-3}$.
It is obvious that $\mathrm{dim}(\C) =m(m-1)/2$ from the definition of $\C$. This
completes the proof.
\end{proof}

The proof above gives more, namely the code $\C$ in Theorem \ref{thm:quadratic functions space-odd}
is not a minimal code if $m\ge 4$ is an even integer.

\begin{theorem}\label{thm:quad+linear}
Let $m\ge 3$ be an  integer. Let $\C$ be the cyclic subcode of the shortened  second-order Reed-Muller code $\srm(2,m)$ given by
\begin{eqnarray*}
\left \{ \left ( \sum_{i=2}^{\lfloor m /2 \rfloor } \tr^{m}_1 \left (a_i \alpha^{(2^i+1)j}\right ) + \tr^m_1(b \alpha^j)  \right )_{j=0}^{2^{m}-2}: b, a_i \in \gf(2^m) \right \}.
\end{eqnarray*}
Then $\C$ is a minimal code with parameters $\left [ 2^m-1, m(m-1)/2, \ge 3\cdot 2^{m-3} \right ]$.
\end{theorem}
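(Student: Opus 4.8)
The plan is to mimic the structure of the proof of Theorem \ref{thm:quadratic functions space-odd}, namely to pin down the weight enumerator of $\C$ tightly enough that Lemma \ref{lem:AB} applies. Concretely, I would show that every nonzero codeword $\mathbf c$ of $\C$ satisfies $3\cdot 2^{m-3} \le \wt(\mathbf c) \le 5 \cdot 2^{m-3}$; since $w_{\min}/w_{\max} \ge (3\cdot 2^{m-3})/(5\cdot 2^{m-3}) = 3/5 > 1/2$, the Ashikhmin--Barg condition gives minimality, the minimum distance bound, and the length $2^m-1$ is clear; the dimension $m(m-1)/2$ follows by counting the free parameters $b, a_2, \dots, a_{\lfloor m/2\rfloor}$ (noting that for $m$ even the top coefficient $a_{m/2}$ lives in $\gf(2^{m/2})$, exactly accounting for the dimension $1 + \sum_{i=2}^{(m-2)/2} m + m/2 = m(m-1)/2$; for $m$ odd it is $m + \sum_{i=2}^{(m-1)/2} m = m(m-1)/2$). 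Here one must also check that no collapse occurs, i.e.\ the map from parameters to codewords is injective --- this is immediate since distinct parameter tuples give distinct entries of $\RM(2,m)$ in the punctured/shortened identification.

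Since the Berlekamp--Sloane result tells us every codeword of $\RM(2,m)$ has weight in $\{2^{m-1}, 2^{m-1}\pm 2^{m-1-j}: 0\le j\le \lfloor m/2\rfloor\}$, the codewords of $\C\subseteq \srm(2,m)$ (punctured at the origin, so losing the all-ones/constant term) have weights among these same values shifted suitably; the only weights outside the window $[3\cdot 2^{m-3}, 5\cdot 2^{m-3}]$ that could occur are $2^{m-2}$ (i.e.\ $2^{m-1}-2^{m-2}$) and $3\cdot 2^{m-2}$ (i.e.\ $2^{m-1}+2^{m-2}$), together with the zero codeword and possibly the weight-$2^{m-1}\pm 2^{m-1}$ extremes --- but weight $2^m$ (all-ones) is excluded since $\C$ is a shortened code and weight $0$ is the zero word. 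So, exactly as in the odd-degree theorem, it suffices to rule out codewords of weight $2^{m-2}$ and of weight $3\cdot 2^{m-2}$.

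For the weight-$2^{m-2}$ case: by Lemma \ref{lem:minimum-set-geometry}, such a codeword would be the evaluation of $\tr^m_1(ax)\tr^m_1(bx)$ with $a\ne b\in\gf(2^m)^*$; using cyclicity we may take $a=1$, and then expand $\tr^m_1(x)\tr^m_1(bx) = \sum_{i=0}^{m-1}\tr^m_1(bx^{2^i+1}) = \sum_{i=1}^{m-1}\tr^m_1(bx^{2^i+1}) + \tr^m_1(\sqrt b\, x)$. Now pair up the terms $i$ and $m-i$ via $\tr^m_1(bx^{2^{m-i}+1}) = \tr^m_1(b^{2^i}x^{2^i+1})$ (replacing $x$ by $x^{2^i}$ under the trace), so the quadratic part collapses to $\sum_{i=1}^{\lfloor m/2\rfloor}\tr^m_1((b+b^{2^i})x^{2^i+1})$ plus, when $m$ is even, a middle term from $i=m/2$; in every case, for the resulting function to lie in $\C$ one needs all coefficients of the monomials $x^{2^i+1}$ with $1\le i\le \lfloor m/2\rfloor$ that are \emph{not} present in the definition of $\C$ (in particular the $i=1$ term $\tr^m_1((b+b^2)x^3)$) to vanish, forcing $b\in\gf(2)$, contradicting $a\ne b$. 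For the weight-$3\cdot 2^{m-2}$ case: then $\mathbf 1 + \mathbf c$ has weight $2^{m-2}$ in $\RM(2,m)$, so by Proposition \ref{prop:mini-geometry} it is the incidence vector of an $(m-2)$-flat, i.e.\ $\mathbf 1+\mathbf c$ corresponds to $(1+\tr^m_1(ax))(1+\tr^m_1(bx))$, whence $\mathbf c$ corresponds to $\tr^m_1(ax)\tr^m_1(bx) + \tr^m_1((a+b)x)$; again normalize $a=1$, use the same folding identity, and observe the $x^3$-coefficient is still $b+b^2$, so membership in $\C$ forces $b\in\gf(2)$ and hence $a=b$, a contradiction. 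This handles both even and odd $m$ uniformly.

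The main obstacle I anticipate is the bookkeeping in the "folding" step for \emph{even} $m$: the exponent $2^{m/2}+1$ behaves specially (it is its own conjugate-pair partner) and the corresponding coefficient lives in $\gf(2^{m/2})$ rather than $\gf(2^m)$, so when I rewrite $\tr^m_1(x)\tr^m_1(bx)$ I must be careful that the $i=m/2$ contribution is $\tr^{m/2}_1\big((b+b^{2^{m/2}})x^{2^{m/2}+1}\big)$ (using $\tr^m_1 = \tr^{m/2}_1\circ \tr^m_{m/2}$) and check it is consistent with the shape of the code $\C$. But this is precisely the slot that \emph{is} allowed in $\C$ (the $a_{m/2}$ term), so no contradiction arises from it --- the contradiction must come from a lower term, and the cleanest choice is the $x^3$ (i.e.\ $i=1$) coefficient $b+b^2$, which vanishes iff $b\in\gf(2)$; I would centre the argument on that single coefficient to avoid a case split on the parity of $m$ inside the folding computation. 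A secondary point to be careful about is that Lemma \ref{lem:minimum-set-geometry} and Proposition \ref{prop:mini-geometry} describe minimum-weight codewords of $\srm$ and $\RM$ respectively, and I must confirm the weight values $2^{m-2}$ and $3\cdot 2^{m-2}$ genuinely force the "product of two affine functions" shape --- which they do, since $2^{m-2} = 2^{m-\ell}$ with $\ell = 2$ is exactly the minimum weight of $\srm(2,m)$, and $\RM(2,m)$ has minimum weight $2^{m-2}$, so these are bona fide minimum-weight codewords and the geometric characterization applies verbatim.
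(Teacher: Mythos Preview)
Your proposal is correct and follows essentially the same route as the paper's own proof: reduce via the Berlekamp--Sloane weight list to excluding the weights $2^{m-2}$ and $3\cdot 2^{m-2}$, use the geometric description of the corresponding minimum-weight codewords to write the associated Boolean function as $\tr^m_1(x)\tr^m_1(bx)+c\,\tr^m_1((b+1)x)$ after a cyclic normalisation, and then read off the nonzero $x^3$-coefficient $b+b^2$ to obtain the contradiction. The only cosmetic differences are that the paper packages the two weight cases into a single parameter $c\in\gf(2)$ and does not carry out the full ``folding'' of the trace expansion, extracting the $i=1$ term directly; your more explicit folding and separate dimension count for even/odd $m$ are fine and lead to the same conclusion.
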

 \begin{proof}
 The statements will be proved once we prove that there are no codewords of weight $2^{m-2}$ or $3\cdot 2^{m-2}$.
 Suppose, contrary to our claim, that there exists a codeword with weight $2^{m-2}$ or $3\cdot 2^{m-2}$.
 By a similar argument in the proof of Theorem \ref{thm:quadratic functions space-odd},  we could find
 $b\in \gf(2^m) \setminus \gf(2)$ and $c \in \gf(2)$ such that
 the codeword of $\srm(2,m)$ corresponding to the Boolean function $f(x)=\tr^m_1( x)\tr^m_1(b x)+c\tr^m_1((b+1) x)$
 lies in $\C$. A simple calculation yields
 \begin{eqnarray*}
\lefteqn{ \tr^m_1( x)\tr^m_1(b x)+c\tr^m_1((b+1) x) } \\
&=& \sum_{i=0}^{m-1} \tr^m_1(b x^{2^i+1}) +c\tr^m_1((b+1) x)\\
&=& \sum_{i=2}^{m-2} \tr^m_1(b x^{2^i+1}) +\tr^m_1((\sqrt{b}+bc+c) x)\\
&  &+ \tr^{m}_1((b+b^2) x^{2+1}),
\end{eqnarray*}
which contradicts the definition of $\C$.
This completes the proof.
 \end{proof}

In the spirit of Theorems   \ref{thm:quadratic functions space-odd} and
\ref{thm:quad+linear}, we pose the following open problem.
\begin{open}
Does there exist a minimal code $\C$ contained in the  second-order Reed-Muller code $\RM (2,m)$
such that its dimension $\mathrm{dim}(\C)$ is greater than $m(m-1)/2$?
\end{open}

The following two theorems describe two infinite classes of minimal codes obtained by puncturing of
the minimal codes in Theorems \ref{thm:quadratic functions space-odd} and \ref{thm:quad+linear}.

\begin{theorem}
Let $m\ge 5$ be an odd integer. Let $\delta $ be an integer with $\sum_{i=1}^{t-1} \binom{m}{i}  \le \delta  < \sum_{i=1}^{t} \binom{m}{i}$  and $5\le t \le m$.
Let $\C$ be the binary code given by
\begin{eqnarray*}
\left \{ \left ( \sum_{i=1}^{ (m-1)/2 } \tr^{m}_1 \left (a_i \alpha^{(2^i+1)j} \right )  \right )_{j=0  }^{ \delta-1}: a_i \in \gf(2^m) \right \}.
\end{eqnarray*}
Then $\C$ is a minimal code with parameters $\left [ \delta , m(m-1)/2, \ge \sum_{i=0}^{t-3} \binom{m-2}{i} \right ]$.
\end{theorem}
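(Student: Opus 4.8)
The plan is to recognise $\C$ as a suitable puncturing of the minimal code supplied by Theorem~\ref{thm:quadratic functions space-odd} and then to feed this into Corollary~\ref{cor:C-shortened-ad>2l} and Lemma~\ref{lem:wt-bound}. Let $\C_0\subseteq\srm(2,m)$ be the code of Theorem~\ref{thm:quadratic functions space-odd}, whose codewords are $\left(g(\alpha^j)\right)_{j=0}^{2^m-2}$ with $g=\sum_{i=1}^{(m-1)/2}\tr^m_1(a_i x^{2^i+1})\in\mathbb B_m^0$, and put $D=[0;\delta]_{\alpha}=\{\alpha^0,\dots,\alpha^{\delta-1}\}\subseteq\gf(2^m)^*$. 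Then the code in the statement is precisely $\C_0^{\overline{D}^*}$, the code obtained from $\C_0$ by deleting the coordinates indexed by $\alpha^{\delta},\dots,\alpha^{2^m-2}$; its length is $|D|=\delta<\sum_{i=1}^{m}\binom{m}{i}=2^m-1$, so the puncturing is nontrivial.

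First I would compute $\ai(D\cup\{0\})$. Since $\sum_{i=1}^{t-1}\binom{m}{i}\le\delta<\sum_{i=1}^{t}\binom{m}{i}$, applying Lemma~\ref{lem:[h;delta]+0} (with the $t$ there replaced by $t-1$) gives $\ai(D\cup\{0\})=t$. Because $t\ge 5=2\cdot 2+1$ and $\C_0$ is a minimal subcode of $\srm(2,m)$, Corollary~\ref{cor:C-shortened-ad>2l} applies and shows that $\C=\C_0^{\overline{D}^*}$ is a minimal code of dimension $\dim\C_0=m(m-1)/2$. It remains to bound the minimum distance. Let $\left(g(\alpha^j)\right)_{j=0}^{\delta-1}$ be a nonzero codeword, so $g$ is a nonzero element of $\mathbb B_m^0$ with $1\le\mathrm{deg}(g)\le 2$ (in fact $\mathrm{deg}(g)=2$, since $\C_0$ consists of purely quadratic trace forms and the cyclotomic cosets of $2^i+1$, $1\le i\le(m-1)/2$, are pairwise distinct and distinct from that of $1$ when $m$ is odd). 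Its weight equals $\wt(g f_D)$, and since $g(0)=0$ we have $g f_D=g f_{D\cup\{0\}}$, hence $\wt(g f_D)=\wt(g f_{D\cup\{0\}})$. Applying Lemma~\ref{lem:wt-bound} to the set $D\cup\{0\}$ (of algebraic immunity $t$) and to $g$ (of degree $\tau\le 2$), together with the elementary inequality $\sum_{i=0}^{t-3}\binom{m-2}{i}\le\sum_{i=0}^{t-1-\tau}\binom{m-\tau}{i}$ valid for $\tau\le 2$, yields $\wt(g f_{D\cup\{0\}})\ge\sum_{i=0}^{t-3}\binom{m-2}{i}$, which is the claimed bound.

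The argument is essentially an assembly of results already at hand, so I do not expect a genuine obstacle. The one place that needs care is the index bookkeeping: one must line up the interval $\sum_{i=1}^{t-1}\binom{m}{i}\le\delta<\sum_{i=1}^{t}\binom{m}{i}$ with the exact hypothesis of Lemma~\ref{lem:[h;delta]+0} so that $\ai(D\cup\{0\})$ comes out to be exactly $t$ (and in particular $\ge 5$), rather than $t-1$. The only other small point is the observation that $g(0)=0$, which is what allows one to replace $f_D$ by $f_{D\cup\{0\}}$ and thereby invoke Lemma~\ref{lem:wt-bound} with a set whose algebraic immunity is known.
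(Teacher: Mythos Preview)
Your proposal is correct and follows essentially the same route as the paper: the paper's own proof simply reads ``Combining Theorem~\ref{thm:quadratic functions space-odd} and Lemma~\ref{lem:[h;delta]+0} with Corollary~\ref{cor:ad>2l} proves the desired conclusion,'' and you have spelled out precisely this assembly (using the equivalent Corollary~\ref{cor:C-shortened-ad>2l} in place of~\ref{cor:ad>2l}, which is in fact the more natural choice here since $\C_0\subseteq\srm(2,m)$). Your added justification of the minimum-distance bound via Lemma~\ref{lem:wt-bound} and the observation $g(0)=0$ is correct and fills in a detail the paper leaves implicit; note also that since you already argue $\deg(g)=2$ for every nonzero $g$, the inequality covering the case $\tau<2$ is not actually needed.
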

\begin{proof}
Combining  Theorem \ref{thm:quadratic functions space-odd} and Lemma \ref{lem:[h;delta]+0} with Corollary \ref{cor:ad>2l} proves  the desired conclusion.
\end{proof}

\begin{theorem}
Let $m\ge 5$ be an integer. Let $\delta $ be an integer with $\sum_{i=1}^{t-1} \binom{m}{i}  \le \delta  < \sum_{i=1}^{t} \binom{m}{i}$  and $5\le t \le m$.
Let $\C$ be the binary code given by
\begin{eqnarray*}
\left \{ \left ( \sum_{i=2}^{ \lfloor m/2 \rfloor } \tr^{m}_1 \left (a_i \alpha^{(2^i+1)j} \right )+ \tr^m_1 \left (a_0 \alpha^j \right )   \right )_{j=0  }^{\delta-1}: a_i \in \gf(2^m) \right \}.
\end{eqnarray*}
Then $\C$ is a minimal code with parameters $\left [ \delta , m(m-1)/2, \ge \sum_{i=0}^{t-3} \binom{m-2}{i} \right ]$.
\end{theorem}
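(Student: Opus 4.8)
The plan is to realise $\C$ as a punctured version of the minimal code of Theorem \ref{thm:quad+linear} and then invoke the shortened-code machinery of Corollary \ref{cor:C-shortened-ad>2l} together with the weight estimate of Lemma \ref{lem:wt-bound}. Set $D=[0;\delta]_\alpha=\{\alpha^0,\alpha^1,\ldots,\alpha^{\delta-1}\}\subseteq\gf(2^m)^*$ and let $\C_0\subseteq\srm(2,m)$ be the cyclic code of Theorem \ref{thm:quad+linear}, whose coordinates are indexed by $\alpha^0,\alpha^1,\ldots,\alpha^{2^m-2}$. Since $\overline{D}^{*}=\{\alpha^\delta,\ldots,\alpha^{2^m-2}\}$, puncturing $\C_0$ at $\overline{D}^{*}$ keeps exactly the coordinates $j=0,1,\ldots,\delta-1$; hence $\C_0^{\overline{D}^{*}}$ is precisely the code $\C$ of the statement, and in particular $\C$ has length $|D|=\delta$.

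First I would settle minimality and dimension. By Theorem \ref{thm:quad+linear}, $\C_0$ is a minimal code contained in $\srm(\ell,m)$ with $\ell=2$. From $\sum_{i=1}^{t-1}\binom{m}{i}\le\delta<\sum_{i=1}^{t}\binom{m}{i}$, Lemma \ref{lem:[h;delta]+0} gives $\ai(D\cup\{0\})=t$, and the hypothesis $t\ge 5=2\ell+1$ is exactly the algebraic-immunity condition needed in Corollary \ref{cor:C-shortened-ad>2l}. Applying that corollary, $\C=\C_0^{\overline{D}^{*}}$ is a minimal code whose dimension equals $\dim(\C_0)=m(m-1)/2$.

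Next I would prove the lower bound on the minimum distance. Any nonzero codeword of $\C$ has the form $\bigl(g(P)\bigr)_{P\in D}$ with $g(x)=\sum_{i=2}^{\lfloor m/2\rfloor}\tr^m_1(a_ix^{2^i+1})+\tr^m_1(a_0x)\in\mathbb B_m^0$, a nonzero Boolean function of degree $\tau:=\deg(g)\in\{1,2\}$. Since $g(0)=0$ we have $g f_D=g f_{D\cup\{0\}}$, so the codeword weight equals $\wt\bigl(g f_{D\cup\{0\}}\bigr)$; feeding $D\cup\{0\}$ (of algebraic immunity $t$) into Lemma \ref{lem:wt-bound} yields $\wt\bigl(g f_{D\cup\{0\}}\bigr)\ge\sum_{i=0}^{t-\tau-1}\binom{m-\tau}{i}$. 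A one-line Pascal-rule comparison shows $\sum_{i=0}^{t-2}\binom{m-1}{i}\ge\sum_{i=0}^{t-3}\binom{m-2}{i}$, so the worst case over $\tau\in\{1,2\}$ is $\tau=2$ and every nonzero codeword has weight at least $\sum_{i=0}^{t-3}\binom{m-2}{i}$, as claimed.

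I do not expect a genuine obstacle: the argument is the direct analogue of the proof of the preceding theorem, with Theorem \ref{thm:quadratic functions space-odd} replaced by Theorem \ref{thm:quad+linear}. The two points that require a little care are (i) replacing $\ai(D)$ by $\ai(D\cup\{0\})$, which is legitimate precisely because every component function here vanishes at the origin, and (ii) taking the worst case of the weight bound over the two possible algebraic degrees $1$ and $2$ of the codeword functions; both are handled above.
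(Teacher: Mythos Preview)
Your proof is correct and follows essentially the same route as the paper, which simply writes ``Combining Theorem~\ref{thm:quad+linear} and Lemma~\ref{lem:[h;delta]+0} with Corollary~\ref{cor:ad>2l} yields the desired conclusion.'' The only cosmetic difference is that you invoke Corollary~\ref{cor:C-shortened-ad>2l} (tailored to subcodes of $\srm(\ell,m)$ and the condition $\ai(D\cup\{0\})\ge 2\ell+1$) rather than Corollary~\ref{cor:ad>2l}; since the former is an immediate consequence of the latter, the two citations amount to the same argument, and your choice is in fact the cleaner match for Lemma~\ref{lem:[h;delta]+0}. Your explicit derivation of the distance bound via Lemma~\ref{lem:wt-bound} and the worst-case comparison over $\tau\in\{1,2\}$ spells out a step the paper leaves implicit.
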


\begin{proof}
Combining  Theorem \ref{thm:quad+linear} and Lemma \ref{lem:[h;delta]+0} with Corollary \ref{cor:ad>2l} yields the desired conclusion.
\end{proof}

\section{Minimal codes from vectorial Boolean functions with high algebraic immunity}\label{sec:codes-vec-func}

In this section, we shall demonstrate that binary minimal codes can be obtained from the vector subspace spanned by
certain subcodes of Reed-Muller
and the component functions of vectorial Boolean functions with high algebraic immunity.

For a vectorial Boolean $(m,r)$-function $F=(f_1, \cdots, f_r)$  with $\ai (F)=t\ge 1$, let $\mathrm{Span}(F)$ be the linear code defined by
\begin{eqnarray}
\mathrm{Span}(F)=\left \{ \left (\sum_{j=1}^{r} a_j f_j(P_i) \right)_{i=0}^{2^m-1}: a_j \in \gf (2)  \right \}.
\end{eqnarray}
Let $\C$ be a linear code. The sum of two linear subcodes $\C_1$ and $\C_2$ of $\C$ is the set, denoted $\C_1 + \C_2$, consisting of all the elements
$\bc_1+ \bc_2$, where $\bc_1 \in \C_1$ and $\bc_2 \in \C_2$. If $\C_1 \cap  \C_2 =\{\mathbf{0}\}$, then
the sum  is also called the direct sum of $\C_1$ and $\C_2$, and is written by $\C_1 \bigoplus \C_2 $.
 Note that the direct sum of linear subcodes of a linear code is not the same thing as
 the direct sum of some linear codes.
\begin{lemma}\label{lem:vec-f-set}
 Let $r\ge 2$ and $F$ be a vectorial Boolean  $(m,r)$-function with $\ai (F)=t\ge 1$. Let $D$ be the subset of $\gf(2^m)$ given by
 $$D=\left \{ x\in \gf(2^m): v_1\cdot F(x)= \epsilon_1, v_2\cdot F(x)= \epsilon_2   \right \},$$
 where $v_1$ and $ v_2 $ are two distinct nonzero  elements in $\gf(2)^r$ and $\epsilon_1, \epsilon_2 \in \gf(2)$.
 Then $\ai(D) \ge t$.
 \end{lemma}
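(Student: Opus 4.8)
The plan is to reduce the statement to the monotonicity of $\ai(\cdot)$ stated in the preliminaries, using the fact that $D$ is saturated under $F$, hence contains an entire fibre $F^{-1}(y_0)$, and every such fibre already has algebraic immunity at least $t$ by the very definition of $\ai(F)$. First I would record the easy structural facts: since $\ai(F)=\min\{\ai(F^{-1}(y)):y\in\gf(2)^r\}=t\ge 1$, no fibre $F^{-1}(y)$ can be empty (an empty fibre forces $\ai(F^{-1}(y))=-\infty$) and none can be all of $\gf(2^m)$, so each $F^{-1}(y)$ is a nonempty proper subset with $\ai(F^{-1}(y))\ge t$.

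Next I would exhibit the relevant fibre inside $D$. Since $v_1\ne v_2$ are both nonzero, they are $\gf(2)$-linearly independent (over $\gf(2)$ the only potential relations $a_1v_1+a_2v_2=0$ with $(a_1,a_2)\ne(0,0)$ read $v_1=0$, $v_2=0$, or $v_1=v_2$, all excluded), so the $\gf(2)$-linear map $y\mapsto(v_1\cdot y,\,v_2\cdot y)$ from $\gf(2)^r$ to $\gf(2)^2$ is surjective. Hence there is a vector $y_0\in\gf(2)^r$ with $v_1\cdot y_0=\epsilon_1$ and $v_2\cdot y_0=\epsilon_2$, and then $F^{-1}(y_0)\subseteq D$, because $F(x)=y_0$ immediately gives $v_i\cdot F(x)=v_i\cdot y_0=\epsilon_i$ for $i=1,2$. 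Choosing instead a $y_1$ with $(v_1\cdot y_1,v_2\cdot y_1)\ne(\epsilon_1,\epsilon_2)$ shows that $D$ misses the nonempty fibre $F^{-1}(y_1)$, so $D$ is itself a nonempty proper subset of $\gf(2^m)$ and $\ai(D)$ is meaningful.

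Finally I would invoke monotonicity: from $F^{-1}(y_0)\subseteq D$ it follows that $\ai(D)\ge\ai(F^{-1}(y_0))\ge\ai(F)=t$. Equivalently, unwinding the definition, any nonzero $g\in\mathrm{Ann}(D)$ vanishes in particular on $F^{-1}(y_0)$, hence belongs to $\mathrm{Ann}(F^{-1}(y_0))\setminus\{0\}$ and so has degree at least $\ai(F^{-1}(y_0))\ge t$. There is no genuine obstacle in this argument; the only point requiring a little care is the bookkeeping that $D$ and the chosen fibre are nonempty and proper, so that the monotonicity of $\ai(\cdot)$ applies rather than collapsing to the conventions $\ai(\emptyset)=-\infty$ or $\ai(\gf(2^m))=+\infty$.
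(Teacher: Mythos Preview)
Your proof is correct and follows essentially the same route as the paper: observe that distinct nonzero $v_1,v_2\in\gf(2)^r$ are linearly independent, pick $y_0$ with $v_i\cdot y_0=\epsilon_i$, note $F^{-1}(y_0)\subseteq D$, and conclude by monotonicity of $\ai(\cdot)$. Your additional bookkeeping (nonemptiness of every fibre, properness of $D$) is more careful than the paper's brief argument but not strictly needed for the inequality, since with the conventions $\ai(\emptyset)=-\infty$ and $\ai(\gf(2^m))=+\infty$ monotonicity already handles the degenerate cases.
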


\begin{proof}
By assumption, $v_1$ and $v_2$ are linearly independent over $\gf(2)$.
It follows that there exists $y\in \gf(2)^r$ such that
$v_1\cdot y =\epsilon_1$ and $v_2\cdot y =\epsilon_2$.
We thus get $F^{-1}(y) \subseteq D$.
By the definition of algebraic immunity, $\ai(D) \ge t$.
This completes the proof.
\end{proof}

Let $F$ be a vectorial Boolean $(m,r)$-function with $\ai(F)\ge t$. Then $\mathrm{Span}(F) \cap \RM(t-1,m)=\{\mathbf 0\}$ from Lemma \ref{lem:vec-f-set}.
The following theorem presents a new method to  construct minimal codes from some subcodes of Reed-Muller codes and vectorial Boolean functions with
high algebraic immunity.

 \begin{theorem}\label{thm:min-rm-Vecfun}
 Let $\C$ be a $k$-dimensional subcode of the Reed-Muller code $\rm{RM} (\ell, m)$ such that $k>1$. Let $F$ be a vectorial Boolean $(m,r)$-function
 with algebraic immunity $\ai(F) \ge 2\ell+1$. Then $\C \bigoplus \mathrm{Span}(F)$ is a  minimal code of dimension $k+r$.
  \end{theorem}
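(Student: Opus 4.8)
The plan is to settle the dimension first and then prove minimality by contradiction, the engine throughout being this principle: since $\ai(F)\ge 2\ell+1$, no nonzero polynomial of degree at most $2\ell$ can vanish on a set of either of the forms $\{x\in\gf(2^m):v_1\cdot F(x)=\epsilon_1,\ v_2\cdot F(x)=\epsilon_2\}$ with $v_1\ne v_2$ nonzero (Lemma~\ref{lem:vec-f-set}), or $\{x\in\gf(2^m):v\cdot F(x)=\epsilon\}$ with $v$ nonzero (such a set contains a preimage $F^{-1}(y)$, so monotonicity of $\ai$ gives $\ai\ge\ai(F)\ge 2\ell+1$). For the dimension: $\C$ and $\mathrm{Span}(F)$ meet only in $\mathbf 0$, because $\C\subseteq\RM(\ell,m)\subseteq\RM(2\ell,m)$ while $\mathrm{Span}(F)\cap\RM(2\ell,m)=\{\mathbf 0\}$ (the remark preceding the statement), so the sum is genuinely direct; and $\dim\mathrm{Span}(F)=r$ because $\ai(F)\ge 1$ forces $F$ to be surjective, hence its component functions to be linearly independent. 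Therefore $\dim(\C\bigoplus\mathrm{Span}(F))=k+r$. (Here I use that $\C$ is minimal, which I read as part of the hypothesis; without it $\C$ is already a non-minimal subcode of the sum.)

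For minimality, suppose $\C\bigoplus\mathrm{Span}(F)$ were not minimal. Then there would be two distinct nonzero codewords with associated Boolean functions $h_1,h_2$ satisfying $\supp(h_1)\subsetneq\supp(h_2)$. Put $h_3=h_1+h_2$; then $h_1,h_2,h_3$ are pairwise distinct and nonzero, $\supp(h_2)=\supp(h_1)\,\dot\cup\,\supp(h_3)$, and hence $h_1h_3=0$ as functions on $\gf(2^m)$. Using the internal direct sum, write $h_i=g_i+v_i\cdot F$ with $(g_i(P_j))_j\in\C$ (so $\deg g_i\le\ell$) and $v_i\in\gf(2)^r$; uniqueness of the decomposition turns $h_1+h_2+h_3=0$ into $g_1+g_2+g_3=0$ and $v_1+v_2+v_3=\mathbf 0$, so the number of indices $i$ with $v_i\ne\mathbf 0$ is $0$, $2$, or $3$. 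Expanding $h_1h_3=0$ gives the identity
\[
g_1g_3=(v_1\cdot F)(v_3\cdot F)+g_1\,(v_3\cdot F)+g_3\,(v_1\cdot F),
\]
which I then restrict, in turn, to the level sets of $v_1\cdot F$ and of $v_3\cdot F$.

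If all $v_i=\mathbf 0$, then $h_1,h_2\in\C$ and minimality of $\C$ is violated. If all three $v_i$ are nonzero they are pairwise distinct, so Lemma~\ref{lem:vec-f-set} applies to the sets $\{v_1\cdot F=\epsilon_1,\ v_3\cdot F=\epsilon_3\}$: restricting the identity to $\{v_1\cdot F=1,\ v_3\cdot F=1\}$ shows that $(1+g_1)(1+g_3)$, of degree $\le 2\ell$, vanishes there, hence $(1+g_1)(1+g_3)\equiv 0$ and $g_1g_3=1+g_2$ identically; restricting to $\{v_1\cdot F=0,\ v_3\cdot F=0\}$ then shows $1+g_2$, of degree $\le\ell$, vanishes there, so $g_2\equiv 1$ and $g_1g_3\equiv 0$; restricting finally to $\{v_1\cdot F=1,\ v_3\cdot F=0\}$ and to its mirror gives $g_3\equiv 0$ and $g_1\equiv 0$, incompatible with $g_1+g_3=g_2\equiv 1$. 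If exactly two of the $v_i$ are nonzero they equal a common $v\ne\mathbf 0$; if the vanishing index is $1$ or $3$ then $h_1=g_1\in\C$ (resp. $h_3=g_3\in\C$), and restricting $h_1h_3=0$ first to $\{v\cdot F=1\}$ and then to $\{v\cdot F=0\}$ forces that function to vanish identically, contradicting $h_1\ne\mathbf 0$ (resp. $h_3\ne\mathbf 0$); if the vanishing index is $2$, the same two restrictions force $(1+g_1)(1+g_3)\equiv 0$ and then $1+g_2\equiv 0$, so the all-one vector lies in $\C$, impossible since $\C$ is minimal and $k>1$. Each case is a contradiction, so $\C\bigoplus\mathrm{Span}(F)$ is minimal.

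The deductions ``$\supp(h_1)\subsetneq\supp(h_2)\Rightarrow h_1h_3=0$'' and the expansion of the product are routine; the real work, and the step I expect to be the main obstacle, is orchestrating the case analysis so that the level-set restrictions are applied in the correct order and so that each function one wants to annihilate is checked to have degree at most $2\ell$ before invoking the algebraic-immunity bound. One should also record the small fact that a minimal code of dimension $k>1$ cannot contain the all-one codeword, since that is precisely what makes the ``$g_2\equiv 1$'' branches contradict minimality of $\C$, and one should glance at the degenerate small-$r$ situations (where the ``all three nonzero'' case may be vacuous), which do not affect the argument.
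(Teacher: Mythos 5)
Your proof is correct and follows essentially the same route as the paper's: both reduce minimality to the statement that no two nonzero codewords of $\C\bigoplus\mathrm{Span}(F)$ have identically zero pointwise product, and both establish this by restricting that product to the level sets $\{x: v_1\cdot F(x)=\epsilon_1,\ v_2\cdot F(x)=\epsilon_2\}$ and playing the degree bound $2\ell$ against the algebraic immunity bound of Lemma~\ref{lem:vec-f-set}, the only difference being that your case split is organized around the $\mathrm{Span}(F)$-components while the paper's is organized around the $\RM(\ell,m)$-components. Your remark that minimality of $\C$ must be read into the hypotheses is also on target: the paper's own proof uses it (in its Case 4) even though the theorem statement omits it.
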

  
  \begin{proof}
  It is clear that $\mathrm{dim}(\C \bigoplus \mathrm{Span}(F))=k+r$. It remains to prove that $\C \bigoplus \mathrm{Span}(F)$ is a
  minimal code.

  Let $\left ( v_1 \cdot F(P_i) +f_1(P_i) \right )_{i=0}^{2^m-1}$ and $\left ( v_2 \cdot F(P_i) +f_2(P_i) \right )_{i=0}^{2^m-1}$ be any two nonzero codewords
  of $\C \bigoplus \mathrm{Span}(F)$, where $f_1, f_2 \in \mathbb B_m$ are Boolean functions of algebraic degree at most $\ell$,
  and $v_1, v_2 \in \gf(2)^r$.
  We will complete the proof of the theorem if we prove the following:
  \begin{eqnarray}\label{eq:c*c-f*f}
  \left ( v_1 \cdot F(x) +f_1(x) \right ) \cdot \left ( v_2 \cdot F(x) +f_2(x) \right ) \not \equiv 0.
  \end{eqnarray}
To this end, consider the following four cases.

Case 1:  $\left ( v_1 \cdot F +f_1\right ) = \left ( v_2 \cdot F +f_2 \right ).$
If $f_1 \equiv 0$, then $v_1 \neq \mathbf{0}$. Applying Lemma \ref{lem:vec-f-set},
we see that $\supp (  v_1 \cdot F )$ is not the  empty set, which gives (\ref{eq:c*c-f*f}).
Let $f_1 $ be a nonzero function. By the assumption of the theorem,  $f_1$ does not vanish on
$\left \{ x \in \gf(2^m):  v_1 \cdot F(x)=0 \right \}$, and (\ref{eq:c*c-f*f}) is proved.

Case 2: $f_1=f_2 \equiv 0$. In this case, none of  $v_1$ and $ v_2$ is the zero vector.
Lemma \ref{lem:vec-f-set} now leads to
 $\supp (v_1 \cdot F) \cap \supp ( v_2 \cdot F) \neq \emptyset$.
Then it follows that the product of $v_1 \cdot F(x)$ and $v_2 \cdot F(x)$ is not the zero function, which establishes (\ref{eq:c*c-f*f}).

Case 3: $f_1 \equiv 0$ and  $f_2 \not  \equiv 0$, or , $f_1 \not \equiv 0$ and  $f_2  \equiv 0$.
 By symmetry, we can assume $f_1 \equiv 0$ and  $f_2 \not  \equiv 0$. Let $D$ be the subset of $\gf(2^m)$
 given by
 \begin{eqnarray*}
 D=\left \{
 \begin{array}{rl}
 \{x\in \gf(2^m): v_1 \cdot F=1, v_2 \cdot F=1 \}, &\text{ if } v_1 =v_2,\\
 & \\
 \{x\in \gf(2^m): v_1 \cdot F=1, v_2 \cdot F=0 \}, &\text{ if } v_1 \neq v_2.
 \end{array}
 \right .
 \end{eqnarray*}
Note that $f_1\not \equiv 1$ since $\mathrm{dim}(\C) >1$.
Therefore none of $f_1$ and $(1+f_1)$ vanishes on $D$ from Lemma \ref{lem:vec-f-set}. Thus
(\ref{eq:c*c-f*f}) holds.

Case 4: $f_1 \not \equiv 0$ and $f_2 \not \equiv 0$. Denote $D=\{x\in \gf(2^m): v_1 \cdot F=v_2 \cdot F=0\}$.
It is obvious that $f_1f_2 \not \equiv 0$ because $\C$ is a minimal code.
Combining Lemma \ref{lem:vec-f-set} with $\mathrm{deg}(f_1f_2)\le 2\ell$ yields $f_1f_2 \not \in \mathrm{Ann} (D)$, which gives (\ref{eq:c*c-f*f}).

Summarising the discussions in the four cases completes the proof of (\ref{eq:c*c-f*f}) and the theorem. 
\end{proof}

In order to apply Theorem \ref{thm:min-rm-Vecfun} to obtain minimal codes, we need to construct vectorial Boolean functions 
with hight algebraic immunity. The following result is in this direction. 
 
 \begin{theorem}\label{thm:vecFUN-AI}
Let $n_0$, $n_1$, $\cdots$, $n_{2^r}$ be integers satisfying $0=n_0 < n_1 < \cdots <n_{2^r}=2^m-1$.
Let $y_0, y_1, \cdots, y_{2^r-1}$ be an enumeration of the points of $\gf(2)^r$. Let $F$ be the function
defined by
\begin{eqnarray*}
F(x)=\left \{
\begin{array}{rl}
y_i, & \text{ if }x\in [n_i;n_{i+1}-n_{i}]_{\alpha},\\
\\
y_0, & \text{ if } x=0.\\
\end{array}
\right .
\end{eqnarray*}
Then $F$ is a vectorial Boolean $(m,r)$-function with algebraic immunity $t$,
where $t$  is the biggest  integer $t$ such that $\sum_{j=1}^{t-1} \binom{m}{j} \le n_{1}-n_0$ and
 $\sum_{j=0}^{t-1} \binom{m}{j} \le n_{i+1}-n_i$ for $1\le i \le 2^r-1$.
\end{theorem}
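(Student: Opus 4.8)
The plan is to reduce the statement to the two earlier computations of the algebraic immunity of the ``arc'' sets $[h;\delta]_{\alpha}$, namely Lemma~\ref{lem:ai-det-0} and Lemma~\ref{lem:[h;delta]+0}, together with the defining formula $\ai(F)=\min\{\ai(F^{-1}(y)):y\in\gf(2)^r\}$.

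First I would pin down the preimages of $F$. Because $n_0=0$ and $n_{2^r}=2^m-1$, the sets $[n_i;\,n_{i+1}-n_i]_{\alpha}=\{\alpha^{n_i},\alpha^{n_i+1},\dots,\alpha^{n_{i+1}-1}\}$, for $0\le i\le 2^r-1$, are pairwise disjoint and their union is $\gf(2^m)^*$. Since $y_0,\dots,y_{2^r-1}$ is an enumeration of $\gf(2)^r$ (so the $y_i$ are pairwise distinct), this shows that $F^{-1}(y_i)=[n_i;\,n_{i+1}-n_i]_{\alpha}$ for every $1\le i\le 2^r-1$, while the extra convention $F(0)=y_0$ forces $F^{-1}(y_0)=[n_0;\,n_1-n_0]_{\alpha}\cup\{0\}$. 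These $2^r$ sets are exactly the nonempty preimages of $F$.

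Next I would evaluate each $\ai\bigl(F^{-1}(y)\bigr)$. Restated, Lemma~\ref{lem:[h;delta]+0} says that $\ai\bigl([h;\delta]_{\alpha}\cup\{0\}\bigr)$ is the largest integer $s$ with $\sum_{j=1}^{s-1}\binom{m}{j}\le\delta$, and Lemma~\ref{lem:ai-det-0} says that $\ai\bigl([h;\delta]_{\alpha}\bigr)$ is the largest integer $s$ with $\sum_{j=0}^{s-1}\binom{m}{j}\le\delta$; in each case the bracketing hypothesis $\sum_{j}^{s-1}\binom{m}{j}\le\delta<\sum_{j}^{s}\binom{m}{j}$ singles out precisely that $s$. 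Applying these with $\delta=n_1-n_0$ and with $\delta=n_{i+1}-n_i$ respectively yields $\ai\bigl(F^{-1}(y_0)\bigr)$ and $\ai\bigl(F^{-1}(y_i)\bigr)$ for $1\le i\le 2^r-1$. Finally $\ai(F)$ is the minimum of these $2^r$ integers; since every partial sum $\sum_{j}^{s-1}\binom{m}{j}$ is nondecreasing in $s$, each of the defining inequalities is downward closed in $s$, so the minimum over $i$ of ``the largest $s$ satisfying the $i$-th inequality'' coincides with ``the largest $s$ satisfying all of them simultaneously'', which is exactly the integer $t$ in the statement. Incidentally $t\ge1$, because at $s=1$ the inequalities read $0\le n_1-n_0$ and $1\le n_{i+1}-n_i$, both true since the $n_i$ strictly increase, in agreement with $\ai(F)=t\ge1$.

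The argument has no genuinely hard step. The only points demanding care are the bookkeeping in the first reduction — recognising that $F^{-1}(y_0)$ is an arc \emph{together with the origin}, so that Lemma~\ref{lem:[h;delta]+0} rather than Lemma~\ref{lem:ai-det-0} governs it — and the justification in the last step that the minimum of the individual thresholds equals the common threshold, which is where the monotonicity of the binomial partial sums is essential.
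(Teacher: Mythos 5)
Your proposal is correct and follows exactly the route the paper intends: the paper's own proof is the one-line remark that the result "follows directly from Lemmas \ref{lem:ai-det-0} and \ref{lem:[h;delta]+0}," and your write-up simply supplies the bookkeeping (identifying $F^{-1}(y_0)$ as an arc together with the origin, the other preimages as plain arcs, and taking the minimum via monotonicity of the binomial partial sums). No discrepancy to report.
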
 

\begin{proof}
The desired conclusion follows directly from Lemmas \ref{lem:ai-det-0} and \ref{lem:[h;delta]+0}.
\end{proof}

 Note that the theorem is still true if the vector space $\gf(2)^r$
 is replaced  by the finite field $\gf(2^r)$.
 Using Theorem \ref{thm:vecFUN-AI}, we obtain the following explicit construction  of vectorial functions 
 with high algebraic immunity.
 
\begin{corollary}\label{cor:vecFUN-AI}
Let $n_0$, $n_1$, $\cdots$, $n_{2^r}$ be integers satisfying $0=n_0 < n_1 < \cdots <n_{2^r}=2^m-1$.
Let $F$ be the vectorial Boolean function defined by $F(0)=0$ and $F(\alpha^j)=(y_0, \cdots, y_{r-1})$, where
\begin{eqnarray*}
n_{ \sum_{i=0}^{r-1} y_{i} 2^{i}} \le j < n_{1+ \sum_{i=0}^{r-1} y_{i} 2^{i}} .
\end{eqnarray*}
Then $F$ is a vectorial Boolean $(m,r)$-function with algebraic immunity $t$,
where $t$  is the biggest  integer $t$ such that $\sum_{j=1}^{t-1} \binom{m}{j} \le n_{1}-n_0$ and
 $\sum_{j=0}^{t-1} \binom{m}{j} \le n_{i+1}-n_i$ for $1\le i \le 2^r-1$.
\end{corollary}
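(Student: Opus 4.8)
The plan is to obtain this corollary as the special case of Theorem~\ref{thm:vecFUN-AI} in which the abstract enumeration $y_0, y_1, \ldots, y_{2^r-1}$ of the points of $\gf(2)^r$ is chosen to be the one given by binary digits. First I would define, for each integer $k$ with $0 \le k \le 2^r-1$, the point $z_k = (z_{k,0}, \ldots, z_{k,r-1}) \in \gf(2)^r$ by the requirement $k = \sum_{i=0}^{r-1} z_{k,i}\, 2^i$. Since $k \mapsto z_k$ is a bijection from $\{0,1,\ldots,2^r-1\}$ onto $\gf(2)^r$, the sequence $z_0, z_1, \ldots, z_{2^r-1}$ is a legitimate enumeration of $\gf(2)^r$ in the sense required by Theorem~\ref{thm:vecFUN-AI}, and moreover $z_0 = (0,\ldots,0)$ is the zero vector of $\gf(2)^r$.

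Next I would check that the function $F$ described in the statement of the corollary is exactly the function attached to this enumeration by Theorem~\ref{thm:vecFUN-AI}. Recall that $[n_i; n_{i+1}-n_i]_{\alpha} = \{\alpha^{n_i}, \alpha^{n_i+1}, \ldots, \alpha^{n_{i+1}-1}\}$, so the theorem sets $F(\alpha^j) = z_i$ whenever $n_i \le j < n_{i+1}$, and $F(0) = z_0$. Writing $i = \sum_{\ell=0}^{r-1} y_{\ell}\, 2^{\ell}$ under the identification above, the condition $n_i \le j < n_{i+1}$ becomes $n_{\sum_\ell y_\ell 2^\ell} \le j < n_{1+\sum_\ell y_\ell 2^\ell}$ and $F(\alpha^j) = (y_0, \ldots, y_{r-1})$, which is precisely the recipe in the corollary; and the two prescriptions agree at $x = 0$ because $F(0) = z_0 = (0,\ldots,0) = 0$ under either reading. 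The hypotheses $0 = n_0 < n_1 < \cdots < n_{2^r} = 2^m-1$ guarantee that the intervals $[n_i; n_{i+1}-n_i]_{\alpha}$ for $0 \le i \le 2^r-1$ partition $\gf(2^m)^*$, so $F$ is well defined on all of $\gf(2^m)$.

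With this identification in place, Theorem~\ref{thm:vecFUN-AI} applies verbatim and gives $\ai(F) = t$, where $t$ is the largest integer such that $\sum_{j=1}^{t-1} \binom{m}{j} \le n_1 - n_0$ and $\sum_{j=0}^{t-1} \binom{m}{j} \le n_{i+1} - n_i$ for $1 \le i \le 2^r-1$, which is exactly the asserted conclusion. There is no genuine mathematical obstacle here beyond a change of notation; the only point demanding a little care is the bookkeeping of the indexing conventions, namely confirming that $[n_i; n_{i+1}-n_i]_{\alpha}$ has endpoints $\alpha^{n_i}$ and $\alpha^{n_{i+1}-1}$, that the binary labeling of $\gf(2)^r$ via $\sum_{i=0}^{r-1} y_i 2^i$ is consistent throughout, and that $z_0$ is the zero vector so that $F(0) = 0$ matches both descriptions.
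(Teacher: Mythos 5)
Your proposal is correct and matches the paper's intent exactly: the paper states Corollary~\ref{cor:vecFUN-AI} as an immediate specialization of Theorem~\ref{thm:vecFUN-AI} obtained by choosing the enumeration of $\gf(2)^r$ given by binary expansion of the index, which is precisely the identification you carry out. Your bookkeeping of the interval $[n_i; n_{i+1}-n_i]_{\alpha}$ and the check that the zero index corresponds to the zero vector (so $F(0)=0$ is consistent) are the only points requiring care, and you handle both.
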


\begin{corollary}\label{cor:simplex+vecfun}
Let $F$ be the vectorial Boolean $(m,r)$-function of Theorem \ref{thm:vecFUN-AI}
with $m^2+m+2\le 2^{m-r+1}$ and $n_1-n_0+1=n_2-n_1=\cdots =n_{2^r}-n_{2^r-1}=2^{m-r}$.
Let $\C(F)$ be the binary code given by
\begin{eqnarray*}
\C(F)=
 \left \{ \left ( v\cdot F(P_i) + \tr^m_1 (b P_i) \right )_{i=1}^{2^m-1}:
  v \in \gf(2)^r, b \in \gf(2^m) \right \}.
\end{eqnarray*}
Then $\C(F)$ is a minimal code of dimension $m+r$.
\end{corollary}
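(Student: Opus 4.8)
The plan is to invoke Theorem~\ref{thm:min-rm-Vecfun} with $\ell=1$, taking for the subcode $\C$ of $\RM(1,m)$ the (extended) Simplex code and taking the given vectorial function $F$. Two things must then be checked: first, that $\ai(F)\ge 2\ell+1=3$; second, that $\C(F)$ is exactly the code $\C\bigoplus\mathrm{Span}(F)$ produced by that theorem, after deleting one coordinate that is identically zero. For the first point, recall that the blocks of $F$ satisfy $n_1-n_0=2^{m-r}-1$ and $n_{i+1}-n_i=2^{m-r}$ for $1\le i\le 2^r-1$, so Theorem~\ref{thm:vecFUN-AI} gives $\ai(F)=t$, the largest integer with $\sum_{j=1}^{t-1}\binom{m}{j}\le 2^{m-r}-1$ and $\sum_{j=0}^{t-1}\binom{m}{j}\le 2^{m-r}$. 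For $t=3$ both inequalities collapse to the single condition $\binom{m}{0}+\binom{m}{1}+\binom{m}{2}=\tfrac{m^2+m+2}{2}\le 2^{m-r}$, i.e.\ to the hypothesis $m^2+m+2\le 2^{m-r+1}$; hence $t\ge 3$ and $\ai(F)\ge 3$. (In particular the hypothesis forces $m\ge 5$, so $m>1$ below.)

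Next, set $\C_0=\{(\tr^m_1(bP_0),\dots,\tr^m_1(bP_{2^m-1})):b\in\gf(2^m)\}$. Since $P_0=0$, every codeword of $\C_0$ is zero in coordinate $P_0$; moreover $\C_0$ is a subcode of $\RM(1,m)$ of dimension $m>1$ all of whose nonzero codewords have weight $2^{m-1}$, so $\C_0$ is minimal by Lemma~\ref{lem:AB}. Because $\ai(F)\ge 3$, the remark preceding Theorem~\ref{thm:min-rm-Vecfun} gives $\mathrm{Span}(F)\cap\RM(2,m)=\{\mathbf 0\}$, hence $\C_0\cap\mathrm{Span}(F)=\{\mathbf 0\}$ and the sum $\C_0\bigoplus\mathrm{Span}(F)$ is genuinely direct. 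Theorem~\ref{thm:min-rm-Vecfun}, applied with $\ell=1$, $k=m$ and $\ai(F)\ge 2\ell+1$, then yields that $\C_0\bigoplus\mathrm{Span}(F)$ is a minimal code of dimension $m+r$.

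Finally, writing $F(0)=\mathbf 0$ (the convention used in Corollary~\ref{cor:vecFUN-AI}), every codeword $(v\cdot F(P_i)+\tr^m_1(bP_i))_{i=0}^{2^m-1}$ of $\C_0\bigoplus\mathrm{Span}(F)$ has value $v\cdot\mathbf 0+\tr^m_1(b\cdot 0)=0$ in coordinate $P_0$. Deleting this always-zero coordinate is a linear isomorphism onto its image that does not alter the support of any codeword (the index $P_0$ lies in no support), and therefore preserves both dimension and minimality; and the resulting length-$(2^m-1)$ code is precisely $\C(F)=\{(v\cdot F(P_i)+\tr^m_1(bP_i))_{i=1}^{2^m-1}\}$. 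Thus $\C(F)$ is a minimal code of dimension $m+r$, as asserted.

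The only substantive step is the algebraic-immunity computation, which is exactly where the arithmetic hypothesis $m^2+m+2\le 2^{m-r+1}$ enters (through Theorem~\ref{thm:vecFUN-AI}, hence Lemmas~\ref{lem:ai-det-0} and \ref{lem:[h;delta]+0}); everything else is bookkeeping to reconcile the fact that the Simplex/$F$ construction naturally lives on $\gf(2^m)^*$ whereas Theorem~\ref{thm:min-rm-Vecfun} is stated for full-length Reed--Muller subcodes, which is handled by adjoining and then re-puncturing the zero coordinate at $0$. If one prefers not to assume $F(0)=\mathbf 0$, the same conclusion follows by re-running the four-case argument in the proof of Theorem~\ref{thm:min-rm-Vecfun} verbatim but with all functions restricted to $\gf(2^m)^*$, replacing each assertion ``$D\neq\emptyset$'' by ``$D\not\subseteq\{0\}$'' and using the monotonicity of $\ai(\cdot)$ together with Lemma~\ref{lem:vec-f-set}; no additional difficulty arises.
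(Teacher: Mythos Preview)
Your proof is correct and follows the same approach as the paper, which simply states ``Combining Theorem~\ref{thm:vecFUN-AI} with Theorem~\ref{thm:min-rm-Vecfun} proves the desired conclusion.'' You have spelled out the details the paper leaves implicit: the verification that the hypothesis $m^2+m+2\le 2^{m-r+1}$ is exactly the condition $\sum_{j=0}^{2}\binom{m}{j}\le 2^{m-r}$ needed for $\ai(F)\ge 3$, the minimality of the Simplex subcode $\C_0$, and the passage between the length-$2^m$ code of Theorem~\ref{thm:min-rm-Vecfun} and the length-$(2^m-1)$ code $\C(F)$ via the zero coordinate at $P_0$. Your remark on the case $F(0)\ne\mathbf 0$ is a point the paper does not address at all.
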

\begin{proof}
Combining Theorem \ref{thm:vecFUN-AI}  with Theorem \ref{thm:min-rm-Vecfun} proves the desired conclusion.
\end{proof}

\begin{corollary}
Let $\delta $ be an integer with $\sum_{i=0}^{t-1} \binom{m}{i}  \le \delta  < \sum_{i=0}^{t} \binom{m}{i}$ and $3\le t \le m-3$.
Let $f$ be the  Boolean function with $\supp (f)= [0;\delta]_{\alpha}$.
Let $\C(f)$ be the binary code defined by 
\begin{eqnarray*}
\C(f)=
 \left \{ \left ( f(P_i) + \tr^m_1 (b P_i) \right )_{i=1}^{2^m-1}:
  b \in \gf(2^m) \right \}.
\end{eqnarray*}
Then $\C(f)$ is a minimal code of parameters $[2^m-1,m+1,\ge d]$ with $d$ being the smaller  of $\delta$ and
\begin{eqnarray*}
 \max  \left \{\sum_{i=0}^{t-2} \binom{m-1}{i} +\sum_{i=0}^{t'-2} \binom{m-1}{i},  2^{m-1}- 1 - \frac{\ln 2}{\pi} (m+1) \sqrt{2^m} \right \},
\end{eqnarray*}
where   $t'=m-t$ when $\delta \neq \sum_{i=0}^{t-1} \binom{m}{i}$ and $t'=m-t+1$ when
$\delta = \sum_{i=0}^{t-1} \binom{m}{i}$.
\end{corollary}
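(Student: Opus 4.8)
The plan is to deduce minimality and the dimension from Theorem~\ref{thm:min-rm-Vecfun} after passing to an unpunctured code, and then to estimate the minimum distance by a case analysis combining Lemma~\ref{lem:wt-bound} with the Gauss-sum bound established in the proof of Theorem~\ref{thm:C-[h;delta]}. First I would introduce the length-$2^m$ code
\begin{eqnarray*}
\C^{+}=\left\{\left(f(P_i)+\tr^m_1(bP_i)\right)_{i=0}^{2^m-1}:b\in\gf(2^m)\right\}.
\end{eqnarray*}
Since $\supp(f)=[0;\delta]_\alpha\subseteq\gf(2^m)^{*}$, every codeword of $\C^{+}$ vanishes in the coordinate $P_0=0$, so puncturing $\C^{+}$ at $P_0$ yields $\C(f)$ via a support-preserving linear bijection of codewords; hence $\C(f)$ is minimal if and only if $\C^{+}$ is, with the same dimension. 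Now $\C^{+}=\C_0\bigoplus\mathrm{Span}(f)$, where $\C_0=\left\{\left(\tr^m_1(bP_i)\right)_{i=0}^{2^m-1}:b\in\gf(2^m)\right\}$ is an $m$-dimensional subcode of $\RM(1,m)$ all of whose nonzero codewords have weight $2^{m-1}$; being constant-weight, $\C_0$ is minimal, and $\dim\C_0=m>1$. The sum is direct because $\ai(\supp f)=t\ge 3$ forces $\deg f\ge 3$ (the nonzero function $1+f$ annihilates $\supp f$), so $f\notin\RM(1,m)\supseteq\C_0$; thus $\dim\C^{+}=m+1$.

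Next I would determine the algebraic immunity of $F:=f$, regarded as an $(m,1)$-function. Lemma~\ref{lem:ai-det-0} gives $\ai\!\left(F^{-1}(1)\right)=\ai\!\left([0;\delta]_\alpha\right)=t$; since $F^{-1}(0)=\{0\}\cup[\delta;\,2^m-1-\delta]_\alpha$, Lemma~\ref{lem:[h;delta]+0} gives $\ai\!\left(F^{-1}(0)\right)=t'$, the two values $t'=m-t$ and $t'=m-t+1$ corresponding exactly to whether $\delta>\sum_{i=0}^{t-1}\binom{m}{i}$ or $\delta=\sum_{i=0}^{t-1}\binom{m}{i}$. As $3\le t\le m-3$ forces $t\ge 3$ and $t'\ge m-t\ge 3$, we get $\ai(F)=\min(t,t')\ge 3=2\cdot 1+1$. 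Applying Theorem~\ref{thm:min-rm-Vecfun} with $\ell=1$, $\C=\C_0$ and this $F$ (its proof being valid for $r=1$ as well, since all it needs is that $\supp(v\cdot F)$ and its complement have algebraic immunity at least $\ai(F)$ for nonzero $v$) shows that $\C^{+}=\C_0\bigoplus\mathrm{Span}(f)$ is minimal. Hence $\C(f)$ is a minimal $[2^m-1,\,m+1]$ code.

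For the distance I would bound $\wt(\bc)$ for the nonzero codewords $\bc=\left(\epsilon f(x)+\tr^m_1(bx)\right)_{x\in\gf(2^m)^{*}}$, $\epsilon\in\gf(2)$. If $\epsilon=0$, $b\neq0$ then $\wt(\bc)=2^{m-1}$; if $\epsilon=1$, $b=0$ then $\wt(\bc)=|[0;\delta]_\alpha|=\delta$. If $\epsilon=1$, $b\neq0$, put $w=\left|\{x\in[0;\delta]_\alpha:\tr^m_1(bx)=1\}\right|$; a direct count yields
\begin{eqnarray*}
\wt(\bc)=(\delta-w)+(2^{m-1}-w)=2^{m-1}+\sum_{i=0}^{\delta-1}(-1)^{\tr^m_1(b\alpha^i)}.
\end{eqnarray*}
From the middle form, $\delta-w=\wt\!\left((1+\tr^m_1(bx))\,f_{[0;\delta]_\alpha}\right)$ and $2^{m-1}-w=\wt\!\left(\tr^m_1(bx)\,f_{[\delta;\,2^m-1-\delta]_\alpha\cup\{0\}}\right)$, so Lemma~\ref{lem:wt-bound} with $\tau=1$ and the algebraic immunities $t$ and $t'$ gives $\wt(\bc)\ge\sum_{i=0}^{t-2}\binom{m-1}{i}+\sum_{i=0}^{t'-2}\binom{m-1}{i}=:A$. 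From the last form and the Gauss-sum bound in the proof of Theorem~\ref{thm:C-[h;delta]} (with $\lambda'=b$, $q=2^m$), $\left|\sum_{i=0}^{\delta-1}(-1)^{\tr^m_1(b\alpha^i)}\right|\le\frac{\delta}{q-1}+\frac{\sqrt q}{\pi}\ln\!\frac{4(q-1)}{\pi}<1+\frac{\ln 2}{\pi}(m+1)\sqrt{2^m}$, where we used $\delta<q-1$ (from $t\le m-3$) and $\frac{4(q-1)}{\pi}<2q$; hence $\wt(\bc)>2^{m-1}-1-\frac{\ln 2}{\pi}(m+1)\sqrt{2^m}=:B$. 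So a codeword of this last type has weight at least $\max(A,B)$. Finally $A\le 2^{m-1}$ (the tails $\sum_{i=0}^{t-2}\binom{m-1}{i}$ and $\sum_{i=0}^{t'-2}\binom{m-1}{i}=\sum_{i=m-t'+1}^{m-1}\binom{m-1}{i}$ lie in disjoint parts of $\sum_{i=0}^{m-1}\binom{m-1}{i}$, since $t+t'\le m+1$) and $B<2^{m-1}$, so the minimum weight of $\C(f)$ is at least $\min\{2^{m-1},\delta,\max(A,B)\}=\min\{\delta,\max(A,B)\}=d$.

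I expect the genuine work to lie in the third case: converting the estimate $\frac{\delta}{q-1}+\frac{\sqrt q}{\pi}\ln\frac{4(q-1)}{\pi}$ inherited from Theorem~\ref{thm:C-[h;delta]} into the stated closed form $1+\frac{\ln 2}{\pi}(m+1)\sqrt{2^m}$, and keeping the two sub-cases of $t'$ (hence of Lemma~\ref{lem:[h;delta]+0}) straight throughout. A small point worth flagging is that Theorem~\ref{thm:min-rm-Vecfun} is being used with $r=1$, which is permitted by its hypotheses and for which its proof stays intact.
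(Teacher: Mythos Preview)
Your proof is correct and follows essentially the same route as the paper: minimality via Theorem~\ref{thm:min-rm-Vecfun} after computing $\ai(\supp f)=t$ and $\ai(\supp(1+f))=t'$ from Lemmas~\ref{lem:ai-det-0} and~\ref{lem:[h;delta]+0}, and the distance estimate by combining Lemma~\ref{lem:wt-bound} with the Gauss-sum bound~(\ref{eq:F-indpendentOFdelta}). Your version is in fact more careful than the paper's in three places---the explicit passage to the length-$2^m$ code $\C^{+}$ so that Theorem~\ref{thm:min-rm-Vecfun} literally applies, the verification that its proof survives the case $r=1$ (where Lemma~\ref{lem:vec-f-set} as stated does not directly apply), and the check that $\max(A,B)\le 2^{m-1}$ so that the $b\neq 0,\epsilon=0$ codewords do not affect the minimum---all of which the paper leaves implicit.
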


\begin{proof}
An easy computation shows that
$$\sum_{i=0}^{m-t-1} \binom{m}{i} <  2^m-\delta \le \sum_{i=0}^{m-t}   \binom{m}{i}. $$
Note that $\supp (f+1) =2^m-\delta$.
From Lemmas \ref{lem:ai-det-0} and \ref{lem:[h;delta]+0} we conclude that
$\ai (\supp(f))=t$ and $\ai (\supp(f))=t'$. Theorem \ref{thm:min-rm-Vecfun} now implies that
$\C(f)$ is a minimal code of dimension $m+1$. We are left with the task of determining the lower bound of $d$.
It is easy to see that for any $b\in \gf(2^m)^*$ the Hamming distance $\mathrm{dist}(f,\tr^m_1(bx))$ of $f$ and $\tr^m_1(bx)$
can be written as
\begin{eqnarray*}
\begin{array}{c}
\mathrm{dist}(f,\tr^m_1(bx))=\wt \left (f (1+\tr^m_1(bx)) \right )+ \wt \left ((1+f) \tr^m_1(bx) \right ).
\end{array}
\end{eqnarray*}
By Lemma \ref{lem:wt-bound}, we deduce that
\begin{eqnarray}\label{eq:t-t'}
\mathrm{dist}(f,\tr^m_1(bx)) \ge \sum_{i=0}^{t-2} \binom{m-1}{i} +\sum_{i=0}^{t'-2} \binom{m-1}{i}.
\end{eqnarray}

An easy computation yields that
\begin{eqnarray}\label{eq:dist-walsh-zhang}
\begin{array}{rl}
\mathrm{dist}(f,\tr^m_1(bx)) = & 2^{m-1}+ \sum_{x\in \mathrm{Supp} (f)} (-1)^{\tr^m_1 (bx)}\\
=& 2^{m-1} + \sum_{i=0}^{\delta -1} (-1)^{\tr^m_1 (b \alpha^i)}\\
\ge & 2^{m-1}- 1 - \frac{\sqrt{2^m}}{\pi}  \ln \left ( \frac{4(2^m-1)}{\pi} \right )\\
\ge & 2^{m-1}- 1 - \frac{\ln 2}{\pi} (m+1) \sqrt{2^m},
\end{array}
\end{eqnarray}
where the first inequality follows from (\ref{eq:F-indpendentOFdelta}).
Combining (\ref{eq:t-t'}) and (\ref{eq:dist-walsh-zhang})
yields the desired conclusion.
\end{proof}

\begin{corollary}
Let $m$ be an odd integer.
Let $F$ be the vectorial Boolean $(m,r)$-function of Theorem \ref{thm:vecFUN-AI}
with $\sum_{i=0}^{4} \binom{m}{j}\le 2^{m-r}$ and $n_1-n_0+1=n_2-n_1=\cdots =n_{2^r}-n_{2^r-1}=2^{m-r}$.
Let $\C(F)$ be the binary code given by
\begin{eqnarray*}
\C(F)=\left \{
\begin{array}{r}
 \left ( v\cdot F(P_i) + \sum_{j=1}^{(m-1)/2}\tr^m_1 \left  (b_j P_i^{2^j+1} \right ) \right )_{i=1}^{2^m-1}: \\
 \\
  v \in \gf(2)^r, b_j \in \gf(2^m)
\end{array}
 \right \}.
\end{eqnarray*}
Then $\C(F)$ is a minimal code of dimension $\frac{m(m+1)}{2}+r$.
\end{corollary}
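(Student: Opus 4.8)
The plan is to present $\C(F)$ as the puncturing at the coordinate $P_0=0$ of a direct sum $\C \bigoplus \mathrm{Span}(F)$ sitting inside $\RM(2,m)$, and then to invoke Theorem \ref{thm:min-rm-Vecfun} with $\ell=2$. First I would isolate the quadratic building block
$$\C=\left\{\left(\sum_{j=1}^{(m-1)/2}\tr^m_1\left(b_j\,x^{2^j+1}\right)\right)_{x\in\gf(2^m)}:\ b_j\in\gf(2^m)\right\}.$$
After deleting the coordinate $P_0=0$, at which every such quadratic function vanishes, $\C$ is precisely the cyclic code of Theorem \ref{thm:quadratic functions space-odd}; since $m$ is \emph{odd}, that theorem guarantees that $\C$ is a minimal subcode of $\srm(2,m)\subseteq\RM(2,m)$ with $\dim\C>1$. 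This is exactly where the oddness of $m$ enters, inherited directly from Theorem \ref{thm:quadratic functions space-odd}. Moreover $F(0)=\mathbf 0$, so every codeword of $\mathrm{Span}(F)$ vanishes at $P_0$ as well; consequently the length-$(2^m-1)$ code $\C(F)$ of the statement is obtained from the length-$2^m$ code $\C\bigoplus\mathrm{Span}(F)$ merely by puncturing the identically-zero coordinate $P_0$, an operation that alters neither the dimension nor the minimality.

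Next I would verify that $F$ has algebraic immunity large enough to feed into Theorem \ref{thm:min-rm-Vecfun}. Here $F$ is the $(m,r)$-function produced by Theorem \ref{thm:vecFUN-AI} with all blocks of equal length, $n_1-n_0+1=n_2-n_1=\cdots=2^{m-r}$. With $\ell=2$ the hypothesis of Theorem \ref{thm:min-rm-Vecfun} demands $\ai(F)\ge 2\ell+1=5$. By Theorem \ref{thm:vecFUN-AI}, $\ai(F)$ equals the largest integer $t$ for which $\sum_{j=1}^{t-1}\binom{m}{j}\le n_1-n_0=2^{m-r}-1$ and $\sum_{j=0}^{t-1}\binom{m}{j}\le 2^{m-r}$ hold simultaneously. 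The single standing assumption $\sum_{i=0}^{4}\binom{m}{i}\le 2^{m-r}$ forces both of these for $t=5$, whence $\ai(F)\ge 5$. I expect this translation to be the main obstacle: one must unpack the two separate block conditions of Theorem \ref{thm:vecFUN-AI} from the one inequality in the hypothesis and confirm that they deliver $\ai(F)\ge 2\ell+1$ rather than merely $\ai(F)\ge 2\ell$, since only the former is enough to keep products of quadratic codewords out of the annihilator.

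With both hypotheses in hand, Theorem \ref{thm:min-rm-Vecfun} applies to the minimal subcode $\C\subseteq\RM(2,m)$ (with $\dim\C>1$) and to $F$ (with $\ai(F)\ge 5$), so that $\C\bigoplus\mathrm{Span}(F)$ is a minimal code. Transporting this conclusion back through the harmless puncturing at $P_0$ identifies the result with $\C(F)$ and preserves minimality. It then remains only to pin down the dimension: because the sum defining $\C(F)$ is direct, the evaluation map $(v,b_1,\ldots,b_{(m-1)/2})\mapsto$ codeword is injective, and counting its free $\gf(2)$-parameters — the $r$ bits carried by $v\in\gf(2)^r$ together with the quadratic coefficients $b_1,\ldots,b_{(m-1)/2}$ — yields that $\C(F)$ is a minimal code of dimension $\frac{m(m+1)}{2}+r$, as asserted.
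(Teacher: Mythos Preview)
Your approach is exactly the paper's one-line proof: combine Theorem \ref{thm:quadratic functions space-odd} (the minimal quadratic subcode for odd $m$), Theorem \ref{thm:vecFUN-AI} (to extract $\ai(F)\ge 5$ from the block-size hypothesis), and Theorem \ref{thm:min-rm-Vecfun}; your handling of the puncture at $P_0$ and of the inequality $\sum_{i=0}^{4}\binom{m}{i}\le 2^{m-r}$ is correct. One caveat on your closing sentence: the parameter count you describe actually yields $r+m\cdot\frac{m-1}{2}=\frac{m(m-1)}{2}+r$, since the quadratic code of Theorem \ref{thm:quadratic functions space-odd} has dimension $\frac{m(m-1)}{2}$; the figure $\frac{m(m+1)}{2}+r$ in the stated corollary appears to be a misprint in the paper that you have carried over without checking.
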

\begin{proof}
Combining Theorems \ref{thm:quadratic functions space-odd} and \ref{thm:vecFUN-AI}  with Theorem \ref{thm:min-rm-Vecfun} proves the desired conclusion.
\end{proof}

\begin{corollary}
Let $m$ be a positive integer.
Let $F$ be the vectorial Boolean $(m,r)$-function of Theorem \ref{thm:vecFUN-AI}
with $\sum_{i=0}^{4} \binom{m}{j}\le 2^{m-r}$ and $n_1-n_0+1=n_2-n_1=\cdots =n_{2^r}-n_{2^r-1}=2^{m-r}$.
Let $\C(F)$ be the binary code given by
\begin{eqnarray*}
\C(F)=\left \{
\begin{array}{r}
 \left ( v\cdot F(P_i) + \sum_{j=2}^{\lfloor m/2 \rfloor}\tr^m_1 \left  (b_j P_i^{2^j+1} \right ) +\tr^m_1(c P_i) \right )_{i=1}^{2^m-1}: \\
 \\
  v \in \gf(2)^r, b_j, c \in \gf(2^m)
\end{array}
 \right \}.
\end{eqnarray*}
Then $\C(F)$ is a minimal code of dimension $\frac{m(m+1)}{2}+r$.
\end{corollary}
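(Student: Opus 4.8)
The plan is to deduce the corollary from the general construction of Theorem~\ref{thm:min-rm-Vecfun} with $\ell=2$. That theorem requires two inputs: a minimal subcode $\C$ of $\RM(2,m)$ with $\dim\C>1$, and a vectorial Boolean $(m,r)$-function $F$ with $\ai(F)\ge 2\ell+1=5$. The code $\C(F)$ of the statement will then be identified with (the restriction to $\gf(2^m)^*$ of) $\C\bigoplus\mathrm{Span}(F)$ for these choices.

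For the first input I would take $\C$ to be the cyclic subcode of $\srm(2,m)$ supplied by Theorem~\ref{thm:quad+linear}, i.e.\ the code whose codewords are the evaluations over $\gf(2^m)^*$ of the degree-$\le 2$ functions $\sum_{j=2}^{\lfloor m/2\rfloor}\tr^m_1(b_jx^{2^j+1})+\tr^m_1(cx)$. These functions vanish at $0$, so $\C$ is a genuine subcode of $\RM(2,m)$; Theorem~\ref{thm:quad+linear} already gives that it is minimal, and $\dim\C>1$ holds since the hypothesis $\sum_{i=0}^{4}\binom{m}{i}\le 2^{m-r}$ forces $m$ to be large. For the second input I would take $F$ to be the function of Theorem~\ref{thm:vecFUN-AI} with the prescribed block lengths $n_1-n_0+1=n_2-n_1=\cdots=n_{2^r}-n_{2^r-1}=2^{m-r}$, and read off $\ai(F)$ from that theorem: it is the largest $t$ with $\sum_{j=1}^{t-1}\binom{m}{j}\le n_1-n_0=2^{m-r}-1$ and $\sum_{j=0}^{t-1}\binom{m}{j}\le 2^{m-r}$ for each later block. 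The hypothesis $\sum_{i=0}^{4}\binom{m}{i}\le 2^{m-r}$ makes both conditions hold at $t-1=4$ (for the first one because $\sum_{j=1}^{4}\binom{m}{j}=\sum_{i=0}^{4}\binom{m}{i}-1\le 2^{m-r}-1$), hence $\ai(F)\ge 5=2\ell+1$.

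Now Theorem~\ref{thm:min-rm-Vecfun} applies: with $\dim\C>1$ and $\ai(F)\ge 2\ell+1$ it gives that $\C\bigoplus\mathrm{Span}(F)$ is a minimal code, and its dimension is $\dim\C+\dim\mathrm{Span}(F)=\dim\C+r$. The sum is direct because the remark preceding Theorem~\ref{thm:min-rm-Vecfun} (via Lemma~\ref{lem:vec-f-set} and $\ai(F)\ge 5$) yields $\mathrm{Span}(F)\cap\RM(4,m)=\{\mathbf 0\}$, while $\C\subseteq\RM(2,m)\subseteq\RM(4,m)$. It then remains to match this with $\C(F)$: a generic codeword of $\C(F)$ has the form $\bigl(v\cdot F(P_i)+g(P_i)\bigr)_{i=1}^{2^m-1}$ with $v\in\gf(2)^r$ and $g$ one of the functions above, which is exactly the shape of a codeword of $\C\bigoplus\mathrm{Span}(F)$.

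There is no deep obstacle: the argument is essentially bookkeeping layered on Theorems~\ref{thm:quad+linear}, \ref{thm:vecFUN-AI} and~\ref{thm:min-rm-Vecfun}. The two points that deserve a little care are (i) confirming that the chosen block lengths really force $\ai(F)\ge 5$ through the counting inequalities of Theorem~\ref{thm:vecFUN-AI}, and (ii) the transition from length $2^m$ (the setting of Theorem~\ref{thm:min-rm-Vecfun}) to length $2^m-1$ (the setting of the statement); here one uses that the proof of Theorem~\ref{thm:min-rm-Vecfun} in fact exhibits, at each of its four cases, a point of $\gf(2^m)^*$ at which the relevant product of codeword functions is nonzero (all the functions $g$ involved vanish at $0$, and the sets $D$ appearing there are large), so that the restriction of $\C\bigoplus\mathrm{Span}(F)$ to the $2^m-1$ nonzero coordinates is still minimal.
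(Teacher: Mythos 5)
Your proposal is correct and follows exactly the route the paper takes: its entire proof is the one-liner ``combining Theorems \ref{thm:quad+linear} and \ref{thm:vecFUN-AI} with Theorem \ref{thm:min-rm-Vecfun} yields the desired conclusion,'' and your choices of $\C$, $F$, and $\ell=2$ are precisely the intended ones. The details you supply (checking that $\sum_{i=0}^{4}\binom{m}{i}\le 2^{m-r}$ forces $\ai(F)\ge 5$ through the counting inequalities, and handling the passage from length $2^m$ to length $2^m-1$) are exactly the bookkeeping the paper leaves implicit.
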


\begin{proof}
Combining Theorems \ref{thm:quad+linear} and \ref{thm:vecFUN-AI}  with Theorem \ref{thm:min-rm-Vecfun} yields the desired conclusion.
\end{proof}

\begin{example}
 Let $q=2^7$ and $\alpha$ be a primitive element with minimal polynomial
 $ \alpha^7 + \alpha + 1=0$. Let $f_1$ be the Boolean function with $\supp(f_1)= [63;64]_{\alpha}$
 and let $f_2$ be the function with $\supp(f_2)=[31;32]_{\alpha} \cup [95;32]_{\alpha} $.
 Let $F$ denote the vectorial Boolean function $(f_1, f_2)$.
 Then the algebraic immunity of $F$ equals $3$ and
 the binary linear code $\C (F)$ in Corollary \ref{cor:simplex+vecfun} is a minimal code with parameters $[127,9,52]$.
 \end{example}

\section{Summary and concluding remarks} \label{sec:conc}

In this paper, a link between minimal linear codes and subsets of finite fields without nonzero
low degree annihilators was established. This link allowed us to construct binary minimal codes
with special sets, Boolean functions or vectorial Boolean with high algebraic immunity.
A general construction of minimal binary linear codes from sets without nonzero low degree annihilators  was proposed.
Employing this general construction, minimal codes of dimension $m$ and length less than or equal to $m(m-1)/2$ were  obtained,
 and a lower bounder on the minimum distance of the proposed minimal codes was established.
A explicit construction of minimal codes using certain subcodes of Reed-Muller codes
and vectorial Boolean functions with algebraic immunity was also developed.
These results show that there are natural connections  among
binary minimal codes, sets without nonzero low degree annihilators and
Boolean functions with high algebraic immunity.

The results of this paper were presented in terms of univariate representations of functions and codes. 
The corresponding multivariate analogies can be easily worked out.
It would be interesting to generalize the results of this paper to the nonbinary cases.
It would be good if the open problems and conjectures proposed  in this paper could be settled.
The reader is cordially invited to join this adventure.

\section*{Acknowledgements}
C. Ding’s research was supported by the Hong Kong Research Grants Council, Proj. No. 16300919.
S. Mesnager was supported by the ANR CHIST-ERA project SECODE.
C. Tang was supported by National Natural Science Foundation of China (Grant No.
11871058) and China West Normal University (14E013, CXTD2014-4 and the Meritocracy Research
Funds).

\renewcommand{\refname}{References}

\end{document}